\documentclass[12pt]{llncs}
\usepackage{lineno}
  \pagestyle{plain}
  \pagenumbering{arabic}
\usepackage{amsmath,amssymb,amsmath}
\pdfoutput=1
\usepackage{ifpdf}
\usepackage{graphicx}
\usepackage{color}
\usepackage{fullpage}
\usepackage{textcomp}
\usepackage{setspace}
\usepackage{algorithmic}
\usepackage[ruled]{algorithm}
\usepackage{ulem}
\newtheorem{observation}{Observation}

\setlength{\parskip}{1.3mm} \setlength{\parindent}{0pt}
\setlength{\parindent}{0pt}

\DeclareMathAlphabet{\mathpzc}{OT1}{pzc}{m}{it}

\frenchspacing

\title{Localized Geometric Query Problems}
\author{John Augustine\inst{1} \and Sandip Das \inst{2} \and 
Anil Maheshwari \inst{3}\thanks{Research supported by NSERC.} \and Subhas C. 
Nandy 
\inst{2} \and 
Sasanka Roy\inst{4} \and  \\ Swami Sarvattomananda \inst{5}}

\institute{Department of Computer Science and Engineering, Indian Institute of 
Technology 
Madras, Chennai, India
\and  Advanced Computing and Microelectronics Unit, 
Indian Statistical Institute, Kolkata, India
\and  School of Computer Science, Carleton University, 
Ottawa, Canada
\and Chennai Mathematical Institute, Chennai,India
\and  School of Mathematical Sciences, Ramakrishna 
Mission Vivekananda University, Belur, India}

\begin{document}
\maketitle

\begin{abstract}
A new class of geometric query problems are studied in this paper. 
We are required to preprocess a set of geometric objects $P$ in 
the plane, so that for any arbitrary query point $q$, the largest 
circle that contains $q$ but does not contain any member of $P$, 
can be reported efficiently. The geometric sets that we consider 
are point sets and boundaries of simple polygons. 
\end{abstract}

{\bf Keywords:} Largest empty disk, query answering, medial axis, computational 
geometry

\vspace{-0.2in}
\section{Introduction}
Largest empty space recognition is a classical problem in computational 
geometry, and has applications in several disciplines like database 
management, operations research, wireless sensor network, VLSI, to name 
a few. Here the problem is 
to identify an empty space of a desired shape and of maximum size in a  
region containing a set of obstacles. Given a set $P$ of $n$ 
points in $\mathbb{R}^2$, an {\it empty circle}, is a circle that does 
not contain any member of $P$. An empty circle is said to be a {\it 
maximal empty circle} ({\tt MEC}) if it is not fully contained in any 
other empty circle. Among the {\tt MEC}s, the one having the maximum radius 
is the {\it largest empty circle}. The largest empty circle among a point 
set $P$ can easily be located by using the Voronoi diagram of $P$ in 
$O(n\log n)$ time \cite{T}.

Although a lot of study has been made on the empty space recognition 
problem, surprisingly, the query version of the problem  has not received 
much attention. The problem of finding the 
largest empty circle centered on a given query line segment has been 
considered in \cite{APS10}. The preprocessing time, space and query time 
complexities of the algorithm in \cite{APS10} are $O(n^3\log n)$, $O(n^3)$ and 
$O(\log n)$, respectively. In practical applications, one may need to 
locate the largest empty circle in a desired location. For 
example, in the VLSI physical design, one may need to place a large circuit 
component in the vicinity of some already placed components. Such problems 
arise in mining large data sets as well, where one of the objectives is to  
search for empty spaces in  data sets\ \cite{LKH1997}. In\ \cite{EGLM2003}, 
Edmonds {\it et al.} formalized the problem of finding  large empty spaces in 
geometric data sets. In particular, they studied the problem of finding large empty 
rectangles in data sets.

An important problem in this context is the {\it circular 
separability problem}. Two planar sets $P_1$ and $P_2$ are circularly 
separable if there is a circle that encloses $P_1$ but excludes $P_2$. 
O'Rourke {\it et al.} \cite{OKM86} showed that the decision version of the circularly 
separability of two sets can be solved in $O(n)$ time using linear 
programming. Furthermore, they show that a smallest separating circle can be found 
in $O(n)$ time while the computation of the largest separating circle needs 
$O(n \log n)$ time. Detailed study on circular separability problem can be 
found in \cite{BCDUY00,BCOY01,Fisk86,OKM86}. Boissonnat {\it et al.} \cite{BCOY01}
proposed a linear-time algorithm for solving the decision version of
circular separability problem where the sets $P_1$ and $P_2$ are simple
polygons, and the algorithm outputs the smallest separating circle. They also 
consider the query version of this problem where the objective is to
preprocess a convex polygon $P$ such that given a query point $q$ and a query
line $\ell$,  report the largest circle inside $P$ that contains $q$ and
does not intersect $\ell$. The preprocessing time and space
complexities of their proposed algorithm are both $O(n \log n)$, and the query
can be answered in $O(\log n)$ time. They also showed that a convex polygon
$P$ can be preprocessed in $O(n)$ time and space such that for a query set $S$
of $k$ points, the largest circle inside $P$ that encloses $S$ can be computed
in $O(k \log n)$ time.

In addition to empty circles, empty rectangles have also been studied. We introduced the query version of the maximal empty rectangle in \cite{ADMNRS10}. The problem entails preprocessing a set of $n$ points such that, given a query point $q$, the largest empty rectangle containing $q$ can be reported efficiently. We gave a solution with  query time  $O(\log n)$ with preprocessing time and space being $O(n^2 \log n)$ and $O(n^2)$, respectively.
Recently, Kaplan {\it et al.} \cite{KMNS12} improved the preprocessing time 
and space complexities  to $O(n\alpha(n)\log^4 n)$ and 
$O(n\alpha(n)\log^3 n)$, respectively, while the query time has increased to 
$O(\log ^4 n)$. Here $\alpha(n)$ is the inverse Ackermann function.

\subsection{Our Results}
In this paper, we  study the query versions of the maximum empty circle  problem ({\tt QMEC}).  
The following variations are considered. 
\begin{itemize}
\item Given a simple polygon $P$, preprocess it 
such that given a query point $q$, the largest circle inside $P$ 
that contains the query point $q$ can be identified efficiently.
\item Given a set of points $P$, preprocess it such 
that given a query point $q$, the largest circle that does not contain 
any member of $P$, but contains the query point $q$ can be identified 
efficiently.
\end{itemize}
Our results are summarized in Table 
\ref{TAB1}.
\begin{table} \caption{Complexity results for different variations 
of largest empty space query problem}
\begin{center}
\begin{tabular}{|l|c|c|c|} \hline 
Geometric set  & Preprocessing & Space & Query
time \\ 
&  time  &  & \\ \hline 
Simple Polygon &  $O(n \log^2 n)$ & $O(n \log n)$ & 
$O(\log n)$ \\ \hline 
Point Set &  $O(n^{3/2} \log^2 n)$ & $O(n^{3/2} \log n)$ & $O(\log 
n\log\log n)$ 
\\ \hline
Point Set &  $O(n^{5/3} \log n)$ & $O(n^{5/3})$ & $O(\log 
n)$ 
\\ \hline
\end{tabular}
\end{center}
\label{TAB1}
\vspace{-0.15in}
\end{table}

We believe that our work will motivate the study of new types of geometric 
query problems and may lead to a very active research area. The main 
theme of our work is to achieve  subquadratic preprocessing time and space, while ensuring 
polylogarithmic query times. 
The results in this paper, improve upon the results in our previous 
work  \cite{ADMNRS10}. Very recently, Kaplan and Sharir\ \cite{KS12} provided a solution to the {\tt QMEC} problem for point sets that only requires $O(n \log^2 n)$ time and $O(n \log n)$ space for preprocessing. Their query times, however, are $O(\log^2 n)$.

\subsection{Organization of the paper}
In Section \ref{sec:Convex}, as a preliminary requisite, we describe a way to answer {\tt QMEC} query for the case of convex polygons. The same bounds have been achieved by  Boissonnat {\it et al.} 
\cite{BCOY01}, but
our solution is slightly different and serves as the basis for solving the {\tt QMEC} problem on simple 
polygons. In Section \ref{sec:simple-polygon-case}, 
we present the {\tt QMEC} problem for simple polygons with $n$ vertices. The preprocessing 
time and space complexities are $O(n\log^2n)$ and $O(n\log n)$ respectively, 
and the query answering time is $O(\log n)$. In Section \ref{sec:QMEC}, we 
consider the same problem on  a set $P$ of $n$ points in $\mathbb{R}^2$. We present 
two algorithms (cf. Table\ \ref{TAB1}). Our first algorithm uses the 
concept of planar separators \cite{LT79} on the underlying planar graph corresponding to 
the  Voronoi  diagram of  $P$. It solves the {\tt QMEC} 
problem  on $P$ with $O(n^{3/2} \log^2 n)$ preprocessing time 
and $O(n^{3/2} \log n)$ space. Here, the queries can be answered in $O(\log n 
\log\log n)$ time. Our second algorithm (cf. Section\ \ref{sec:fastquery}) uses the $r$-partitioning \cite{fed} of planar graphs. With a suitable 
choice of
$r$, the query time is only $O(\log n)$, an improvement over our first algorithm. However, the preprocessing time 
and 
space increase
to $O(n^{5/3}\log n)$  and $O(n^{5/3})$, respectively. 

\section{Preliminaries: {\tt QMEC} problem for a convex polygon}
 \label{sec:Convex}

Let $P$ be a convex polygon and $\{p_1, p_2, \ldots, p_n\}$ be its
vertices in counter-clockwise order. Our objective in this section is to preprocess
$P$ such that given an arbitrary query point $q \in P$, the largest
circle $C_q$ containing $q$ inside the polygon $P$ can be reported 
efficiently. Note that, the locus of the centers of all the maximally empty circles ({\tt MEC}s) 
inside $P$ is defined to be the {\em medial axis} $M$ of $P$. Let $c$ be the center of 
the largest {\tt MEC} inside $P$ (see Figure \ref{fig:figure1}(a)).\footnote{There can be 
infinitely many {\tt MEC}s of largest radius, in which case we pick 
$c$ to be the center of one such {\tt MEC}.} The medial 
axis of a convex polygon consists of straight line
segments and can be viewed as a tree rooted at $c$ \cite{CSW99}. To
avoid  confusion with the vertices of the polygon, we call the
vertices of $M$ as nodes. Note that, the leaf-nodes of $M$ are the
vertices of $P$. Let us denote an {\tt MEC}  of $P$ centered at a point $x
\in M$ as ${\tt MEC}_x$ and let $A_x$ be the area of ${\tt MEC}_x$. 

In \cite{BCOY01}, a planar map of circular arcs is constructed by drawing the {\tt 
MEC} at 
each node of $M$ in $O(n)$ time and space.
The problem of finding $C_q$ reduces to the point location problem in
the associated planar map.  These point location queries can be answered in  $O(\log 
n)$ 
time. We propose an 
alternative solution (with the same complexity results as in\ \cite{BCOY01})  because our new technique plays a basic role in solving 
the problem when $P$ is a simple 
polygon (cf.  Section\ \ref{subsec:MiM}). We use the fact that the medial 
axis 
$M$
is a tree, and then use the level-ancestor queries \cite{BF04} on $M$. \begin{figure}
[thb]
\centering  
\includegraphics[width=\textwidth]{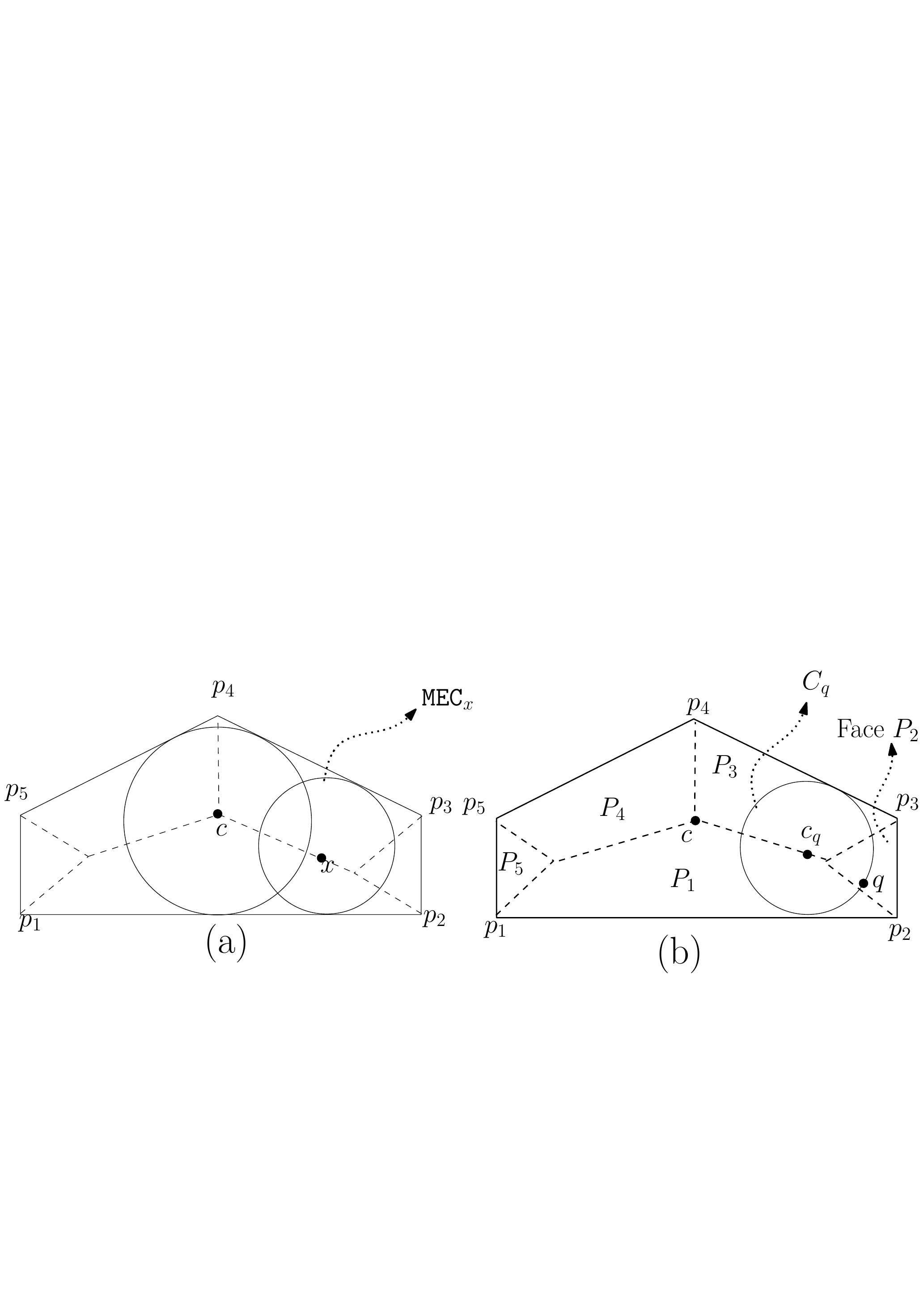}
  \caption{(a) Illustration of Lemma \ref{obs1}, and  (b)
partition of $P$.}
  \label{fig:figure1}
\end{figure}

\begin{lemma}\cite{BCOY01} \label{obs1}
As the point $x$ moves from the center $c$ of the largest MEC  along the medial axis 
towards any
vertex $p_i \in P$ (leaf node of $M$), $A_x$ decreases
monotonically.
\end{lemma}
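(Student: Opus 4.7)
My plan is to prove the stronger statement that the radius $r(x)$ of ${\tt MEC}_x$ is non-increasing along the unique medial-axis path $\gamma$ from $c$ to the leaf $p_i$; the conclusion on $A_x = \pi r(x)^2$ then follows. The main device is a nested family of eroded polygons: for each $\alpha \in [0, r(c)]$, let $P_\alpha := P \ominus B(0, \alpha) = \{x \in P : r(x) \geq \alpha\}$ denote the Minkowski erosion of $P$ by a closed disk of radius $\alpha$. Since $P$ is convex, each $P_\alpha$ is also a convex polygon, with $P_0 = P$ and $P_{r(c)}$ a single point or segment.

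The key step is to verify that $M \cap P_\alpha$ coincides with the medial axis of $P_\alpha$. For any $x \in P_\alpha$, one has $d(x, \partial P_\alpha) = d(x, \partial P) - \alpha$, and by the standard identity $(P \ominus B(0,\alpha)) \oplus B(0,\alpha) = P$ valid for convex $P$, the nearest points on $\partial P_\alpha$ from $x$ correspond bijectively with the nearest points on $\partial P$ from $x$ via outward $\alpha$-offset along the inward normals. Consequently, $x$ admits $\geq 2$ nearest points on $\partial P_\alpha$ iff it admits $\geq 2$ nearest points on $\partial P$, so $M \cap P_\alpha$ is exactly the medial axis of the convex polygon $P_\alpha$---a tree, and in particular connected.

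The lemma then follows from the tree structure of $M$. The set $\gamma \cap P_\alpha = \{x \in \gamma : r(x) \geq \alpha\}$ must be connected as a subset of $\gamma$: otherwise there would exist $x_1, x_2 \in \gamma \cap P_\alpha$ such that the sub-arc of $\gamma$ between them leaves $P_\alpha$, yet $M \cap P_\alpha$ is a connected subgraph of the tree $M$ containing $x_1, x_2$ and therefore must contain the unique tree path between them, namely that very sub-arc---a contradiction. Since $c \in P_\alpha$ for every $\alpha \leq r(c)$, this connected set is an initial sub-arc of $\gamma$ emanating from $c$, and letting $\alpha$ vary yields that $r$ is monotonically non-increasing along $\gamma$. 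The principal obstacle is the nearest-point bijection between $\partial P$ and $\partial P_\alpha$: care is required near convex vertices of $P$ where the inward normal is multivalued, but convexity of $P$ ensures that $\partial P_\alpha$ consists of parallel insets of $\partial P$'s edges, keeping the correspondence well defined.
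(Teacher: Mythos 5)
The paper offers no proof of this lemma at all---it is quoted directly from Boissonnat et al.\ \cite{BCOY01}---so your argument has to be judged on its own terms, and on those terms it is correct and gives a genuinely self-contained route. By identifying $P_\alpha=\{x\in P: r(x)\ge\alpha\}$ with the inner parallel body of $P$ and $M\cap P_\alpha$ with the medial axis of the convex polygon $P_\alpha$, you obtain that every superlevel set of the radius function on $M$ is a connected subtree containing $c$, and your reduction from ``superlevel sets of $r$ along $\gamma$ are initial subintervals'' to monotonicity is sound. One justification does need repair: the opening identity $(P\ominus B(0,\alpha))\oplus B(0,\alpha)=P$ is \emph{false} for polygons (eroding a square by $\alpha$ and dilating back produces a square with radius-$\alpha$ rounded corners), so it cannot be the source of your nearest-point bijection. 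The bijection is nevertheless true and is cleaner to get from the half-plane representation: writing $P=\bigcap_i H_i$ over the edge-supporting half-planes with bounding lines $\ell_i$ gives $P_\alpha=\bigcap_i H_i^{-\alpha}$ and $d(x,\partial P_\alpha)=\min_i d(x,\ell_i)-\alpha=d(x,\partial P)-\alpha$, with the same set of minimizing edges for $P$ and $P_\alpha$; hence a point of $P_\alpha$ has two or more nearest boundary points in $P_\alpha$ exactly when it does in $P$. (Equivalently, $r=\min_i d(\cdot,\ell_i)$ is concave on $P$, which yields convexity of $P_\alpha$ with no Minkowski algebra at all.) Finally, your proof establishes non-increase rather than strict decrease, which is the correct reading of the statement: a pair of parallel edges of $P$ produces a medial-axis segment along which $r$ is constant, a degeneracy the paper itself acknowledges when discussing peaks and valleys.
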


\begin{lemma}\label{llx}
The polygon $P$ can be preprocessed in $O(n)$ time such that given any
arbitrary query point $q$, a point $x$ on $M$ such that ${\tt MEC}_x$ contains $q$ 
can be 
reported in $O(\log n)$ time. 
\end{lemma}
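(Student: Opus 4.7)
The plan is to reduce the query to a point location in the partition of $P$ induced by $M$. Because $P$ is convex, $M$ is a tree and cuts $P$ into $n$ regions $R_1,\ldots,R_n$; region $R_j$ is bounded by polygon edge $e_j$ on one side and by a polygonal chain $\mu_j$ of medial-axis edges on the other, and consists of all points of $P$ whose nearest boundary point lies on $e_j$. A standard fact for convex polygons is that each $R_j$ is itself convex, so $\mu_j$ is monotone in the direction of $e_j$.

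For preprocessing, I would first compute $M$ together with this partition in $O(n)$ time using the linear-time medial-axis algorithm for convex polygons, triangulate each $R_j$ by fanning from one of its vertices, and feed the resulting triangulation into Kirkpatrick's hierarchy to obtain an $O(n)$-preprocessing, $O(\log n)$-query planar point-location structure. In addition, for each $R_j$, I would store the vertices of $\mu_j$ in an array sorted by their projection onto the line through $e_j$; summed over $j$ this extra preprocessing is only $O(n)$, since $\sum_j |\mu_j|=O(n)$.

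To answer a query $q$: locate the region $R_j\ni q$ in $O(\log n)$ time; drop the perpendicular from $q$ to $e_j$ with foot $q^*$; and binary search the stored array of $\mu_j$ to find the medial-axis edge pierced by the extension of this perpendicular from $q^*$ through $q$. The intersection point $x$ is the reported answer. To see that $q\in{\tt MEC}_x$: $x$ lies on the normal to $e_j$ at $q^*$, so $|xq^*|=d(x,e_j)$; and since $x\in\mu_j\subset M$ it is equidistant from at least two boundary components (one of which is $e_j$), giving $r(x)=d(x,\partial P)=|xq^*|$. Because $q\in R_j$ sits on the same normal between $e_j$ and $\mu_j$, $q$ lies on the segment $q^*x$, so $|qx|=r(x)-d(q,\partial P)\le r(x)$, i.e.\ $q\in{\tt MEC}_x$.

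The main obstacle is keeping the preprocessing at $O(n)$ rather than the generic $O(n\log n)$; this rests on the linear-time construction of the medial axis of a convex polygon together with Kirkpatrick's $O(n)$-time, $O(\log n)$-query planar point-location scheme for triangulations, plus the monotonicity of $\mu_j$ with respect to $e_j$ that lets each per-region ray-shooting query be resolved by a single binary search.
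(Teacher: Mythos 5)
Your proposal is correct and follows essentially the same route as the paper: the same decomposition of $P$ by the medial axis into $n$ convex faces, Kirkpatrick point location to find the face containing $q$, and a binary search along that face's medial-axis chain for the point $x$ hit by the perpendicular to the polygon edge through $q$. Your distance-based verification that $q\in{\tt MEC}_x$ is just a slightly more explicit rendering of the paper's observation that $q$ lies on the diameter of the MEC tangent to $e_j$ at the foot of that perpendicular.
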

\begin{proof}
The medial axis $M$ subdivides  $P$ into $n$ convex faces 
such that each face $P_i$ consists of an edge $p_ip_{i+1}$ from $P$
and a convex chain of edges from $M$ connecting $p_i$ to $p_{i+1}$ (see Figure 
\ref{fig:figure1}(b)). In the preprocessing phase, we perform the following steps.
\begin{enumerate}
\item Compute the medial axis $M$ of $P$, which is a tree rooted at $c$.
\item Compute the subdivision  in $O(n)$ time. For this we will need $M$, which 
can be computed in linear time \cite{CSW99}. 
\item Store the chain of edges associated with each face in an array so that it is 
amenable to binary searching.
\item Finally, the subdivision can be 
preprocessed in $O(n)$ time so that the face containing a query point 
$q$ can be located in $O(\log n)$ time \cite{Krik83}.  
\end{enumerate}
 
In the query phase, we perform the following steps.
\begin{enumerate}
\item  We find the face $P_i$ that contains $q$ in $O(\log n)$ time. 

\item Recall that exactly one edge
$p_i p_{i+1}$ of $P$ will be an edge in $P_i$.  Consider the line $\ell$ through $q$ 
that is perpendicular to the edge $p_i p_{i+1}$. It will intersect an edge in $M$ that is 
also an edge bounding $P_i$;  we report that intersection point 
as $x$. Note that $x$ can be computed in $O(\log n)$ time via binary searching over the chain of medial axis edges bounding $P_i$.
\end{enumerate}
We need to  prove that ${\tt MEC}_x$ indeed encloses $q$.  Firstly, note that $\ell$ 
will  intersect the edge $p_i p_{i+1}$ internally at a point $t$ because (i) $P_i$ is 
convex and (ii) the two internal angles in $P_i$ at $p_i$ and $p_{i+1}$ are both acute.  
Secondly, note that any {\tt MEC} that goes through $t$ must be tangential to $p_i 
p_{i+1}$, thereby making it unique and centred on $\ell$; more precisely, the {\tt MEC} 
that goes through $t$ must be centred at $x$. Finally, from the construction, it is clear 
that $q$ lies on the diameter of  ${\tt MEC}_x$, thus proving that that ${\tt MEC}_x$ 
encloses $q$. 
\qed
\end{proof}

Now we will describe how to solve the {\tt QMEC} problem for a convex
polygon. Given a  query point $q$ we find (using Lemma\ \ref{llx}) the point $x$ on 
$M$ such that ${\tt MEC}_x$ encloses $q$. 

Observe (informally for now) that, for any fixed point $q$ inside $P$, the {\tt MEC}s 
that encloses $q$ are centered on a connected subtree $M^q$ of the medial axis $M
$. This observation is formally proved in Lemma\ \ref{l1} in the more general setting of 
simple polygons. Coupling this observation with Lemma\ \ref{obs1}, we can conclude 
that $c_q$ is the point on $M^q$ closest to the root $c$ of $M$. Therefore, we can 
locate $c_q$ by performing a binary
search on the path $x\sim c$. We find two consecutive nodes $v$ 
and its parent $v'$ on the path $x\sim c$ such that ${\tt MEC}_v$ encloses $q$, but 
${\tt MEC}_{v'}$ does not.  Since the path lies on a 
tree representing the medial axis $M$, we can use level-ancestor 
queries \cite{BF04} for this purpose. After computing $v$ and $v'$, 
the exact location of $c_q$ can be computed in $O(1)$ time. Thus, 
we have the following theorem:

\begin{theorem} \label{th-QMEC-convex}
A convex polygon $P$ with  $n$ vertices can be preprocessed in $O(n)$ time
and space such that, given any arbitrary query point $q \in P$, the largest circle 
containing 
$q$ inside $P$ 
can be reported in $O(\log n)$ time.
\end{theorem}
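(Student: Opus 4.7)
The plan is to essentially assemble the ingredients already discussed in the paragraphs preceding the theorem into a clean algorithm, and verify the complexity bounds. In the preprocessing phase, I would (i) build the medial axis $M$ of $P$ in $O(n)$ time via \cite{CSW99} and root it at the center $c$ of the largest {\tt MEC}; (ii) construct the convex subdivision of $P$ into the faces $P_1,\ldots,P_n$ induced by $M$, storing the medial-axis chain bounding each $P_i$ as an array for binary search, and preprocess the subdivision for $O(\log n)$ point location via \cite{Krik83}; and (iii) preprocess the rooted tree $M$ for $O(1)$-time level-ancestor queries using \cite{BF04}. All three preprocessing steps run in $O(n)$ time and use $O(n)$ space, meeting the stated bounds.

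For the query, given $q\in P$, I would first invoke Lemma~\ref{llx} to obtain, in $O(\log n)$ time, a point $x\in M$ such that ${\tt MEC}_x$ encloses $q$. Let $\pi$ denote the path in $M$ from $x$ up to the root $c$. The key structural fact I want to use is that the set $M^q=\{y\in M : {\tt MEC}_y \text{ encloses } q\}$ is a connected subtree of $M$; this is stated informally here and will be proved formally in Lemma~\ref{l1} for simple polygons (and hence for the convex special case). Combined with Lemma~\ref{obs1}, which ensures that $A_y$ increases monotonically as $y$ moves up toward $c$ along any root-to-leaf path, this implies that the desired center $c_q$ is exactly the point of $M^q$ closest to $c$. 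Along $\pi$, therefore, the nodes whose {\tt MEC}s enclose $q$ form a contiguous prefix starting at $x$: there is a unique highest ancestor $v$ of $x$ on $\pi$ with ${\tt MEC}_v\ni q$, whose parent $v'$ on $\pi$ satisfies ${\tt MEC}_{v'}\not\ni q$.

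To locate $v$ and $v'$, I would perform binary search along $\pi$: since $\pi$ has depth at most $O(n)$ in the worst case but we only need to issue $O(\log n)$ probes, each probe picks the mid-level ancestor of $x$ on $\pi$ in $O(1)$ using the level-ancestor structure, then tests in $O(1)$ whether its {\tt MEC} encloses $q$ (each node of $M$ stores the radius and the center of its {\tt MEC}, computed during preprocessing). Once $v$ and $v'$ are identified, $c_q$ lies on the medial-axis edge $vv'$ and can be found in $O(1)$ time by solving a one-dimensional equation for the point on $vv'$ whose {\tt MEC} passes through $q$, after which the reported circle is ${\tt MEC}_{c_q}$. The total query time is $O(\log n)$.

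The only non-routine step in this plan is the subtree-connectedness claim for $M^q$ that underlies the binary search; I would defer its formal proof to Lemma~\ref{l1}, noting that in the convex case it follows quickly from the fact that ${\tt MEC}$ centers move continuously along $M$ and that the squared distance from $q$ to the center minus the squared radius changes monotonically along each medial-axis edge. All the remaining steps are standard applications of the cited data structures, so collecting the bounds yields $O(n)$ preprocessing time and space with $O(\log n)$ query time, as claimed. \qed
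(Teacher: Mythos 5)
Your proposal is correct and follows essentially the same route as the paper: locate a covering ${\tt MEC}_x$ via Lemma~\ref{llx}, use the connectedness of $M^q$ (deferred to Lemma~\ref{l1}) together with Lemma~\ref{obs1} to reduce the problem to a binary search along the path from $x$ to the root $c$ using level-ancestor queries, and finish with an $O(1)$ computation on the edge $vv'$. The only caveat is your parenthetical justification of connectedness in the convex case --- the function $d(q,y)^2 - r(y)^2$ is convex, not monotone, along a medial-axis edge (and per-edge monotonicity alone would not yield connectedness over the whole tree) --- but since you explicitly defer the formal argument to Lemma~\ref{l1}, exactly as the paper does, this does not affect correctness.
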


\vspace{-0.2in}
\section{{\tt QMEC} problem for simple polygons} \label{sec:simple-polygon-case}
Let $P$ be a simple polygon on $n$ vertices. Recall that the
{\it medial axis} $M$ of $P$ is defined to be the locus of  the centers of all
circles inside $P$ that touch the boundary of
$P$ in two or more points (see, e.g.,  \cite{CSW99}). While the medial axis of
a convex polygon consists only of straight line segments, the medial axis of 
a simple polygon may additionally contain parabolic arcs \cite{P1977}.

Our approach for solving the 
{\tt QMEC} problem uses the fact that 
$M$ is a geometric tree. Its {\it leaf nodes} 
correspond to the vertices of $P$, and the {\it internal nodes}
correspond to the points on $M$ such that the {\tt MEC}s  centered at each 
of those points touch three or more distinct points on the boundary 
of $P$. We  denote the set of internal nodes of $M$ as ${\tt N}$. An {\it edge} in 
$M$ is a path between two nodes that does not contain  any other node in its interior. Note that 
 a single edge  consists of one or more line segments or parabolic arcs.

For any point $x\in M$, we denote the maximal empty circle centered at $x$ in $P$ 
by ${\tt MEC}_x$.
A point $x \in M$, that is not a leaf, is said to be a {\it
valley} (resp., {\it peak}) if for a positive $\delta \to 0$, the {\tt MEC}s
centered at points in $M$ within  distance $\delta$ from $x$ are at
least as large as (resp., no larger than) ${\tt MEC}_x$ and at least 
one such {\tt MEC} is strictly larger (resp., smaller) than $
{\tt MEC}_x$. Note that a pair of parallel edges in $P$ may induce a pair of peaks or a pair of 
valleys. In such cases, we only pick one representative peak or valley and discard the 
other. We use $\Phi$ and $\Theta$ to denote the set of valleys and 
peaks, 
respectively.  For any $x \in \Phi$, it is easy to observe that ${\tt MEC}_x$ touches $P
$ in exactly two points diametrically opposed to each other. Otherwise, we can move along a direction to get smaller {\tt MEC}s. 
 As a consequence, a valley can only be in the interior of an edge.
Therefore, $\Phi \cap {\tt 
N} = 
\emptyset$. On the other hand, $\Theta \subseteq {\tt N}$.

We define a {\em mountain} to be a maximal subtree of 
$M$ that does not contain any valley point (except as its leaves). 
We partition $M$ by cutting the tree at all the valley 
points, and generate a set of mountains ${\tt M} = \{M_1,M_2,\ldots,M_{|{\tt M}|}\}$ 
(See Figure \ref{fig:MedialAxisPartitioning}(a)). 
\begin{observation} \label{o} ~\\
(i) Each valley point is the common leaf of exactly two mountains.\\
(ii) Each mountain has exactly one peak.\\
(iii) If a point $x$ moves from a valley point of a 
mountain towards its peak, the size of ${\tt MEC}_x$ increases monotonically.  
\end{observation}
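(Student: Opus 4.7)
My plan for Observation~\ref{o} is to handle the three parts jointly, relying on two facts already established in the excerpt: every valley lies in the relative interior of an edge of $M$ (so has degree two in $M$), and at every internal node $x \in {\tt N}$ the circle ${\tt MEC}_x$ touches $\partial P$ in three or more points, which, by the same balancing argument used to prove $\Phi \cap {\tt N} = \emptyset$, forces the existence of a direction along $M$ from $x$ in which $r(y) := \mathrm{radius}({\tt MEC}_y)$ strictly decreases. I will also use that $r$ is continuous on $M$, vanishes at polygon-vertex leaves, and strictly increases from any valley leaf into the mountain hosting it.

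For part~(i), removing a valley $v \in \Phi$ from $M$ disconnects precisely the single edge whose relative interior contains $v$, yielding two sub-arcs that each inherit $v$ as a new leaf. These sub-arcs must belong to distinct mountains, since otherwise a single mountain would contain $v$ in its interior, violating its definition. Hence $v$ is a common leaf of exactly two mountains.

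For part~(ii), existence of a peak follows from compactness: $r$ attains its maximum on the compact subtree $M_i$ at some point $p$, and $p$ cannot be a leaf because polygon-vertex leaves have $r = 0$ while valley leaves have $r$ strictly increasing into $M_i$; thus $p$ is an interior local maximum, i.e., a peak. For uniqueness, I would argue by contradiction: two distinct peaks $p_1, p_2 \in M_i$ would be joined by a unique path $\pi \subset M_i$ along which $r|_\pi$ is non-increasing at both endpoints and therefore attains its minimum at some interior point $m$. I claim $m$ must be a valley in the interior of $M_i$, contradicting the defining property of a mountain. If $m$ has degree two in $M$, then $\pi$ locally coincides with $M$ at $m$, so $m$ is a local minimum of $r$ in $M$ and hence a valley. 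The main obstacle is the case when $m$ is a degree-$\geq 3$ internal node of $M$: the guaranteed descent direction at $m$ must then leave $\pi$ and enter another branch of $M_i$, and I would trace this descent through $M_i$, using the finiteness of ${\tt N}$, until it must terminate either at a degree-two local minimum (a valley, contradiction) or at a leaf it cannot in fact reach, since approach to a valley leaf forces $r$ to rise and approach to a polygon-vertex leaf would drive $r \to 0$ and hence be detected along the way.

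Part~(iii) follows from the same analysis applied to the unique path in $M_i$ from the prescribed valley leaf to the peak $p$: a failure of monotonicity would produce an interior local minimum of $r$ restricted to this path, and the identical case analysis forces a valley in the interior of $M_i$, again contradicting the construction of $M_i$ as a maximal valley-free subtree.
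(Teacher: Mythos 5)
Your part (i), your reduction in part (iii), and your overall strategy for part (ii) --- join the two peaks by a path in $M_i$, take the interior minimum $m$ of the radius along that path, and argue that $m$ must be a valley --- are the same as the paper's (the paper simply asserts ``one can observe that $x^*$ will be a valley'' without elaboration). Your compactness argument for the \emph{existence} of a peak is a reasonable addition that the paper omits.

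However, there is a genuine gap at exactly the spot you flag as ``the main obstacle,'' and your proposed repair does not work. You assume only that at each internal node there is \emph{some} incident edge along which the radius decreases, and you try to dispose of the bad case ($m$ a node of degree $\ge 3$) by following that descent until it produces an interior valley. But the descent need not produce one: it can legitimately terminate at a \emph{valley leaf} of $M_i$ --- a valley is a local \emph{minimum} of the radius, so the radius falls, not rises, as you approach it, and your claim that ``approach to a valley leaf forces $r$ to rise'' is backwards --- or at a convex vertex of $P$, where the radius simply tends to $0$ with nothing being ``detected.'' Neither outcome contradicts the definition of a mountain, so no contradiction materializes. The fact you actually need is stronger and purely local: at a node $x$ whose ${\tt MEC}_x$ touches the boundary of $P$ at $k\ge 3$ points, each incident edge of $M$ corresponds to a gap between two consecutive touch points, and the radius strictly decreases along every incident edge whose gap subtends an angle less than $\pi$ at $x$; since at most one gap can subtend an angle of $\pi$ or more, at most one incident edge is non-decreasing. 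Hence a node can never be a local minimum of the radius along a path through it (that would require two non-decreasing incident edges at $m$), so the minimum on $p_1 \sim p_2$ must lie in the interior of an edge of $M$ and is therefore a valley interior to $M_i$ --- the desired contradiction. The same local fact is what is needed to close your argument for part (iii).
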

\begin{proof}
Suppose $v$ is a valley point. We have noted earlier that $v$ can only be in 
the interior of an edge. Therefore, $v$ must be a common leaf between exactly two mountains.

For part (ii), assume that there are two peaks $p_1$ and $p_2$ in a mountain. Consider the path in $M$ from $p_1$ to $p_2$ (denoted $p_1 \sim p_2$). Consider the point $x^* = \arg \min_{x \in p_1 \sim p_2} {\tt MEC}_x$.  One can observe that $x^*$ will be a valley, implying that $p_1$ and $p_2$ cannot be in the same mountain.

 Consider a point $x$ that moves from a valley  of a 
mountain towards its peak. If the {\tt MEC}s don't increase monotonically, 
it is easy to see that a valley point will be encountered. This
implies that $x$ has moved into another mountain.\qed
\end{proof}
\begin{figure}[h]
\begin{minipage}[c]{0.5\textwidth}
\begin{center} %
\includegraphics[width=0.95\textwidth]{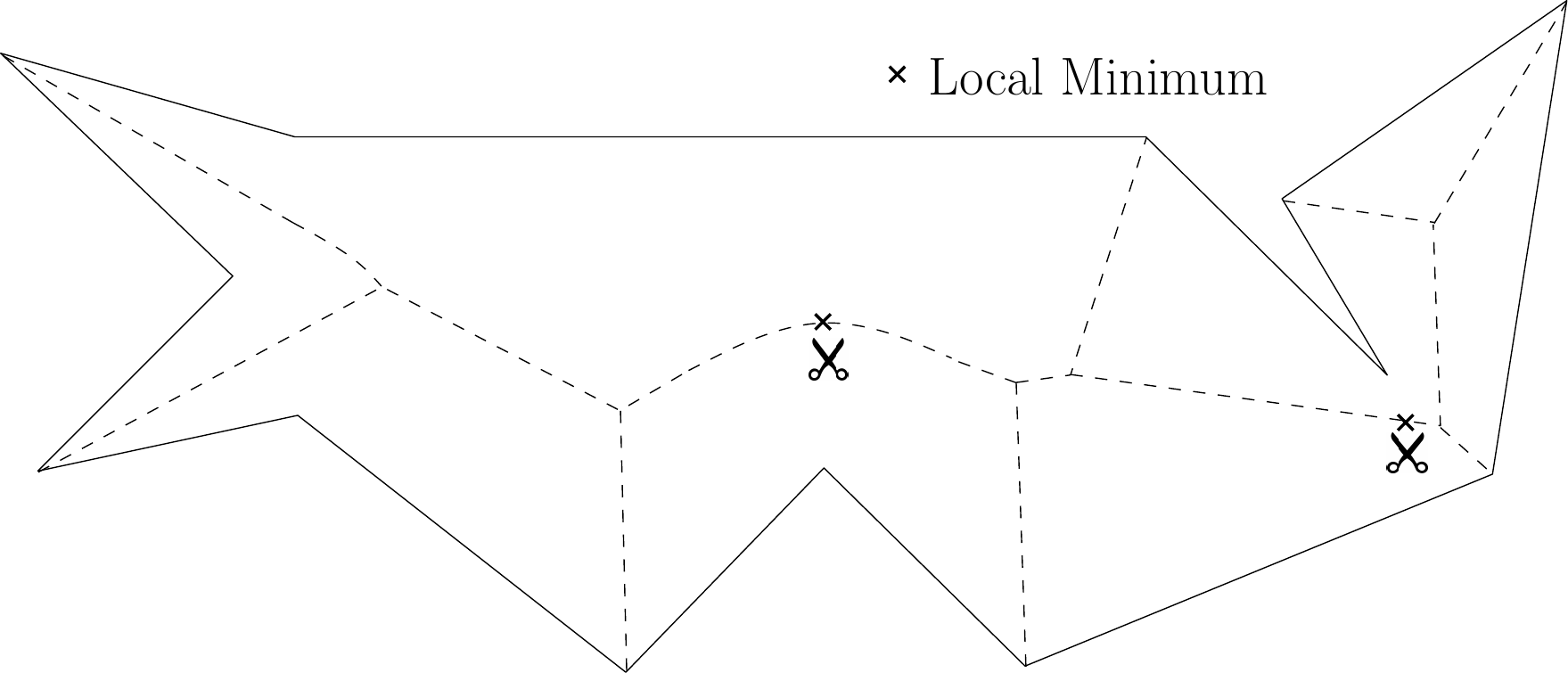}\\
(a)  
\end{center}
\end{minipage}%
\begin{minipage}[c]{0.5\textwidth}
\begin{center} %
\includegraphics[width=0.95\textwidth]{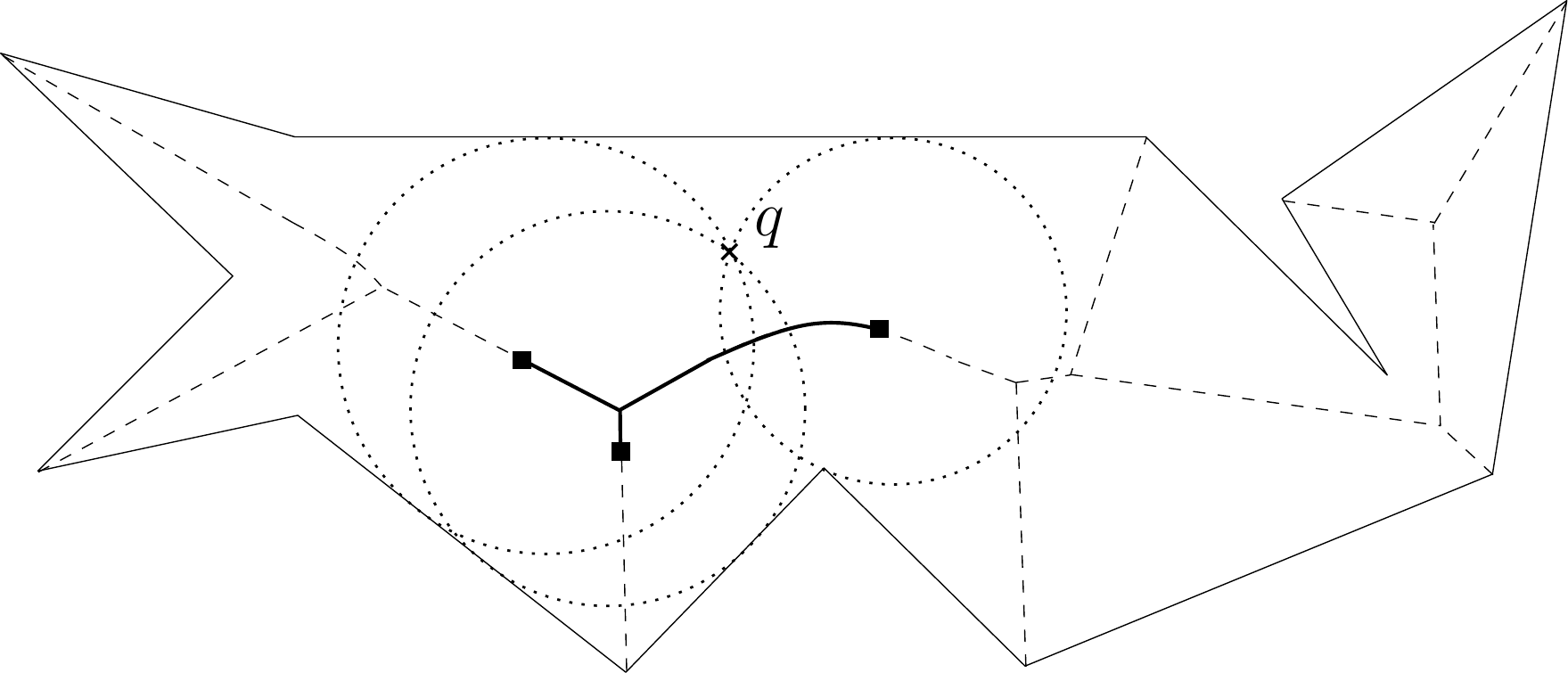}\\
(b)  
\end{center}
\end{minipage}
\caption{(a) Partitioning the medial axis 
$M$ into mountains, and (b) The subtree $M^q$ for a query point $q$}
\label{fig:MedialAxisPartitioning}
\end{figure}

At each valley point $x$ of $M$, consider the chord in $P$ connecting the two points 
at which ${\tt MEC}_x$ touches $P$. These chords induced by each $x \in \Phi$ will 
partition $P$ into  a set of sub-polygons $\{P_1,P_2, \ldots, P_{|{\tt M}|}\}$ of 
cardinality equaling the total number of mountains because this 
partitioning ensures that the portion of $M$  contained in each of these sub-polygons 
is a 
mountain containing a single peak. 

Given a (query) point $q\in P$, let $M^q\subseteq M$ denote the locus of the 
centers of all possible maximal empty circles in $P$ that enclose $q$ (see 
Figure \ref{fig:MedialAxisPartitioning}(b)). The following structural lemma 
plays a crucial role in designing our algorithm. 
\begin{lemma} \label{l1}
$M^q$ is a (connected) subtree of $M$.
\end{lemma}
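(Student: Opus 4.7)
My plan is to prove the statement by showing that for any $x_1, x_2 \in M^q$, the unique path $\pi$ from $x_1$ to $x_2$ in the tree $M$ lies entirely in $M^q$. The driving object is the continuous function $\phi : M \to \mathbb{R}$ given by $\phi(z) = |zq|^2 - r(z)^2$, where $r(z) = d(z, \partial P)$ is the radius of ${\tt MEC}_z$; then $M^q = \{z \in M : \phi(z) \leq 0\}$, which is closed by continuity. As a guiding observation, the enveloping planar set $V_q = \{z \in P : \phi(z) \leq 0\}$ is star-shaped with respect to $q$: since $r$ is $1$-Lipschitz on $P$, for any $z \in V_q$ and any $y$ on the segment $\overline{zq}$ one has $r(y) \geq r(z) - |yz| \geq |zq| - |yz| = |yq|$, so $y \in V_q$. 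Star-shapedness of $V_q$ does not by itself give the lemma because $\pi$ is a path in the tree $M$ rather than a straight-line segment, so the argument must proceed edge-by-edge along $\pi$.

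Each edge of $M$ is either a straight line segment (the equidistant locus of two polygon edges, or of two reflex vertices) or a parabolic arc (the equidistant locus of a reflex vertex $v$ and a polygon edge $L$). On a line-segment edge parameterized by arc length $s$, the center $z(s)$ moves at unit speed and $r(z(s))$ is either affine in $s$ with slope $\alpha$ satisfying $|\alpha| \leq 1$ (edge-edge case, by the Lipschitz property of $r$) or of the form $\sqrt{a^2 + s^2}$ (vertex-vertex case). Consequently $\phi(s) = |z(s) - q|^2 - r(z(s))^2$ is a quadratic in $s$ with leading coefficient $1 - \alpha^2 \geq 0$ in the first case and a linear function of $s$ in the second; in both cases $\{\phi \leq 0\}$ restricted to the edge is a (possibly empty) closed interval. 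At an internal node of $M$ where several edges meet, continuity of $\phi$ ensures that the per-edge intervals glue consistently across the node.

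The main obstacle is the parabolic-arc case. Placing coordinates with $v$ as the focus and $L$ as the directrix, a direct quadratic analysis shows that $\phi$ may be concave in the natural parameter, so in isolation $\{\phi \leq 0\}$ on such an edge could a priori split into two disjoint sub-intervals. To rule this out I plan a geometric argument: for any sub-arc $[a,b]$ of a parabolic edge with $\phi(a), \phi(b) \leq 0$, the lens ${\tt MEC}_a \cap {\tt MEC}_b$ contains $q$, and I intend to show that the polygon-inscribedness constraint on intermediate MECs, together with the bound on the extent of the parabolic arc in terms of the focal distance $d(v,L)$, forces this lens to be contained in ${\tt MEC}_z$ for every intermediate point $z$ on the arc, hence $\phi(z) \leq 0$ throughout $[a,b]$. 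Making this containment rigorous in the generality of arbitrary simple polygons is the main technical hurdle. Once the per-edge claim holds for both edge types, continuity at internal nodes glues the intervals into a connected subtree of $M$, establishing the lemma.
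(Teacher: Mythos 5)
Your proposal has two genuine gaps, one of which you acknowledge and one you do not. The acknowledged gap is the parabolic-arc case: you correctly observe that on such an edge $r(z)=|zv|$, so $\phi(z)=|zq|^2-|zv|^2$ is affine in $z$ and its zero set is the perpendicular bisector of $qv$, which a parabola can cross twice; hence $\{\phi\le 0\}$ restricted to the arc can a priori be two sub-arcs. Your plan to rule this out via a lens-containment argument is exactly the substance of the lemma at its hardest point, and it is left as an intention rather than a proof. The unacknowledged gap is more structural: even if you established that $\{\phi\le 0\}$ restricted to \emph{every} edge is an interval, this does not imply that $M^q$ is connected. A path $\pi$ from $x_1$ to $x_2$ may pass through an intermediate edge on which $\phi>0$ everywhere; each edge-restriction is still an interval (possibly empty), $\phi$ is continuous at the nodes, and yet $M^q\cap\pi$ is disconnected. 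Piecewise convexity of $\phi$ along $\pi$ does not exclude this (a piecewise-linear tent that is negative at both ends and positive in the middle is a counterexample to the gluing step). Connectivity here is a global geometric fact about the polygon, not a consequence of per-edge convexity plus continuity.

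The paper's proof avoids the edge-by-edge analysis entirely. It supposes $t_1,t_2\in M^q$ and takes an interior point $t$ of the path with $q\notin{\tt MEC}_t$; the chord $[a,b]$ joining the two contact points of ${\tt MEC}_t$ splits $P$ into $P_{left}$ and $P_{right}$, and a contact-point argument shows ${\tt MEC}_{t_1}\subset P_{left}\cup{\tt MEC}_t$ and ${\tt MEC}_{t_2}\subset P_{right}\cup{\tt MEC}_t$, whence ${\tt MEC}_{t_1}\cap{\tt MEC}_{t_2}\subset{\tt MEC}_t$, contradicting $q\in{\tt MEC}_{t_1}\cap{\tt MEC}_{t_2}$. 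This single separation argument handles straight and parabolic edges uniformly and supplies precisely the global statement your framework is missing. If you want to salvage your approach, you would need to prove the stronger path statement --- that for any $t$ on $\pi$ with $\phi(t)>0$ the sets $\{\phi\le 0\}$ on the two sides of $t$ cannot both meet $\pi$ --- and at that point you would essentially be reproving the chord-separation lemma; the sublevel-set formalism then adds little.
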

\begin{proof}
For the sake of contradiction, let us assume that $M^q$ is disconnected. Then, there 
are at least two distinct  points $t_1$ and $t_2$ on the medial axis such that 
(a) ${\tt MEC}_{t_1}$ and ${\tt MEC}_{t_2}$ contain $q$, but (b) ${\tt MEC}_t$ 
does not contain $q$ for some $t$  on the path along the medial axis connecting 
$t_1$ and $t_2$.   

Without loss of generality, we assume that such a $t$ is not a node in $M$. 
Therefore, ${\tt MEC}_t$ touches the simple polygon $P$ at exactly two points, 
$a$ and $b$. The chord $[a,b]$ partitions $P$ into two polygons $P_{left}$ and 
$P_{right}$ to the left and right of $[a,b]$ respectively (see Figure~
\ref{fig:fig_lemma3}(a)). 
Note that, $t$ also partitions the medial axis into two subtrees, $M_{left}$ and 
$M_{right}$, 
such that $t_1 \in M_{left}$ and $t_2 \in M_{right}$. For the rest of the proof, 
we use ${\tt MEC}_t$ and $P_{left}$ to denote the region enclosed by  
them. We now claim that 

\begin{equation}\label{eqn:clm}
{\tt MEC}_{t_1} \subset P_{left} \cup {\tt MEC}_t = P \setminus (P_{right} \setminus {\tt 
MEC}_t).
\end{equation}

\begin{figure}[ht]
\begin{minipage}[b]{0.57\linewidth}
\centering
\includegraphics[width=0.7\linewidth]{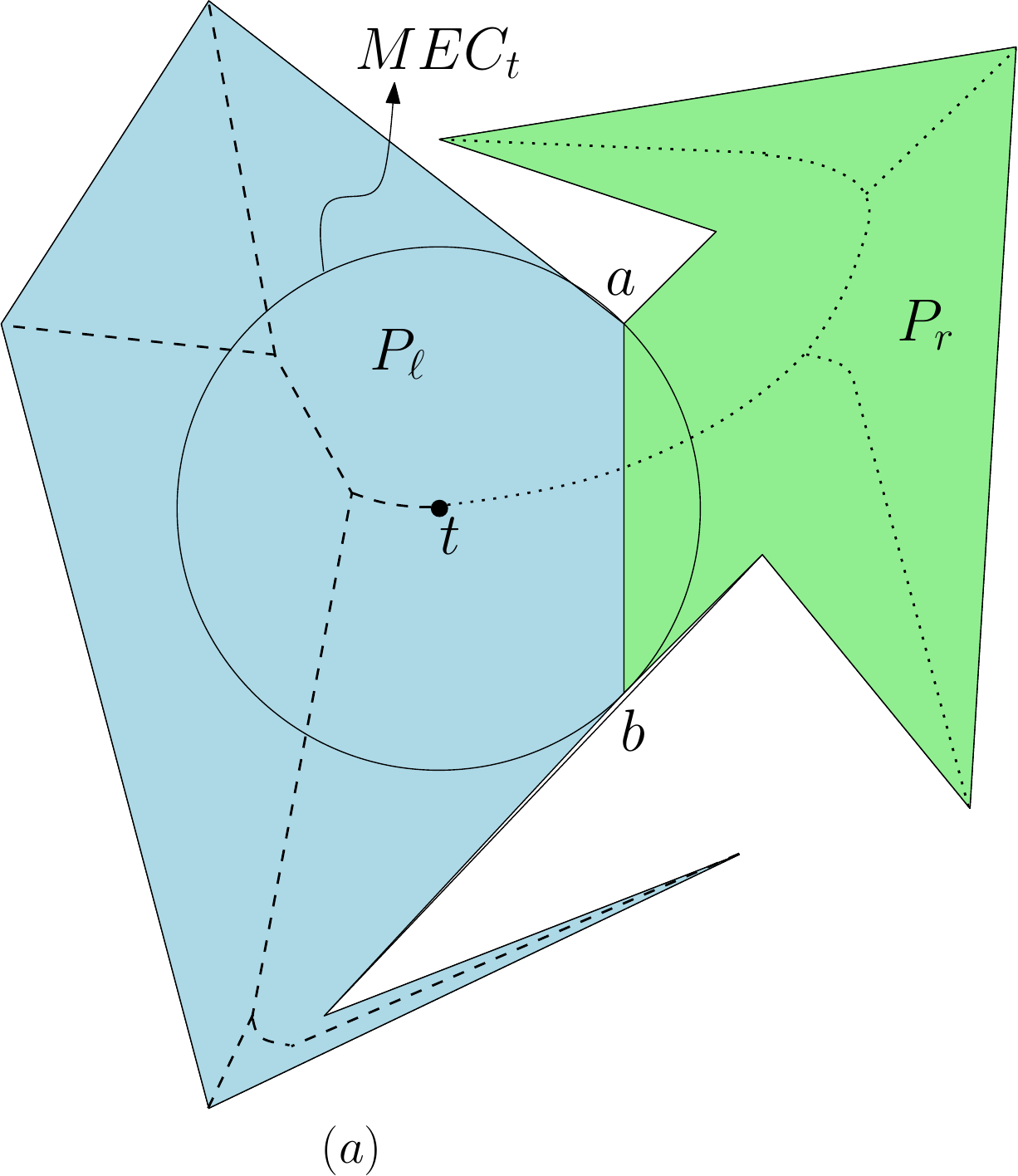}
\end{minipage}
\hspace{0.5cm}
\begin{minipage}[b]{0.37\linewidth}
\centering
\includegraphics[width=\linewidth]{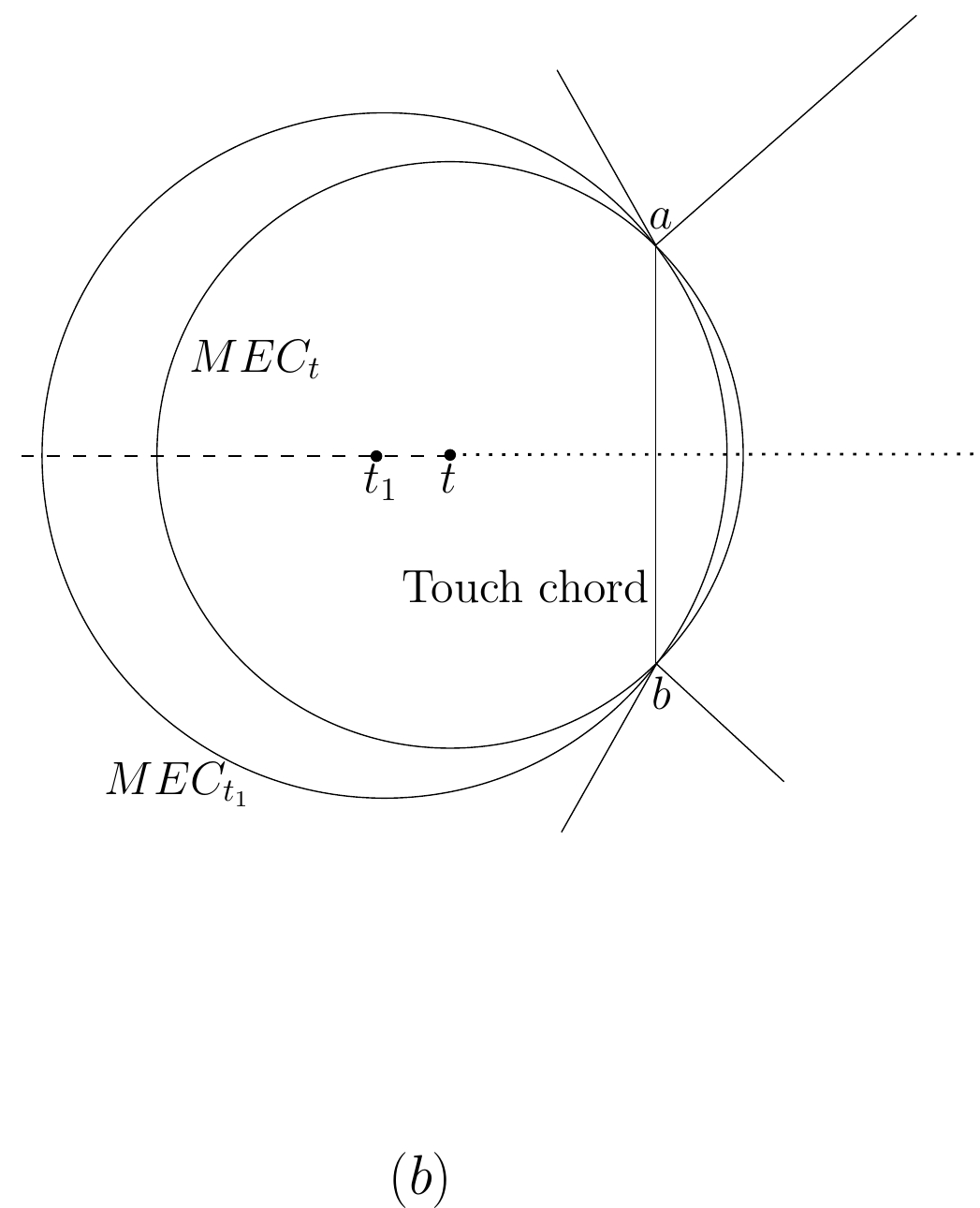}
\end{minipage}
\caption{Proof of Lemma \ref{l1}: (a) the general case, and (b) a special 
case where the points $a$ and $b$ are concave vertices of $P$}
\label{fig:fig_lemma3}
\end{figure}

To show that Equation \ref{eqn:clm} holds, we consider the following cases.

\begin{description}
\item[Case: ${\tt MEC}_{t_1}$ touches $P$ at both $a$ and $b$:] 
This case is illustrated in Figure~\ref{fig:fig_lemma3}(b). Here, $a$ and $b$ 
must be concave vertices that induce a straight line segment edge in the medial 
axis. Both $t$ and $t_1$ are on that edge; in particular, $t_1$ will be to the 
left of $t$. It is now easy to infer that Equation~\ref{eqn:clm} holds.

\item[Case: ${\tt MEC}_{t_1}$ touches at most one of $\{a,b\}$:]
Let ${\tt MEC}_{t_1}$ touch the other point $d \not\in \{a,b\}$ on the boundary of $P$. 
Clearly, $d \in P_{left} \setminus {\tt MEC}_t$. If we assume that  
${\tt MEC}_{t_1}$ also passes through a point $d' \in P_{right} \setminus {\tt MEC}_t$, 
then 
it is  
impossible to construct ${\tt MEC}_{t_1}$ without properly enclosing 
some point outside $P$. Therefore, Equation~\ref{eqn:clm} holds.
\end{description}
By symmetry, we can also say that 

\begin{equation}\label{eqn:sym}
{\tt MEC}_{t_2} \subset P_{right} \cup {\tt MEC}_t = P \setminus ( P_{left} \setminus {\tt 
MEC}_t).
\end{equation}
Taking the intersection  of Equations~\ref{eqn:clm} and \ref{eqn:sym}, we get

${\tt MEC}_{t_1} \cap {\tt MEC}_{t_2} \subset ( P_{left} \cup {\tt MEC}_t) \cap 
(P_{right} 
\cup {\tt MEC}_t )$ \\
= $(P_{left} \cap P_{right}) \cup {\tt MEC}_t$ 
= ${\tt MEC}_t$ (since  $(P_{left} \cap P_{right}) \subset {\tt MEC}_t$.

This contradicts our assumption that $q$ falls in ${\tt MEC}_{t_1}$ and ${\tt MEC}
_{t_2}$, 
but not in ${\tt MEC}_t$.
\qed 
\end{proof}

\begin{corollary}\label{l2}
Let $v\in M$ be such that ${\tt MEC}_v$ does not contain $q$. Then $M^q$ is 
contained 
entirely in one 
of the subtrees of $M$ obtained by deleting $v$ from $M$.
\end{corollary}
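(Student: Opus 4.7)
The plan is to derive this corollary as an almost immediate consequence of Lemma \ref{l1}. First I would observe that the hypothesis ${\tt MEC}_v$ does not contain $q$ means, by the very definition of $M^q$ as the locus of centers of maximal empty circles enclosing $q$, that $v \notin M^q$. So $M^q$ is a subset of $M \setminus \{v\}$.

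Next I would invoke the standard tree-topological fact that removing a single node $v$ from a tree $M$ disconnects it into a collection of subtrees, one per edge of $M$ incident to $v$. Since Lemma \ref{l1} tells us that $M^q$ is itself a connected subtree of $M$, and since $M^q$ is disjoint from $\{v\}$, the connected set $M^q$ must lie entirely inside a single connected component of $M \setminus \{v\}$; otherwise one could find two points of $M^q$ in different components, and any path in $M^q$ joining them would (by uniqueness of tree paths) have to pass through $v$, contradicting $v \notin M^q$.

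The only subtlety I anticipate is a minor technical point: the nodes of $M$ referred to in Lemma \ref{l1} and in this corollary include valleys/peaks/internal nodes, and we need to ensure that the statement ``$M$ minus $v$ breaks into subtrees'' is meaningful when $v$ is any point of $M$ rather than only a node of the combinatorial tree. This is handled by treating $M$ as a topological tree (so edges are subdivided at $v$ if necessary), in which case the same argument goes through verbatim. I would conclude by stating the corollary as an immediate consequence, with no further calculation required.
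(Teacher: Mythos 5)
Your argument is correct and matches the paper's intent exactly: the paper states this as a corollary with no written proof, treating it as immediate from Lemma \ref{l1} in just the way you describe ($v \notin M^q$ since ${\tt MEC}_v$ misses $q$, and a connected subtree avoiding $v$ must lie in a single component of $M \setminus \{v\}$). Your remark about treating $M$ as a topological tree when $v$ is an interior point of an edge is a sensible clarification and does not change the argument.
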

\begin{corollary}\label{cor:intermediate}
Consider two points $u, v \in M$, such that ${\tt MEC}_v$ overlaps with  ${\tt MEC}_u
$.
Let $\alpha$ be a point in $P$ such that $\alpha \in {\tt MEC}_v \cap {\tt MEC}_u$.
Then $\alpha$ will lie in all {\tt MEC}s centered  along 
the path from $v$ to $u$ in $M$. 
\end{corollary}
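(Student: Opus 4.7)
My plan is to treat $\alpha$ as if it were a query point and apply Lemma~\ref{l1} directly. Recall that for any point $q \in P$ the set $M^q \subseteq M$ is defined as the locus of centers $x \in M$ for which ${\tt MEC}_x$ encloses $q$. So the statement ``$\alpha$ lies in ${\tt MEC}_w$'' is literally the statement ``$w \in M^\alpha$,'' and the corollary reduces to showing that the whole path from $u$ to $v$ in $M$ is contained in $M^\alpha$.

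The first step is the trivial observation that, by hypothesis, $\alpha \in {\tt MEC}_u$ and $\alpha \in {\tt MEC}_v$, so both $u$ and $v$ already belong to $M^\alpha$. The second step is to invoke Lemma~\ref{l1} with the query point chosen to be $\alpha$: this gives us that $M^\alpha$ is a connected subtree of $M$. The third step is the purely combinatorial remark that, since $M$ itself is a tree, any connected subset of $M$ that contains two given nodes must contain the unique $M$-path joining them. Applied to $M^\alpha$ with the two nodes $u$ and $v$, this shows that every $w$ along the path from $u$ to $v$ lies in $M^\alpha$, which is exactly what we want.

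There is essentially no obstacle here; the corollary is a direct repackaging of Lemma~\ref{l1}. The one point I would be careful about is that the ``overlap'' hypothesis in the statement is only used to produce the witness $\alpha$, after which the argument depends solely on $\alpha$ lying in both ${\tt MEC}_u$ and ${\tt MEC}_v$. No additional geometric reasoning about the shape of $P$ or the structure of medial axis edges is required beyond what Lemma~\ref{l1} already absorbed.
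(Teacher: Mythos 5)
Your proposal is correct and is essentially the paper's own proof: the paper likewise observes that $u,v\in M^\alpha$ and then appeals directly to Lemma~\ref{l1}, with the tree-connectivity step (a connected subtree containing $u$ and $v$ must contain the unique path between them) left implicit. Your write-up just makes that last combinatorial step explicit, which is a fine elaboration rather than a different route.
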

\begin{proof}
Since ${\tt MEC}_v$ overlaps with  ${\tt MEC}_u$ and $\alpha \in {\tt MEC}_v \cap {\tt 
MEC}_u$, both $v$ and $u$ are points on $M^\alpha$. The result  now follows 
immediately  from Lemma\ \ref{l1}. 
\qed
\end{proof}

Before we delve into solving {\tt QMEC}, in the next three subsections, we define 
three data structures that we use as building blocks.

\vspace{-0.15in}
\subsection{{\tt PLiCA}: Point location in circular arrangement} \label{sec:plica}
The problem is to preprocess a set 
${\tt C} = \{C_1, C_2, \ldots, C_n\}$ of circles 
of arbitrary radii, so that for any query point $q$ in the plane, we need to 
quickly report if there exists a circle $C_i\in {\tt C}$ such that $q\in C_i$. 
This can be achieved by using the concept of Voronoi diagrams in Laguerre geometry 
of circles in $\tt C$ \cite{IIM}. Each cell of the Voronoi 
diagram is a convex polygon and is associated with a circle in 
${\tt C}$. The membership query is answered by performing a 
point location in the associated planar subdivision. 
The preprocessing time and space complexities are $O(n\log n)$ and 
$O(n)$ respectively, and the queries can be answered in $O(\log n)$ time.

\vspace{-0.15in}
\subsection{{\tt QiM}: Query-in-Mountain}\label{subsec:MiM}
Given a mountain $M_i \in {\tt M}$, we must preprocess it such that given a query point $q$ inside $P$ such that $M_i 
\cap 
M^q \neq \emptyset$ (and a point $x \in M_i 
\cap 
M^q$),
our task is to report the largest {\tt MEC}  centered at a point on $M_i$ that contains 
$q$. 
Note that if the center moves 
from $x$ towards the peak of $M_i$, the size of the 
{\tt MEC} increases monotonically. Thus, we can apply the algorithm proposed in 
Section\ \ref{sec:Convex} for the convex polygon case to identify the largest {\tt MEC}  
containing $q$, and centered on $M_i \cap M^q$. 
The preprocessing time and space complexities for the mountain $M_i$ are both 
$O(|P_i|)$, and the query time is $O(\log |P_i|)$, where $|P_i|$ denotes the 
number of edges in the sub-polygon  $P_i$ that induces $M_i$. Since the 
set of mountains and sub-polygons are partitions of the medial axis $M$ and the 
polygon $P$, respectively, all the 
mountains can be preprocessed for the {\tt QiM} queries in $O(n)$ time.

\subsection{{\tt QiC}: Query-in-Circle (Problem Definition and Bounds)} 
\label{qic_defn}
The {\tt QiC} is a simplification of the {\tt QMEC} problem in which, in addition to $P$ and its medial
axis $M$, a node $v$ of  $M$  is 
specified as part of the input for preprocessing. We are  promised that 
the query point $q$ will lie inside ${\tt MEC}_v$. As in ${\tt QMEC}$, we are to report 
the largest {\tt MEC} $C_q$ that contains $q$. We defer the details of our solution for the 
{\tt QiC} problem to Section\ \ref{subsec:QiC}, where we prove the following theorem.
\begin{theorem}\label{thm:qic}
There exists a solution for {\tt QiC} that takes $O(n \log n)$ time and $O(n)$ space for preprocessing.
Queries can be answered in $O(\log n)$ time.
\end{theorem}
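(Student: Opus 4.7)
The plan is to combine the {\tt QiM} data structure with the tree structure of the medial axis $M$ rooted at $v$. By Lemma~\ref{l1}, for any query $q$ with $q\in{\tt MEC}_v$, the set $M^q$ is a connected subtree of $M$ containing $v$, and by Observation~\ref{o}(iii) the {\tt MEC} radius is monotone along any path within a single mountain. The answer center $c_q$ is therefore either centered at a peak $p$ with $q\in{\tt MEC}_p$ (call this Case~A), or at an interior point of an edge inside some mountain $M_j$ where $q$ lies exactly on the boundary of ${\tt MEC}_{c_q}$ (Case~B).

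For preprocessing, I would first compute $M$ and its mountain decomposition in $O(n)$ time \cite{CSW99}, then build a {\tt QiM} structure on each mountain at total cost $O(n)$. Next, I would root the mountain tree $T_M$ at the mountain $M_v$ containing $v$ and install on $T_M$ an augmented search structure supporting the lookup: given $q$, return the peak $p^{*}$ maximizing $|{\tt MEC}_{p^{*}}|$ subject to $q\in{\tt MEC}_{p^{*}}$. A natural realisation is a heavy-path decomposition of $T_M$, with each heavy path equipped with a {\tt PLiCA}-style arrangement (Section~\ref{sec:plica}) of the peak {\tt MEC}s sorted by radius. The envisaged preprocessing time is $O(n\log n)$, with space $O(n)$.

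The query proceeds by (a) consulting the augmented structure in $O(\log n)$ time to locate $p^{*}$, which handles Case~A; and (b) for Case~B, invoking {\tt QiM} on a small number of mountains along the path in $T_M$ from $M_v$ to $M_{p^{*}}$. Correctness of (b) will rest on Lemma~\ref{l1} and the monotonicity property: any interior boundary candidate must arise in a mountain on or adjacent to this path, so only $O(1)$ {\tt QiM} invocations suffice. The final answer is the maximum {\tt MEC} over all candidates collected, reported in $O(\log n)$ total time.

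The principal obstacle is the ``deepest disk'' primitive: the vanilla {\tt PLiCA} of Section~\ref{sec:plica} reports \emph{some} disk containing $q$, not the one of largest radius. I would address this either by a persistent Laguerre-Voronoi construction layered by peak radius, or by exploiting the heavy-path structure of $T_M$ to reduce each heavy-path query to a binary search on a monotone sequence of peak radii. A secondary subtlety is showing that Case~B inspects only $O(1)$ mountains, which I expect to deduce from the connectedness of $M^q$ together with the fact that any Case~B witness must sit in a mountain bordering the sub-path of $T_M$ realizing $M_v\leadsto M_{p^{*}}$.
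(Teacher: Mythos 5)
There is a genuine gap, and it sits exactly where you flag your ``principal obstacle'' and your ``secondary subtlety'': both of the reductions you hope for fail, and the paper's proof exists precisely to supply the structure you are missing. First, the peak radii along a path in the mountain tree $T_M$ are \emph{not} monotone (mountains and valleys alternate with arbitrary heights), so there is no ``monotone sequence of peak radii'' on a heavy path to binary-search over; nor is the predicate ``some peak of radius $\ge r$ in $M^q$ contains $q$'' monotone in $r$, since $M^q$ may contain a large peak while skirting every smaller peak. The paper repairs this by replacing peaks with \emph{guiding circles} (Definition~\ref{def1}): the running maxima along each path from $v$ whose radii are snapped to the global sorted array ${\tt R}$. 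Lemma~\ref{increasing} and Corollary~\ref{cor:binary} show these \emph{do} have the monotone, gap-free radius structure needed for a binary search on ${\tt R}$, and the packing/angle argument of Lemma~\ref{lem:BoundingS} (every guiding circle of radius $r$ overlaps ${\tt MEC}_v$, hence lies in a disk of radius $\le 3r$, and each point is covered by at most four of them) bounds the number of circles examined per probe by a constant. Your persistent-Laguerre alternative does not obviously meet the bounds either: inserting a disk into a power diagram can restructure $\Theta(n)$ cells, so persistence layered by radius is not $O(n)$ space or $O(n\log n)$ time without further ideas.

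Second, your Case~B localization is false: $c_q$ need not lie in a mountain on or adjacent to the path $M_v\leadsto M_{p^*}$. Since $M^q$ is merely a connected subtree (Lemma~\ref{l1}), it can branch at $v$ into one arm containing the best peak $p^*$ and another arm that threads through several mountains (none of whose peaks contain $q$) before reaching an interior point $c$ with $q\in\partial{\tt MEC}_c$ and $|{\tt MEC}_c|>|{\tt MEC}_{p^*}|$; that mountain is arbitrarily far from the $M_v\leadsto M_{p^*}$ path, and the number of such branch-terminal candidates over all of $M^q$ can be $\Theta(n)$. The paper's Lemma~\ref{lem:correctness} is the statement that rescues this step: among the $O(1)$ guiding circles of the critical radius $\rho_q$ that contain $q$, at least one is centered in the \emph{same mountain} as $c_q$, so $O(1)$ {\tt QiM} calls suffice --- but that guarantee is a consequence of the guiding-circle construction, not of connectivity plus monotonicity alone. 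Your outer framework (mountain decomposition, {\tt QiM} per mountain, reduce to ``find the right mountain'') matches the paper's, but without the guiding circles and the constant-per-radius bound the two key query steps are unsupported.
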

To solve {\tt QMEC},  we employ a divide-and-conquer strategy that 
divides the medial axis into smaller pieces. On these smaller pieces, we employ {\tt 
QiC}. We remark in Section\ \ref{subsec:QiC} how the solution to the  {\tt QiC} 
problem on the entire medial axis can be adapted to restricted portions of the medial 
axis. For now, we note that the preprocessing time and space of {\tt QiC} scale with the size of the portion of the medial axis that is preprocessed.
On a  
portion $M^*$ of the medial axis, the preprocessing time and space are $O(n^* \log n^*)$ time and $O(n^*)$, respectively, where $n^*
$ is the number of edges of $P$ that induce\footnote{We say that an edge $e_P$ in 
$P$ induces an edge  $e_M$ in $M$ if for some point $x$ in the interior of $e_M$, ${\tt MEC}_x$ 
touches $e_P$.} the edges in $M^*$ (cf. Corollary\ \ref{cor:mstar}).

\begin{algorithm}[h!]
\caption{Preprocessing for {\tt QMEC} on a Simple Polygon $P$}
\label{alg:qmecSP}
\begin{algorithmic}[1]
\REQUIRE{A simple polygon $P$.}
\STATE Compute the medial axis $M$ of $P$.

\STATE Construct a {\tt PLiCA} data structure on {\tt MEC}s centered on nodes of $M
$.
\STATE Construct a secondary {\tt PliCA} data structure on the {\tt MEC}s centered on 
valley points of $M$.
\STATE Construct a list $(M_1, M_2, \ldots )$ of mountains and preprocess each 
mountain for {\tt QiM}.
\STATE Partition $P$ into sub-polygons $P_1$, $P_2$, $\ldots$ such that each $P_i$ 
is associated with its corresponding $M_i$. (Recall that this can be performed by 
cutting along diameters of {\tt MEC}s centered on valley points.) 
\STATE Preprocess  $P$ and its sub-polygons (in $O(n)$ time and 
space) such that, given a query point $q$, the sub-polygon that contains $q$ can be 
reported efficiently (in $O(\log n)$ time). Call this data structure $D$. \label{lno:easy}
\STATE $T \leftarrow$ Decompose($M$). \label{lno:decompose}

\FOR{each node $t \in T$} \label{lno:for}
\STATE Let $M^t \subseteq M$ be the subtree associated with $t$.
\STATE Let $v^t$ be the centroid of $M^t$ (cf. Lemma\ \ref{Jordon}).
\STATE Preprocess ${\tt MEC}_{v^t}$ for {\tt QiC} with the additional promise that 
the largest empty circle $C_q$ that contains query point $q$ is centered on 
$M^t$. Associate this {\tt QiC} data structure with $t$. \label{lno:qic}
\ENDFOR \label{lno:endfor}

\end{algorithmic}
\end{algorithm}

\begin{algorithm}[h!]
\caption{Decompose($M$)}
\label{alg:decompose}
\begin{algorithmic}[1]
\REQUIRE{A tree $M$ with  $n$ nodes.}
\ENSURE{A divide-and-conquer tree $T$ that decomposes $M$.}
\IF{$n=1$}
\RETURN a tree with the single node.
\ENDIF
\STATE Find the centroid (cf. Lemma\ \ref{Jordon}) of $M$ that will decompose $M$ 
into subtrees $M^1$, $M^2$, $\ldots$
\STATE Create a  node {\tt node} with a list {\tt child} of child pointers.
\STATE Associate $M$ with {\tt node}.
\FOR{each subtree $M^i$}
\STATE  {\tt node.child[i]} $\leftarrow$ Decompose($M^i$).
\ENDFOR
\end{algorithmic}
\end{algorithm}

\begin{algorithm}[h!]
\caption{Query phase of {\tt QMEC} on a simple polygon $P$ with query point $q$}
\label{alg:QMECquery}
\begin{algorithmic}[1]
\REQUIRE{Query point $q$ and all the data structures created in the preprocessing 
phase.}

\COMMENT{We can use the {\tt PLiCA} data structure for the following condition.}
\IF{$q$ falls inside some {\tt MEC} centered on a node $v$ of $M$}
\STATE \COMMENT{We are in the affirmative case.}
\STATE Find the node $t$ in $T$ whose centroid is $v$.\\
\COMMENT{In the next step, $v^*$ is the centroid of the subtree of $M$ associated 
with $t^*$.}
\STATE Find (in $T$) the farthest ancestor $t^*$ of $t$  such that ${\tt MEC}_{v^*}$ 
contains $q$.
\STATE $C_q \leftarrow $ circle returned by querying the {\tt QiC} data structure 
associated with $v^*$. 
\STATE Return $C_q$. 
\ELSE
\STATE \COMMENT{We are in the negative case.}
\IF{$q$ falls in some  {\tt MEC} centered on a valley point $p$}
\STATE \COMMENT{$p$ is a valley point connecting exactly two mountains $A$ and 
$B$.}
\STATE $C_q$ is the larger of the {\tt MEC}s returned by querying the {\tt QiM} data 
structures associated with $A$ and $B$.
\STATE Return $C_q$.
\ELSE 
\STATE Use data structure $D$ to find the sub-polygon $P^q$ that contains $q$.
\STATE $C_q \leftarrow $ circle returned by querying the {\tt QiM} data structure 
associated with $P^q$.
\STATE Return $C_q$
\ENDIF
\ENDIF
\end{algorithmic}
\end{algorithm}

\vspace{-0.2in}
\subsection{Preprocessing for the {\tt QMEC} problem} \label{preprocessing}

Algorithm\ \ref{alg:qmecSP} outlines the steps in the preprocessing phase. The first 
\ref{lno:easy} steps are straightforward. Before we explain the subsequent steps, we 
state (for the sake of completeness) a well-known lemma.
\begin{lemma} \label{Jordon} \cite{jordon}
Every tree $M$ with $n$ nodes has at least one node $v$ whose 
removal splits the tree into subtrees with at most 
$\lceil\frac{n}{2}\rceil$ nodes. The node $v$ is called the 
centroid of $M$.
\end{lemma}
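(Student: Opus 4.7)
The plan is to prove this classical result by a greedy descent argument. First I would root the tree $M$ at an arbitrary node, and for each node $v$ define $s(v)$ to be the number of nodes in the subtree of $M$ rooted at $v$ (so $s(\text{root}) = n$). Removing any node $v$ from $M$ partitions $M \setminus \{v\}$ into the subtrees rooted at the children of $v$, together with one additional ``upward'' component of size $n - s(v)$ (empty when $v$ is the root). The centroid condition therefore becomes: find $v$ such that (i) $s(u) \leq \lceil n/2 \rceil$ for every child $u$ of $v$, and (ii) $n - s(v) \leq \lceil n/2 \rceil$.

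Next I would describe the descent procedure. Start at the root. At the current node $v$, if some child $u$ satisfies $s(u) > \lceil n/2 \rceil$, move to $u$ and repeat; otherwise stop and output $v$. Since $s$ strictly decreases along any root-to-leaf path in a rooted tree, the process terminates at some node $v^\star$.

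The verification is the heart of the argument. By the stopping rule, every child $u$ of $v^\star$ satisfies $s(u) \leq \lceil n/2 \rceil$, which gives condition (i). For condition (ii), if $v^\star$ is the root then $n - s(v^\star) = 0 \leq \lceil n/2 \rceil$. Otherwise, $v^\star$ was reached from its parent precisely because $s(v^\star) > \lceil n/2 \rceil$, and so $n - s(v^\star) < n - \lceil n/2 \rceil \leq \lceil n/2 \rceil$, establishing (ii).

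I do not anticipate any real obstacle here; the lemma is classical and the only subtlety is handling the ``upward'' component correctly, which is resolved by the observation that the descent is triggered exactly when that component would otherwise be too large. A linear-time algorithmic version follows immediately by precomputing all $s(v)$ via one post-order traversal and then performing the descent, though for the statement of the lemma only existence is needed.
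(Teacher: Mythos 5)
Your proof is correct and complete: the greedy descent from the root, moving into the unique child whose subtree exceeds $\lceil n/2\rceil$ nodes, terminates because subtree sizes strictly decrease, and your verification of both the child components and the upward component at the stopping node is exactly right. Note that the paper itself gives no proof of this lemma --- it is stated as a classical fact with a citation to Jordan (1869) --- so there is nothing to compare against; your argument is the standard one and would serve as a valid self-contained justification, and it also yields the linear-time centroid computation the paper implicitly relies on in its decomposition routines.
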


In line number\ \ref{lno:decompose} we call Algorithm\ \ref{alg:decompose} 
(recursively) to build  a centroid decomposition tree $T$. We partition   using the 
centroid (cf. Lemma \ref{Jordon}) in order to ensure that $T$ is balanced. 

The centroid decomposition is constructed in anticipation of the query phase. Suppose $q$ is a query point. If $q$ lies in  ${\tt MEC}_v$, where $v$ is  the centroid  associated with the root of ${\tt T}$. Then, we can use the ${\tt QiC}$ attached  to the root (in line number\ \ref{lno:qic}). If, on the contrary, $q \not \in {\tt MEC}_v$, then 
from Corollary\ \ref{l2} we know that only one of the 
subtrees rooted at $v$ will contain $M^q$, thereby allowing us to recurse into that 
subtree until we find the centroid whose {\tt MEC} encloses $q$. To facilitate this recursion, we must provide a way to find the correct subtree 
to recurse into. For this, we consider the geometry of the polygon $P$. Let ${\tt MEC}
_v$ touch the polygon $P$ at $k$ 
($\geq 3$) 
points. Consider the partitioning of  $P$ into $k$ sub-polygons, apart 
from the 
one containing $v$, by inserting chords as shown in Figure \ref{recursion}. These 
$k$ sub-polygons correspond to the $k$ subtrees of $M$ obtained by removing $v$. It is easy to 
see now that a point location data structure will suffice. In the query phase, 
we can simply find the sub-polygon that contains $q$ and recurse into the 
corresponding subtree.

\begin{figure}
\centering
\includegraphics[width=0.7\columnwidth]{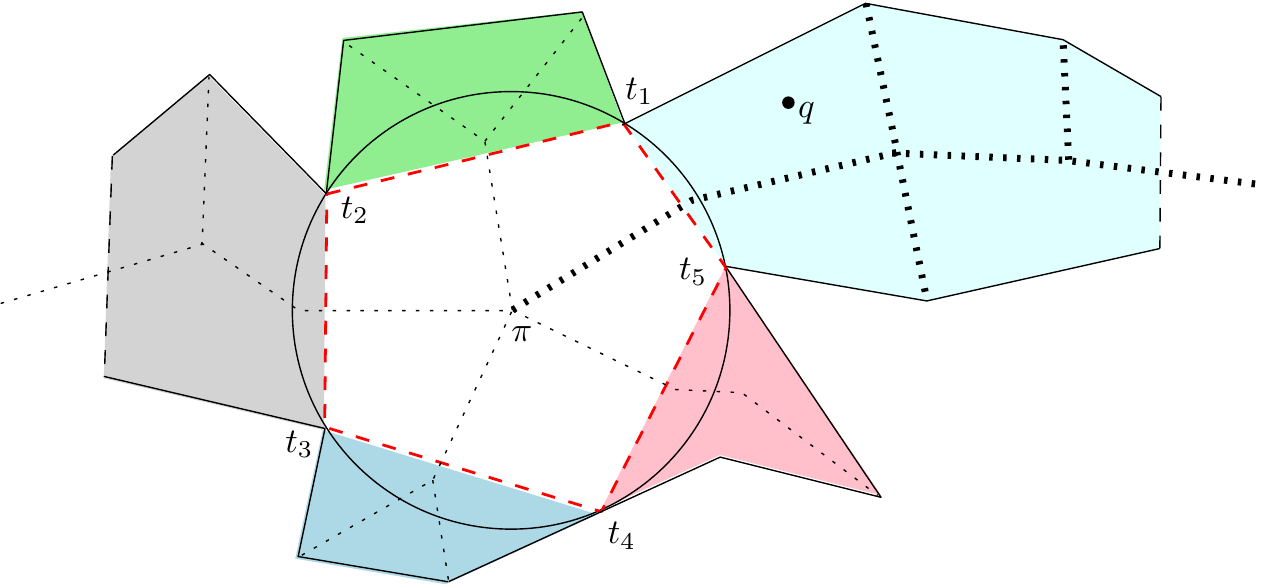}
\caption{The divide and conquer search structure}
\label{recursion}
\end{figure}

In lines \ref{lno:for} to \ref{lno:endfor}, for each node $t$ of $T$ we
associate an appropriate subtree $M^t$ of $M$ along with the centroid $v^t$ of $M^t
$. Additionally, we will construct a {\tt QiC} data structure associated with $t$ with the 
additional promise that the largest empty circle $C_q$ that contains query point $q$ is 
centered on $M^t$.

\begin{lemma} \label{sv}
The time and space required for preprocessing $P$ 
are $O(n\log^2 n)$ and $O(n\log n)$, 
respectively.
\end{lemma}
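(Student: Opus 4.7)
The plan is to add up the preprocessing costs incurred by the various lines of Algorithm~\ref{alg:qmecSP} and show that the dominant contribution comes from building the {\tt QiC} structures hanging off the centroid decomposition tree $T$. Lines~1--\ref{lno:easy} are straightforward: the medial axis $M$ costs $O(n)$, each of the two {\tt PLiCA} structures costs $O(n\log n)$ time and $O(n)$ space (Section~\ref{sec:plica}), the total {\tt QiM} preprocessing across all mountains is $O(n)$ (Section~\ref{subsec:MiM}), and the partitioning of $P$ together with the point-location structure $D$ is $O(n)$. So it suffices to handle lines~\ref{lno:decompose}--\ref{lno:endfor}.

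For the decomposition itself (line~\ref{lno:decompose}), by Lemma~\ref{Jordon} each recursive call splits a tree of size $m$ into subtrees each of size at most $\lceil m/2\rceil$, and locating a centroid is linear. Hence the depth of $T$ is $O(\log n)$, the subtrees associated with the nodes at any fixed depth of $T$ are pairwise edge-disjoint pieces of $M$ (the only thing lost at each level is a centroid node), and the total work at each depth is $O(n)$. Building $T$ therefore takes $O(n\log n)$ time and $O(n)$ space. This also gives the key combinatorial fact
\[
\sum_{t \text{ at depth } k} |M^t| \le n \qquad \text{for every depth } k,
\]
and consequently $\sum_{t\in T}|M^t|=O(n\log n)$.

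The main cost lies in lines~\ref{lno:for}--\ref{lno:endfor}, where for every $t\in T$ we build a {\tt QiC} structure on $M^t$. By Theorem~\ref{thm:qic} (in its restricted form, Corollary~\ref{cor:mstar}) this costs $O(n^t \log n^t)$ time and $O(n^t)$ space, where $n^t$ is the number of edges of $P$ that induce an edge of $M^t$. Since every edge of $M$ is induced by at most two features of $\partial P$, we have $n^t = O(|M^t|)$, and hence $\sum_{t \text{ at depth }k} n^t = O(n)$ at every depth. Summing over the $O(\log n)$ depths yields
\[
\sum_{t\in T} n^t = O(n\log n), \qquad \sum_{t\in T} n^t \log n^t \le \log n \cdot \sum_{t\in T} n^t = O(n \log^2 n),
\]
giving the claimed space and time bounds.

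The single delicate point in this plan is the charging argument in the last step: we must be sure that the inducing-edges counts $n^t$ really do sum to $O(n)$ per level, even though one long edge of $P$ can induce a medial-axis arc that gets chopped into several subtrees when a centroid is removed. The rescue is that this over-counting is by at most a constant factor per medial-axis edge (since each medial-axis edge has at most two inducing polygon features), so the per-level bound $\sum_t|M^t|=O(n)$ transfers to $\sum_t n^t=O(n)$. Everything else in the proof is bookkeeping of the costs of the auxiliary structures listed above. \qed
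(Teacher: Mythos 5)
Your proposal is correct and follows essentially the same route as the paper: account for the $O(n)$ and $O(n\log n)$ auxiliary structures, then bound the {\tt QiC} costs level-by-level in the centroid decomposition tree $T$ using the disjointness of the subtrees at each depth and the $O(\log n)$ depth bound. Your extra care about the charging of polygon edges to medial-axis edges (via $n^t = O(|M^t|)$) is a detail the paper glosses over, but it does not change the argument.
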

\begin{proof}
The medial axis $M$ of a simple polygon can  be computed in $O(n)$ time 
\cite{CSW99}. Once we have $M$, the partition of $M$ into mountains, and the 
associated 
partitioning of
$P$  can be done in $O(n)$ time. 
The data structure for the planar point location can easily be obtained in $O(n\log n)$ 
time 
and $O(n)$ space. The {\tt PLiCA} 
data structure for all the {\tt MEC}s centered at the nodes of $M$ requires $O(n\log n)
$ 
time and $O(n)$ space. 

Consider a level $\ell$ in the tree $T$. Each node in level $\ell$ implements the {\tt 
QiC} data structure on a  portion of the medial axis that is disjoint from the portion 
addressed by other {\tt QiC} implementations in the same level. Therefore,  the 
preprocessing times and spaces of all {\tt QiC}s at any level $\ell$ is $O(n \log n)$ 
and $O(n)$ respectively. Since there are $O(\log n)$ levels in $T$, the total 
preprocessing time and space for all {\tt QiC}s is $O(n \log^2 n)$ and $O(n \log n)$, 
respectively.
\qed
\end{proof}

\vspace{-0.2in}
\subsection{{\tt QMEC} query}
As discussed in Algorithm\ \ref{alg:QMECquery}, in the query phase with a point $q$, 
we first test whether $q$ lies in the {\tt MEC} centered at any node $v$ of 
the medial axis. This can be performed using the {\tt PLiCA} data structure 
(see Subsection \ref{sec:plica}) built over the set of {\tt MEC}s centered 
at all the nodes in $M$. We now need to consider two cases:

{\bf Affirmative case:} There exists a node $v$ in $M$ such that ${\tt MEC}_v$ 
contains $q$. From $t \in T$ corresponding to $v$ we move upward in the centroid  
tree $T$ following the parent 
pointers to identify a node $t^* \in T$ at the highest level such that the ${\tt MEC}$ 
centered at ${v^*}$, 
the medial axis node of the subtree associated with $t^*$, 
contains $q$. Our choice of $v^*$ coupled with Corollary\ \ref{l2} imply the following 
lemma.
\begin{lemma}\label{lem:promise}
Let $M^*$ be the subtree of $M$ associated with $t^* \in T$ and let $v^*$ be the 
centroid of $M^*$. 
Then, $M^q \subseteq M^*$ and ${\tt MEC}_{v^*}$ contains $q$.
\end{lemma}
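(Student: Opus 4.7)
The statement splits into two parts: (a) ${\tt MEC}_{v^*}$ contains $q$, and (b) $M^q \subseteq M^*$. Claim (a) is immediate by the defining property of $t^*$, so my plan focuses on (b).

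My plan is to apply Corollary \ref{l2} at every strict ancestor of $t^*$ in $T$ and then intersect the resulting constraints to pin $M^q$ down inside $M^*$. If $t^* = \mathrm{root}(T)$, there is nothing to prove since $M^* = M$. Otherwise, let $t_1 = \mathrm{parent}(t^*), t_2, \ldots, t_h = \mathrm{root}(T)$ be the strict ancestors of $t^*$ in $T$, with associated centroids $v^{t_1}, \ldots, v^{t_h}$. Since $t^*$ is the highest ancestor of $t$ whose centroid's MEC contains $q$, each ${\tt MEC}_{v^{t_j}}$ fails to contain $q$. Corollary \ref{l2} then forces $M^q$ into a single component of $M \setminus v^{t_j}$; because $v \in M^q$ (using ${\tt MEC}_v \ni q$) and because $v = v^t \in M^t \subseteq M^*$ (since $t$ is a descendant of $t^*$ in $T$), that component must be the one containing $v$. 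Thus
\[ M^q \;\subseteq\; \bigcap_{j=1}^{h}\bigl(\text{component of } M \setminus v^{t_j} \text{ containing } v\bigr). \]

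The remaining step is to identify this intersection with $M^*$. I would invoke two standard facts about trees. First, by unwinding the centroid decomposition, $M^*$ is exactly the component of $M \setminus \{v^{t_1}, \ldots, v^{t_h}\}$ that contains $v$; this is a short induction on depth, using the fact that $M^{t_{j-1}}$ is constructed as a component of $M^{t_j} \setminus v^{t_j}$, together with the observation that to escape $M^{t_j}$ one must pass through one of the already-deleted centroids at higher levels, so removing them never merges components of $M^{t_j} \setminus v^{t_j}$ with anything outside $M^{t_j}$. Second, in any tree, the component of $M \setminus S$ containing $v$ equals $\bigcap_{u \in S}$ (component of $M \setminus u$ containing $v$); this follows immediately from the uniqueness of paths in a tree, since the unique $v$-to-$x$ path avoids $S$ iff it avoids each $u \in S$ individually. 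Combining these two facts gives the desired equality, completing the proof.

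The hard part will be the first of these two facts: correctly justifying the centroid-decomposition identity, which requires carefully tracking how $M^*$ is built by successive component-selections as we descend $T$ and verifying that reintroducing higher-level centroids (as we pass from $M^{t_j}\setminus v^{t_j}$ to $M\setminus v^{t_j}$) does not enlarge the component containing $v$ beyond $M^{t_{j-1}}$. Once that is firmly in hand, the tree-path fact and the application of Corollary \ref{l2} are routine.
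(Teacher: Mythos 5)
Your proof is correct and takes essentially the same route the paper intends: the paper's entire justification for this lemma is the one-line remark that the choice of $v^*$ coupled with Corollary~\ref{l2} implies it, and your argument is precisely that reasoning with the details filled in (applying Corollary~\ref{l2} at each strict ancestor of $t^*$, using $v \in M^q \cap M^*$ to select the component containing $v$, and identifying the intersection of those components with $M^*$ via the centroid-decomposition structure). The two tree facts you single out are exactly the content the paper leaves implicit, and your handling of them is sound.
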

Lemma\ \ref{lem:promise}  ensures all the prerequisites for {\tt QiC}, so we can query 
the {\tt QiC} data structure associated with  node $t^*$ and correctly obtain $C_q$.

{\bf Negative case:} There exists no node $v$ in $M$ such that ${\tt MEC}_v$ 
contains $q$. In this case, $M^q$ cannot span more than two  mountains as 
otherwise $M^q$ must include a node in $M$. If $q$ falls in an {\tt MEC} centered at a valley point $p$, we 
query the {\tt QiM} data structure associated with the two polygons connected by $p$. 
Otherwise, 
the center of $C_q$ lies in only one mountain, the mountain that 
contains $q$. We identify the sub-polygon $P_i$  in the planar subdivision that 
contains $q$ using the data structure {\tt D} (see line number\ \ref{lno:easy} of 
Algorithm\ \ref{alg:qmecSP}). Finally, we can compute 
$C_q$ by performing the {\tt QiM} query in $M_i$, the mountain associated with $P_i
$.

\begin{theorem} \label{th:th}
Given a simple polygon $P$, we can preprocess it in $O(n\log^2 n)$ time and 
$O(n\log n)$ space, such that for a query point $q\in P$, the largest circle 
$C_q$ in $P$, that contains $q$, can be reported in $O(\log n)$ time. 
\end{theorem}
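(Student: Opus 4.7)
The preprocessing bounds are already established in Lemma~\ref{sv}: the medial axis, the mountain partition, the {\tt PLiCA} structures, and the {\tt QiM} structures all fit in $O(n \log n)$ time and $O(n)$ space, while the centroid decomposition tree $T$ has depth $O(\log n)$ and the {\tt QiC} structures attached to nodes at a single level of $T$ collectively cost $O(n \log n)$ time and $O(n)$ space, because they are built on disjoint portions of the medial axis (by Corollary~\ref{cor:mstar} and Theorem~\ref{thm:qic}). Summing across the $O(\log n)$ levels of $T$ gives the claimed $O(n \log^2 n)$ preprocessing time and $O(n \log n)$ space. The remainder of the proof is to verify the correctness of Algorithm~\ref{alg:QMECquery} and bound its running time by $O(\log n)$.

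The initial {\tt PLiCA} point location on {\tt MEC}s centered at nodes of $M$ takes $O(\log n)$ and splits the analysis into two cases. In the affirmative case a node $v \in M$ with $q \in {\tt MEC}_v$ is known, so we walk from the corresponding $t \in T$ toward the root, testing at each ancestor whether its centroid's {\tt MEC} contains $q$; because the associated subtrees are nested and Corollary~\ref{l2} confines $M^q$ to any ancestor whose centroid's {\tt MEC} still contains $q$, the highest such ancestor $t^*$ satisfies $M^q \subseteq M^*$ (Lemma~\ref{lem:promise}), which is exactly the precondition of the {\tt QiC} data structure stored at $t^*$. By Theorem~\ref{thm:qic} that structure returns $C_q$ in $O(\log n)$, and since $T$ has depth $O(\log n)$, the climb plus the {\tt QiC} query total $O(\log n)$. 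In the negative case no node's {\tt MEC} contains $q$; by Lemma~\ref{l1}, $M^q$ is connected and avoids all nodes of $M$, so it lies in the interior of edges of $M$ and can cross at most one valley, hence meets at most two mountains. A secondary {\tt PLiCA} query over valley-centered {\tt MEC}s decides in $O(\log n)$ whether some valley $p$ has $q \in {\tt MEC}_p$: if so, run {\tt QiM} on the two mountains sharing $p$ and take the larger answer; otherwise use $D$ in $O(\log n)$ to find the unique sub-polygon $P_i$ containing $q$ and run {\tt QiM} on $M_i$. Each subroutine runs in $O(\log n)$, so this case also finishes in $O(\log n)$.

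The main obstacle I expect is the correctness of the upward walk in $T$: one must verify that the first ancestor of $t$ whose centroid's {\tt MEC} fails to contain $q$ never allows $M^q$ to escape into a sibling subtree, so that its immediate descendant $t^*$ satisfies $M^q \subseteq M^*$ in full. This is exactly the content of Corollary~\ref{l2} combined with the nesting of subtrees in the centroid decomposition, but the argument must be invoked at every level of the climb; similarly, in the negative case the restriction of $M^q$ to at most two mountains is precisely what licenses a constant number of {\tt QiM} queries. Once these structural claims are in hand, the theorem follows from Lemma~\ref{sv} together with the $O(\log n)$ cost of each subroutine called along any query path.
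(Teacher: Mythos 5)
Your proposal is correct and follows essentially the same route as the paper: preprocessing bounds via Lemma~\ref{sv}, the affirmative-case climb in the centroid tree justified by Corollary~\ref{l2} and Lemma~\ref{lem:promise} followed by a {\tt QiC} query, and the negative case handled by the valley {\tt PLiCA} and at most two {\tt QiM} queries since $M^q$ avoids all nodes of $M$. No substantive differences from the paper's argument.
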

\begin{proof}
The correctness follows from the above discussion.
Preprocessing time and space have already been established in Lemma \ref{sv}.
We now analyze the  query time. The {\tt PLiCA} query requires $O(\log n)$ time 
\cite{IIM}. If we are in the affirmative case, then finding the node $v^*$ 
at the maximum level in $T$ such that $q \in {\tt MEC}_{v^*}$ needs another 
$O(\log n)$ time. The {\tt QiC} query for ${\tt MEC}_{v^*}$ can be executed in $O(\log n)$ 
time (Lemma \ref{lem:prep}). In the negative case, finding the appropriate sub-
polygon 
and then performing the ${\tt QiM}$ query requires $O(\log n)$ time.\qed
\end{proof}

\vspace{-0.15in}
\subsection{Description of the {\tt QiC} Data Structure}\label{subsec:QiC}
Recall from Section\ \ref{qic_defn} that  the {\tt QiC} data structure preprocesses  the 
medial axis and a specified node $v$ such that when queried with a point $q \in {\tt 
MEC}_v$, the largest {\tt MEC} containing $q$ can be reported efficiently.  

We use the concept of {\it guiding circles} associated with node $v\in M$ to find $C_q
$.
Let ${\tt R}$ be an array containing the radii of the {\tt MEC}s centered at all the 
nodes 
in $M$, sorted in increasing order.

\begin{definition} (Guiding Circles of a node $v$ of $M$) \label{def1}
An {\tt MEC} $C$ centered somewhere on $M$ is called a guiding {\tt MEC} of the 
node 
$v$ of $M$ if (i) its radius is in ${\tt R}$, (ii) every {\tt MEC}  on the path from 
$v$ to the center of $C$ in $M$ (both inclusive) is no larger than $C$, and (iii) $C$ 
overlaps with ${\tt MEC}_v$.\footnote{We say that two circles overlap if they have a 
common point in their interior.} (See Figure\ \ref{fig:gc} for an illustration of guiding 
circles on a single path from $v$.) We denote the set of guiding circles of the node $v \in M$ by ${\tt S}_v$.
\end{definition}
\vspace{-0.2in}
\subsubsection{Computing ${\tt S}_v$.} \label{GC}
We can compute ${\tt S}_v$ by adapting  either depth-first search 
or breadth-first search 
traversal on $M$ starting from $v$. As we traverse $M$ using (say) depth first search 
starting from $v$, we keep track of the largest MEC along the path from $v$ to the 
current position in the traversal. When we encounter an {\tt MEC} $C$  that fits our 
definition of a guiding circle, we include $C$ in ${\tt S}_v$ along with the id of the 
mountain in which it is centered.

\begin{algorithm}[h!]
\caption{Preprocessing Phase of {\tt QiC}}
\label{alg:QiCPrep}
\begin{algorithmic}[1]
\REQUIRE {Polygon $P$, its medial axis $M$ and a vertex $v$ of $M$.}
\STATE Compute the radii of {\tt MEC}s centered at nodes of $M$ and store them in a sorted array $
{\tt R}_v$.
\STATE Compute the guiding circles ${\tt S}_v$ of node $v$. \COMMENT{We can use 
an adaptation of either depth first search or breadth first search.}
\STATE To each $C \in {\tt S_v}$ centered at $c$, attach  the mountain id of the mountain containing $c$.
\STATE For each $r \in {\tt R}_v$, attach the set $S_v^r \triangleq \{ s \in S_v |
\text{ radius of $s$ is $r$\}}$. \COMMENT{In Lemma\ 
\ref{lem:BoundingS}, we will see that, for any $r$, $|S_v^r|$ is a constant.}
\end{algorithmic}
\end{algorithm}

Before we provide the pseudocode for the query phase, we establish a few lemmas. 
Recall from Lemma\ \ref{cor:intermediate} that if a guiding circle $C$ contains $q$, 
then every guiding circle from ${\tt MEC}_v$ to $C$ will contain $q$. The proof for the 
following lemma follows from the definition of guiding circles.
\begin{lemma}\label{increasing} 
Let $\Pi$ be the path on $M$ from $v$ to some guiding circle $C$. 
\begin{enumerate}
\item The radii of guiding circles along $\Pi$ are non-decreasing. 
\item Furthermore, if $r \in {\tt R}$ is the radius of $C$ and $r_v$ is the 
radius of ${\tt MEC}_v$,  then for every  $r' \in {\tt R}$ such that $r_v \le r' \le 
r$, there is at least one guiding circle of radius $r'$ in the path from $v$ to the center 
of $C$. 
\end{enumerate}
\end{lemma}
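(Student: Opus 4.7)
Part 1 is a direct consequence of Definition~\ref{def1}. I would argue that if $C_1$ and $C_2$ are two guiding circles whose centers lie on $\Pi$, with the center of $C_1$ closer to $v$, then the center of $C_1$ is on the sub-path from $v$ to the center of $C_2$ (since $\Pi$ is a path in the tree $M$). Condition (ii) in the definition of $C_2$ being a guiding circle then forces the radius of $C_1$ to be no larger than that of $C_2$, yielding the non-decreasing property. No additional machinery is needed here.

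For Part 2, the plan is a continuous ``first-hit'' argument. I would parametrize $\Pi$ continuously from $v$ to the center of $C$ and let $\rho(x)$ denote the radius of ${\tt MEC}_x$ as $x$ traverses $\Pi$. Because $\rho(x)$ equals the distance from $x$ to $\partial P$, $\rho$ is continuous along $\Pi$, with $\rho = r_v$ at $v$ and $\rho = r$ at the far endpoint. Given $r' \in {\tt R}$ with $r_v \le r' \le r$, I would let $x^*$ be the first point on $\Pi$ (starting from $v$) at which $\rho = r'$; its existence follows from the intermediate value theorem and continuity. I would then verify that ${\tt MEC}_{x^*}$ is a guiding circle of $v$ of radius $r'$ by checking the three conditions of Definition~\ref{def1}: condition (i) is immediate from $r' \in {\tt R}$; condition (ii) follows from the first-hit choice, which forces $\rho(y) < r'$ for every $y$ strictly between $v$ and $x^*$ on $\Pi$; and condition (iii) follows from Corollary~\ref{cor:intermediate}, since $C$ and ${\tt MEC}_v$ share a common point $\alpha$ (as $C$ overlaps ${\tt MEC}_v$), and hence every {\tt MEC} centered on the sub-path from $v$ to the center of $C$ (in particular ${\tt MEC}_{x^*}$) must also contain $\alpha$.

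The main obstacle I anticipate is making the continuity argument fully rigorous: the medial axis of a simple polygon mixes straight segments and parabolic arcs and joins them at branching nodes, so I would need to confirm that $\rho$ is continuous across every type of junction (which amounts to the well-known fact that the distance to $\partial P$ varies continuously as one moves along $M$). The boundary cases $r' = r_v$ (where $x^* = v$ and the guiding circle is ${\tt MEC}_v$ itself, which is easily checked to satisfy the three conditions) and $r' = r$ (where $x^*$ may coincide with the center of $C$) require only routine handling and do not substantively affect the argument.
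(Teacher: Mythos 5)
Your proof is correct. The paper itself offers no argument for this lemma beyond the remark that it ``follows from the definition of guiding circles''; your Part~1 is exactly the intended direct consequence of condition~(ii) of Definition~\ref{def1}, and your intermediate-value/first-hit argument for Part~2 --- with condition~(ii) verified via the first-hit choice and condition~(iii) supplied by Corollary~\ref{cor:intermediate} --- is a sound and more rigorous elaboration of what the paper leaves implicit.
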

\begin{corollary}\label{cor:binary}
Given a query point $q$, let  $\rho_q$ be the radius  in ${\tt R}_v$ such that $\exists 
C \in {\tt S}_v^{\rho_q}$ that contains $q$ but $\forall  r > \rho_q$, $ \nexists C \in {\tt 
S}_v^{r}$ that contains $q$. Then, for every $r' \in {\tt R}_v$ such that $r' \le \rho_q$, 
$\exists C \in {\tt S}_v^{r'}$ that contains $q$.
\end{corollary}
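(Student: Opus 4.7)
The plan is to combine part~(2) of Lemma~\ref{increasing} with Corollary~\ref{cor:intermediate}. I will take the witnessing guiding circle $C \in {\tt S}_v^{\rho_q}$ that contains $q$, and let $c_C \in M$ be its center, so $C = {\tt MEC}_{c_C}$. By condition~(iii) of Definition~\ref{def1}, $C$ overlaps with ${\tt MEC}_v$, and the point $q$ sits in both sets (it lies in ${\tt MEC}_v$ by the ${\tt QiC}$ precondition, and in $C$ by hypothesis). Corollary~\ref{cor:intermediate} then yields that $q \in {\tt MEC}_x$ for \emph{every} point $x$ on the medial-axis path $v \sim c_C$.

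For any $r' \in {\tt R}_v$ with $r' \le \rho_q$, I would next invoke part~(2) of Lemma~\ref{increasing} to produce a guiding circle $C' \in {\tt S}_v^{r'}$ whose center $c_{C'}$ lies on that same path $v \sim c_C$. The preceding paragraph immediately gives $q \in {\tt MEC}_{c_{C'}} = C'$, which is the desired witness, finishing the argument. The boundary case $r' = r_v$ is handled by noting that ${\tt MEC}_v$ itself qualifies as a guiding circle of $v$; for $r' < r_v$, the set ${\tt S}_v^{r'}$ is empty because condition~(ii) of Definition~\ref{def1} forces every guiding circle to have radius at least $r_v$, so the corollary is properly read as quantifying over $r' \in [r_v,\rho_q] \cap {\tt R}_v$.

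I do not anticipate a substantial obstacle: both ingredients are already on the table, and the proof is essentially a single-paragraph combination once they are lined up. The only small detail to double-check is the boundary behaviour when $q$ happens to lie on the circumference rather than in the strict interior of some intermediate ${\tt MEC}$, but since Corollary~\ref{cor:intermediate} is stated in terms of containment rather than strict interior membership this causes no harm. Everything else is a direct reading-off of the monotone structure of guiding-circle radii along a single medial-axis path, with no new geometric construction needed.
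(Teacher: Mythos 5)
Your proof is correct and takes essentially the same approach the paper intends: the paper leaves this corollary essentially unproved, presenting it as an immediate combination of part~(2) of Lemma~\ref{increasing} with Corollary~\ref{cor:intermediate}, which is exactly the combination you carry out. Your added observation that the quantification must implicitly be restricted to $r' \ge r_v$ (since condition~(ii) of Definition~\ref{def1} forces ${\tt S}_v^{r'} = \emptyset$ for $r' < r_v$) is a valid clarification of the statement rather than a gap in your argument.
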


\begin{figure}
\centering
\includegraphics[width=0.9\columnwidth]{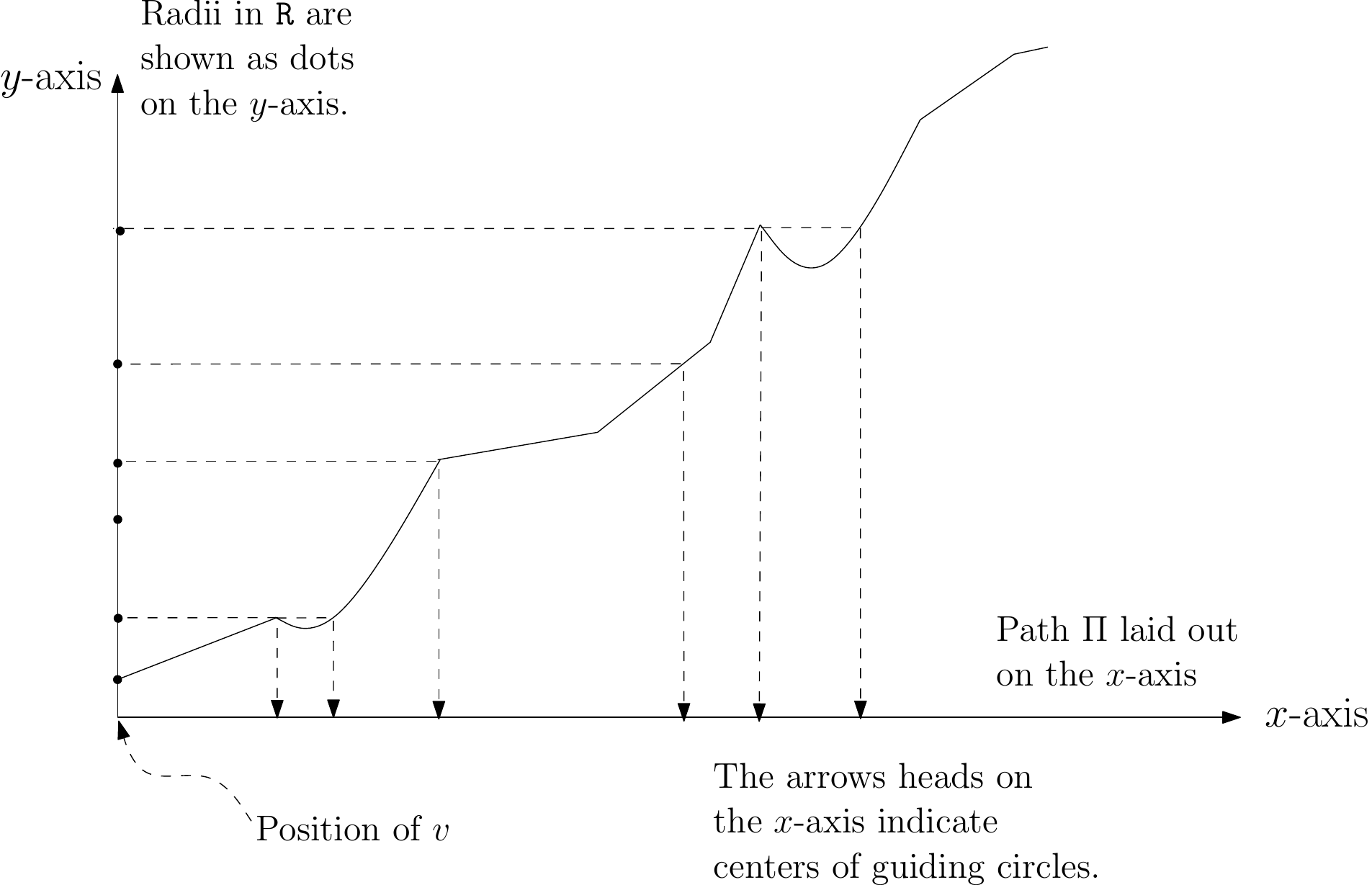}
\caption{For the sake of intuition on the construction and usefulness of guiding circles, 
we show a path $\Pi$ laid out on the $x$-axis. }
\label{fig:gc}
\end{figure}

Corollary\ \ref{cor:binary} will allow us to perform a binary search  for $\rho_q$ which 
in turn will lead us to the largest guiding circles in ${\tt S}_v$ that contain $q$. The following lemma ensures that  the binary search will run in $O(\log 
n)$ time.

\begin{lemma}\label{lem:BoundingS} For any $r \in {\tt R}_v$, 
$|{\tt S}_v^r|$ is bounded by a constant. 
\end{lemma}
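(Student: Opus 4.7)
My plan is to bound $|{\tt S}_v^r|$ by a constant by combining the tree structure of the medial axis with the overlap condition that is built into the definition of a guiding circle. Let $r(\cdot): M \to \mathbb{R}$ denote the radius function that assigns to each point $x \in M$ the radius of ${\tt MEC}_x$. A guiding circle $C \in {\tt S}_v^r$ with center $c$ is precisely a point where $r(c) = r$ and which is reachable from $v$ via a path on which $r(\cdot) \le r$, i.e., $c$ lies in the connected component $T_v^r$ of the sublevel set $\{x \in M : r(x) \le r\}$ containing $v$, and $r(c) = r$ there. Condition~(iii) in Definition~\ref{def1} additionally forces the Euclidean constraint $|c - v| \le r + r_v$, since ${\tt MEC}_c$ must overlap with ${\tt MEC}_v$.

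The first substep is to use the mountain decomposition to bound the per-mountain contribution. By Observation~\ref{o}(iii), the radius function $r(\cdot)$ is monotone along any path inside a mountain from its peak towards a leaf. Hence, within a single mountain $M_i$, the level set $\{x \in M_i : r(x) = r\}$ consists of at most one point per branch of $M_i$ descending from its unique peak. Since the degree of a peak node of $M$ is $O(1)$ (bounded by the number of features of $\partial P$ tangent to the peak's ${\tt MEC}$, which is a constant in a generic simple polygon), each mountain contributes at most $O(1)$ candidate centers to ${\tt S}_v^r$.

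The second substep is to bound the number of mountains contributing any such candidate. Here I would exploit the overlap condition: every candidate center lies inside the Euclidean disk of radius $r + r_v$ around $v$. The only mountains that are relevant are those whose portion inside $T_v^r$ reaches into this disk and whose peak has radius at least $r$. Each such mountain corresponds to a ``direction of escape'' for an ${\tt MEC}$ of radius $r$ that touches $\partial P$ near ${\tt MEC}_v$; using Corollary~\ref{cor:intermediate} together with the fact that two distinct ${\tt MEC}$s of the same radius have distinct tangent configurations on $\partial P$, the number of such mountains is $O(1)$.

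Combining the two substeps gives $|{\tt S}_v^r| = O(1) \cdot O(1) = O(1)$. The main obstacle will be the second substep: while the Euclidean constraint $|c - v| \le r + r_v$ already confines candidates to a bounded region, the medial axis in that region could in principle wind through many mountains, so one needs a careful geometric argument showing that ${\tt MEC}$s of the \emph{same} radius $r$ that all overlap a fixed disk ${\tt MEC}_v$ admit only $O(1)$ distinct tangent configurations against $\partial P$. I expect this to follow from a packing/angular-spread argument around ${\tt MEC}_v$ together with the uniqueness of an ${\tt MEC}$ given its tangent features, but making this quantitative is the delicate part.
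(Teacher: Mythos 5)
There is a genuine gap, on two counts. First, your per-mountain bound (substep 1) does not hold: a mountain is a tree that can branch at internal nodes other than its peak, and the number of leaf-to-peak paths (hence the size of the level set $\{x \in M_i : r(x)=r\}$) can be $\Theta(|M_i|)$, not $O(1)$. Your appeal to the peak having constant degree ``in a generic simple polygon'' is both a genericity assumption the paper does not make and beside the point, since the branching that inflates the level set need not occur at the peak. Second, and more importantly, the quantitative packing argument that you defer in substep 2 (``I expect this to follow from a packing/angular-spread argument \ldots making this quantitative is the delicate part'') is not a loose end --- it is the entire content of the lemma. Without it you have shown nothing, since the Euclidean constraint $|c-v| \le r + \rho_v$ alone permits arbitrarily many centers.

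For comparison, the paper's proof supplies exactly the missing quantitative step, and does so without any mountain decomposition. The key geometric fact, extracted from condition~(ii) of Definition~\ref{def1}, is that for any $C \in {\tt S}_v^r$ centered at $c$ and touching $\partial P$ at $p_1, p_2$, the chord $p_1p_2$ must separate $c$ from $v$ (otherwise moving $c$ toward $v$ along the medial axis enlarges the {\tt MEC}, contradicting $C$ being a guiding circle). Consequently any point $\alpha$ enclosed by $C$ and relevant to the packing sees $p_1p_2$ at an angle $\angle p_1 \alpha p_2 \ge \pi/2$, and these angular sectors are disjoint across distinct members of ${\tt S}_v^r$ containing $\alpha$; hence every point is covered by at most $4$ circles of ${\tt S}_v^r$. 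Combined with the observation that all of ${\tt S}_v^r$ lies inside the disk $\chi$ of radius $\rho_v + 2r \le 3r$ about $v$, an area count gives $|{\tt S}_v^r| \le 4\pi(3r)^2/(\pi r^2) = 36$. If you want to complete your argument, you should drop substep 1 and prove this chord-separation property together with the depth-$4$ bound; that is the step your proposal currently only gestures at.
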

\begin{figure}[t]
\vspace{-0.1in}
\begin{minipage}[c]{0.35\textwidth}
\begin{center} %
\includegraphics[height=1.75in]{./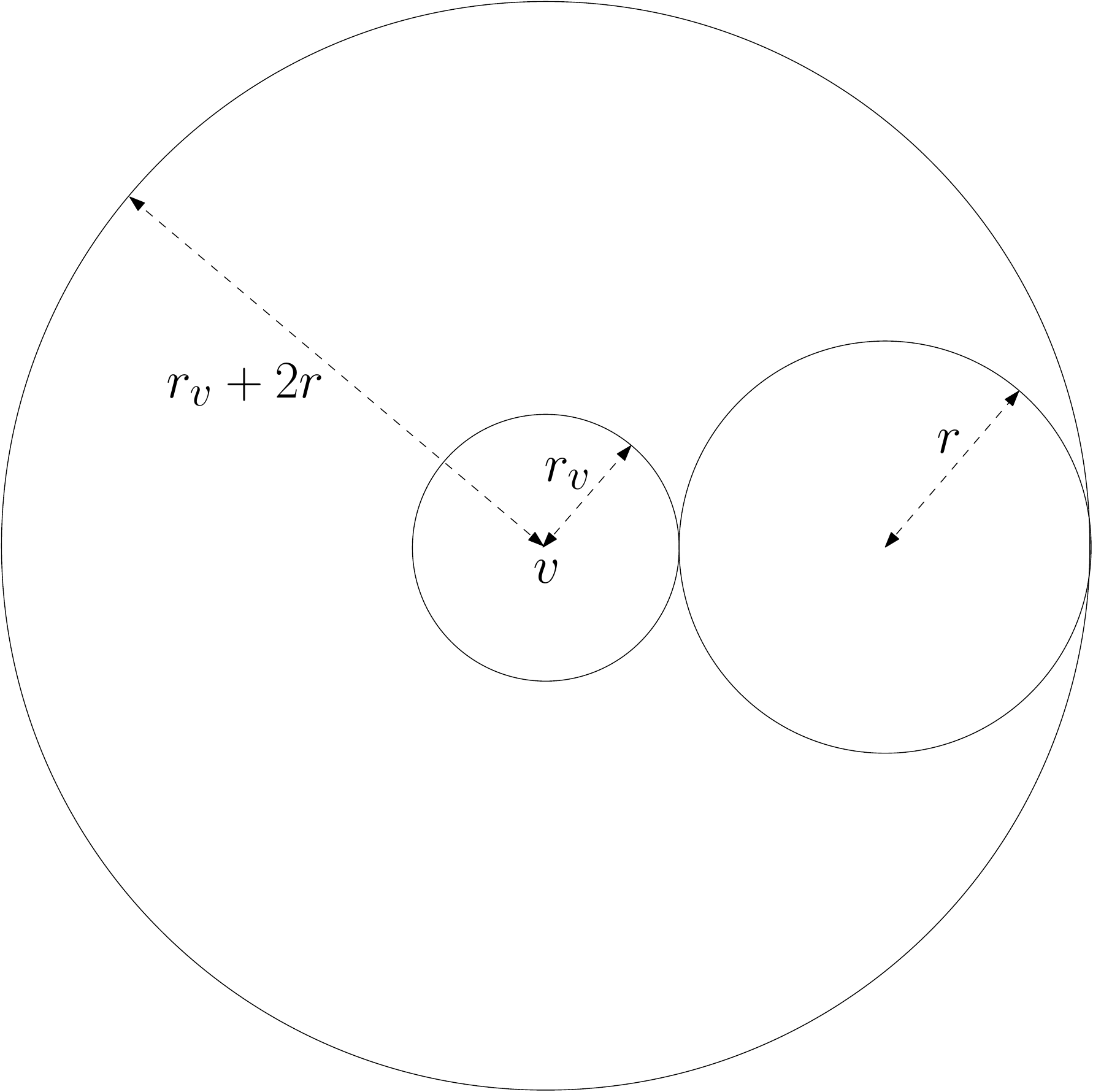}\\
(a)  
\end{center}
\end{minipage}%
\begin{minipage}[c]{0.65\textwidth}
\begin{center} %
\includegraphics[width=\textwidth] {./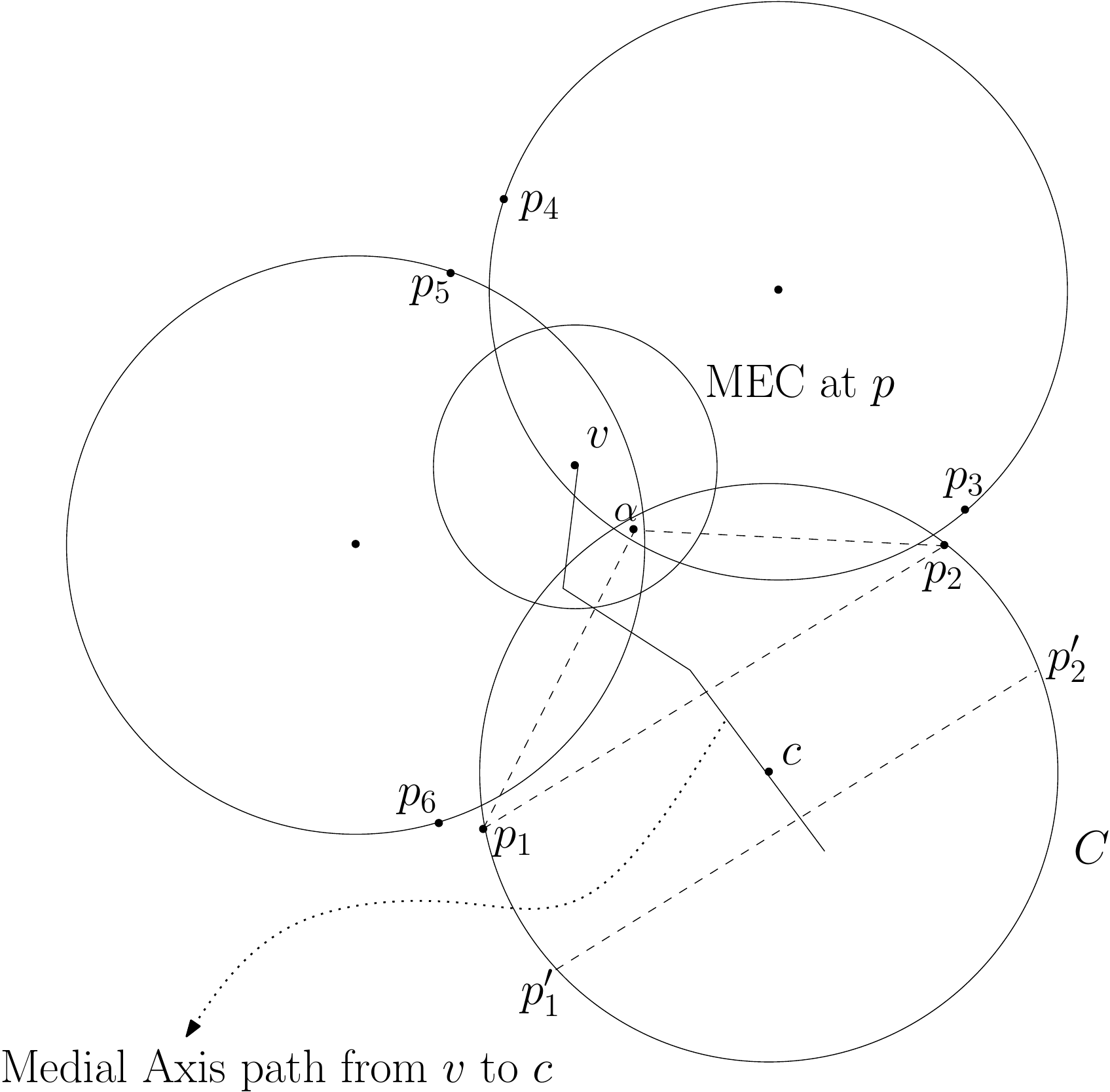}\\
(b)  
\end{center}
\end{minipage}
\vspace{-0.1in}
\caption{(a) Bounding $|{\tt S}|$, and (b) Illustration of Lemma \ref{lem:BoundingS}}
\vspace{-0.1in}
\label{fig:BoundingS_r}
\end{figure}

\begin{proof}
Consider any $r \in {\tt R}$. Let ${\tt S}_v^r$ be the {\tt MEC}s of radius $r$ in 
${\tt S}_v$. 
For convenience, let us assume that ${\tt S}_v^r$ does not contain {\tt MEC}  
centered at any node of $M$. Since {\tt MEC}s centered at nodes of $M$ have 
distinct radii, at most one {\tt MEC}  in ${\tt S}_v^r$ can be centered at a node.

By the condition (ii) of Definition \ref{def1}, $\rho_v \le r$. Also recall that every 
{\tt MEC} in ${\tt S}_v$ must intersect ${\tt MEC}_v$ (see 
Figure \ref{fig:BoundingS_r}(a)). Therefore, every {\tt MEC}  in ${\tt S}_v^r$ must lie 
entirely within a circle $\chi$ of radius $\rho_v + 2r$ centered at $v$. Thus, we need 
to prove that the number of guiding circles of radius $r$ at node $v$ inside $\chi$ is
bounded by a constant.

Let us consider a point $\alpha \in P$. Let ${\tt S}_v^r(\alpha) \subseteq {\tt S}_v^r$ 
be the set of {\tt MEC}s in $S_v^r$ that enclose $\alpha$. Consider any {\tt MEC}  $C 
\in 
{\tt S}_v^r(\alpha)$; let $c$ be its center. Let $p_1$ and $p_2$ be the two points at 
which $C$ touches the boundary of the polygon $P$. The chord $p_1p_2$ must 
intersect 
the medial axis (see Figure \ref{fig:BoundingS_r}(b)). Note that, the points $v$ and $c
$ 
lie on different sides of $p_1p_2$. On the contrary, if $v$ and $c$ lie in 
the same side of $p_1'p_2'$, where $p_1'$ and $p_2'$ are the points of contact of the 
said {\tt MEC}  and the polygon $P$, then we can increase the size of the {\tt MEC} by 
moving its center $c$ towards $v$ along the medial axis (see Figure 
\ref{fig:BoundingS_r}
(b)). 
Thus, $C \not\in {\tt S}_v^r$. Thus, we have $\angle p_1 \alpha p_2 \ge \pi/2$. Again, 
the 
angles subtended by different {\tt MEC}s in ${\tt S}_v^r(\alpha)$ are disjoint. These 
two 
facts imply that $|{\tt S}_v^r(u)| \le 4$. In other words, any point inside the circle 
$\chi$ can be enclosed by at most four different circles of ${\tt S}_v^r$. We need to 
compute $|{\tt S}_v^r|$. Let us consider a function $\eta(\alpha)$ defined as the 
number 
of circles in ${\tt S}_v^r$ that overlap at the point $\alpha$, $\alpha \in \chi$. Clearly, 
$\eta(\alpha) \leq 4$ for all $\alpha \in \chi$. The total number of circles in 
${\tt S}_v^r$ can be bounded as follows: 

Total area of circles in ${\tt S}_v^r \leq   \int_{\alpha=(x,y) \in \chi} \eta(\alpha) ~dx
~dy \le 4 \pi(r_v + 2r)^2$. \\ 

Therefore, $|{\tt S}_v^r|  \leq  \frac{4 \pi(r_v + 2r)^2}{ \pi r^2} \leq
\frac{4 \pi (3r)^2}{ \pi r^2} = 36$. \qed
\end{proof}

\vspace{-0.15in}
\subsubsection{Answering {\tt QiC} query}
Given a query point $q$ and a node $v$ in $M$ such that $q \in {\tt MEC}_v$, we 
compute 
$C_q$ as follows. Let $\rho \in {\tt R}$ be the radius of the largest guiding circle in
${\tt S}_v$ containing $q$, and ${\tt S}_v^\rho(q)$ be all the members of ${\tt S}_v$ 
with radius $\rho$ that contain $q$.  We report $C_q$ by executing the steps in Algorithm\ \ref{alg:QiCQuery}.

\begin{algorithm}[h!]
\caption{Query Phase of {\tt QiC}}
\label{alg:QiCQuery}
\begin{algorithmic}[1]
\REQUIRE {A query point $q$ lying inside ${\tt MEC}_v$.}

 \COMMENT{We want to find $\rho \in {\tt R}_v$ such that $\exists C \in {\tt S}_v^\rho
$ that contains $q$ but $\forall  r > \rho$, $ \not \exists C \in {\tt S}_v^{r}$ that contains 
$q$.}

\STATE Perform a binary search in the array ${\tt R}_v$ to identify 
$\rho \in {\tt R}_v$. This also returns the members in ${\tt S}_v^\rho(q)$. 
Note that, each member $C \in {\tt S}_v^\rho(q)$ is attached with its 
corresponding mountain-id.

\STATE For each member in $C \in {\tt S}_v^\rho(q)$, locate the largest 
{\tt MEC}  containing $q$ in the mountain attached to $C$ by executing the 
{\tt QiM} query algorithm.

\STATE Report $C_q$ as the largest one among the {\tt MEC}s obtained in
Step 2. 
\end{algorithmic}
\end{algorithm}

Lemmata \ref{lem:correctness} and  \ref{lem:prep} state the correctness and 
complexity. 

\begin{lemma}\label{lem:correctness}
At least one of the circles in ${\tt S}_v^\rho(q)$ is centered at some point 
on the mountain in which $C_q$ is centered. 
\end{lemma}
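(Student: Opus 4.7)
The plan is to exhibit, in each of two cases, a guiding circle of radius $\rho$ centered on the mountain $M^*$ containing $c_q$ that contains $q$. Let $\pi$ denote the path from $v$ to $c_q$ in $M$. By Lemma~\ref{l1}, $M^q$ is a connected subtree of $M$, so $\pi \subseteq M^q$; consequently every {\tt MEC} centered on $\pi$ contains $q$ and has radius at most $r^* := r(C_q)$, with equality attained at $c_q$. Let $p^*$ denote the peak of $M^*$.

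In the easy case $r^* \in {\tt R}$, I would verify directly that ${\tt MEC}_{c_q}$ satisfies the three conditions of Definition~\ref{def1}: (i) is the case hypothesis; (ii) follows from the bound on MEC radii along $\pi$; and (iii) holds since both $C_q$ and ${\tt MEC}_v$ contain $q$. Because any guiding circle containing $q$ must be centered in $M^q$ and thus have radius at most $r^*$, we conclude $\rho = r^*$, so ${\tt MEC}_{c_q}$ itself witnesses the lemma with its center on $M^*$.

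In the harder case $r^* \notin {\tt R}$ we have $\rho < r^*$ and $c_q$ lies strictly inside an edge of $M^*$. Let $u$ be the first point on $\pi$ that lies in $M^*$ (either $v$ itself, or the valley where $\pi$ enters $M^*$). A short monotonicity argument via Observation~\ref{o}(iii) places $c_q$ on the path from $u$ to $p^*$ in $M^*$; otherwise the LCA of $u$ and $c_q$ in $M^*$ would force an MEC radius strictly larger than $r^*$ on a sub-path of $\pi \subseteq M^q$, contradicting the maximality of $r^*$ in $M^q$. Hence MEC radius grows monotonically from $r({\tt MEC}_u)$ to $r^*$ along $u \to c_q$. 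The crucial subclaim is then that every {\tt MEC} centered on the sub-path $v \to u$ has radius at most $\rho$. Assuming otherwise and letting $r_{\max}$ be the maximum MEC radius along $v \to u$, I would observe that valleys are local minima of MEC radius and that MECs vary monotonically on edges within any single mountain; hence $r_{\max}$ can be realized at an internal node $y^*$ of $M$, giving $r({\tt MEC}_{y^*}) = r_{\max} \in {\tt R}$. The maximality of $y^*$ yields condition (ii) of Definition~\ref{def1} for ${\tt MEC}_{y^*}$, while $y^* \in M^q$ supplies (iii); so ${\tt MEC}_{y^*}$ is a guiding circle of $v$ containing $q$ of radius strictly greater than $\rho$, contradicting the maximality of $\rho$.

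Given the subclaim, $r({\tt MEC}_u) \le \rho$, and the monotone growth of MEC radius along $u \to c_q$ produces a point $x_\rho$ on $u \to c_q \subseteq M^*$ with $r({\tt MEC}_{x_\rho}) = \rho$. The three guiding-circle conditions for ${\tt MEC}_{x_\rho}$ then follow from the subclaim, from monotonicity on $u \to x_\rho$, and from $x_\rho \in \pi \subseteq M^q$, placing ${\tt MEC}_{x_\rho}$ in ${\tt S}_v^\rho(q)$ with its center on $M^*$. The principal obstacle is precisely this subclaim in the harder case: its proof leans on the structural observation that MEC maxima along any path in $M$ can always be realized at an internal node of $M$---never forced to be at a valley (a local minimum) nor strictly inside an edge of a mountain (where MEC is monotone)---so that the candidate witness $y^*$ has MEC radius in ${\tt R}$ and thus provides a legitimate guiding circle against which the maximality of $\rho$ can be contradicted.
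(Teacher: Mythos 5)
Your proof is correct, but its architecture differs from the paper's. The paper works directly with the path $\Pi$ from $v$ to $c_q$: it lets $C'$ be the \emph{last} guiding circle encountered along $\Pi$, asserts $C'\in{\tt S}_v^\rho(q)$, and derives a contradiction from the assumption that $C'$ and $c_q$ lie on different mountains --- the intervening valley $\alpha$ forces the radius to dip below $r(C')$ and then climb back to $r(C_q)\ge r(C')$, so by continuity some $\beta$ beyond $\alpha$ has an {\tt MEC} of radius exactly $r(C')\in{\tt R}$, i.e.\ a later guiding circle. Your argument is constructive rather than a contradiction about the ``last'' circle: you split on whether $r(C_q)\in{\tt R}$, bound all radii on the prefix $v\to u$ by $\rho$ via the structural observation that path-maxima of the radius function are attained at nodes of $M$ (never at a valley, never in the interior of an edge of a mountain), and then combine monotonicity inside $M^*$ with an intermediate-value step to exhibit a guiding circle of radius exactly $\rho$ centered on $M^*$. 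The underlying mechanism is the same in both proofs (connectivity of $M^q$ from Lemma~\ref{l1}, monotonicity within a mountain from Observation~\ref{o}, and an intermediate-value argument across valleys), but your version is more careful on a point the paper elides: the assertion that the last guiding circle on $\Pi$ has radius $\rho$ is not immediate, since $\rho$ is defined as a maximum over all of ${\tt S}_v$ and could a priori be realized on a different branch of $M^q$; your subclaim about the prefix $v\to u$ together with the explicit construction of $x_\rho$ supplies exactly the justification the paper skips.
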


\begin{proof}
Since $M^q$ is a subtree of $M$ (Lemma \ref{l1}), if we explore all the paths in $M$ 
from node $v$ towards its leaves, the center $c_q$ of $C_q$ is reached in one 
of these paths, say $\Pi$. Let $C'$ be the  last guiding circle when going from $v$ to 
$c_q$. Let the center of $C'$ be $c'$. As a consequence of Lemma\ \ref{increasing}, 
$C' \in {\tt S}_v^\rho(q)$. Suppose for the sake of contradiction, $C'$ is not centered 
on the same mountain on which $c_q$ is centered. Then,  between $c'$ and $c_q$ 
there is a valley point 
$\alpha$ on the path $\Pi$, such that the radius of ${\tt MEC}_\alpha$ is 
less than the radius of ${\tt MEC}_{c'}$. Also, there exists another point 
$\beta$ on the path $\Pi$ between $\alpha$ and $c_q$ such that the radius of 
${\tt MEC}_\beta$ is equal to the radius of ${\tt MEC}_{v'}$. Since the radius 
of ${\tt MEC}_\beta$ matches with an element of ${\tt R}$, ${\tt MEC}_\beta$ 
must  also be a guiding circle. This contradicts our assumption that $C'$ is the last
guiding circle between $v$ and $c_q$.  
\qed
\end{proof}

\begin{lemma} \label{lem:prep}
(i) For a node $v$, ${\tt S}_v$ can be computed in $O(n \log n)$ time and 
$O(n)$ space. (ii) If the query point $q$ lies in ${\tt MEC}_v$, then $C_q$ can be 
computed in $O(\log n)$ time. 
\end{lemma}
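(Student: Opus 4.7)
The plan is to prove parts (i) and (ii) separately, making essential use of the packing bound from Lemma \ref{lem:BoundingS} and the monotonicity statement of Corollary \ref{cor:binary}; the heavy lifting has already been done by those results, so what remains is a careful accounting of Algorithms \ref{alg:QiCPrep} and \ref{alg:QiCQuery}.

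For part (i), I would follow the steps of Algorithm \ref{alg:QiCPrep}. The array ${\tt R}_v$ has $|{\tt N}| = O(n)$ entries (one radius per node of $M$) and can be built and sorted in $O(n \log n)$ time and $O(n)$ space. I then argue that ${\tt S}_v$ can be extracted by a single DFS traversal of $M$ started at $v$ that maintains, along the current root-to-frontier path, the running maximum of ${\tt MEC}$ radii encountered so far. On each edge, the radius varies continuously (linearly along a straight segment, monotonically along a parabolic arc), so the radii in ${\tt R}_v$ that yield new guiding circles are precisely those lying strictly above the current running maximum and within the range assumed on the edge; they can be enumerated via a binary-search-plus-scan in the sorted array, and any candidate that fails the overlap condition (iii) of Definition \ref{def1} is discarded immediately. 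Because $|{\tt S}_v^r| \le 36$ for every $r$ by Lemma \ref{lem:BoundingS}, the total output size is $|{\tt S}_v| = O(n)$, and combined with the $O(\log n)$ search cost per edge the entire traversal runs in $O(n \log n)$ time and $O(n)$ space.

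For part (ii), I would unpack Algorithm \ref{alg:QiCQuery}. Corollary \ref{cor:binary} says that ``some circle in ${\tt S}_v^r$ contains $q$'' is monotone in $r$, so a binary search on ${\tt R}_v$ identifies the largest radius $\rho$ satisfying it in $O(\log n)$ probes. Each probe tests an $O(1)$-sized set (Lemma \ref{lem:BoundingS}) of circles for containment of $q$, hence costs $O(1)$; the search also yields ${\tt S}_v^\rho(q)$ together with the mountain ids attached during preprocessing. For each of the $O(1)$ circles in ${\tt S}_v^\rho(q)$ I invoke the ${\tt QiM}$ query of Section \ref{subsec:MiM} on the attached mountain, at $O(\log n)$ each, and return the largest ${\tt MEC}$ produced. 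Correctness is provided by Lemma \ref{lem:correctness}, which guarantees that some circle in ${\tt S}_v^\rho(q)$ is centered in the same mountain as $c_q$, so one of the ${\tt QiM}$ invocations will return $C_q$ itself.

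The step I expect to be the main obstacle is controlling the cost of guiding-circle extraction in part (i): a naive DFS could revisit many radii in ${\tt R}_v$ on each edge or on different branches, and without exploiting Lemma \ref{lem:BoundingS} the total work could blow up. The analysis must use the fact that only radii strictly above the current running maximum can produce new guiding circles together with the global $O(n)$ packing bound, so that the traversal is effectively output-sensitive up to the unavoidable $O(\log n)$ cost of searches in the sorted radius array.
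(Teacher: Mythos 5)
Your proposal is correct and follows essentially the same route as the paper: traverse $M$ from $v$ maintaining the running maximum radius to extract ${\tt S}_v$, bound $|{\tt S}_v|=O(n)$ via Lemma \ref{lem:BoundingS}, sort/organize by radius in $O(n\log n)$, and answer queries by a binary search over ${\tt R}_v$ justified by Corollary \ref{cor:binary} followed by $O(1)$ many ${\tt QiM}$ queries, with correctness from Lemma \ref{lem:correctness}. Your accounting of the traversal cost in part (i) is somewhat more explicit than the paper's (which simply charges $O(n)$ for the BFS, $O(|{\tt R}|)$ for collecting members, and $O(n\log n)$ for sorting), but the argument is the same.
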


\begin{proof}
(i) First of all, note that $|{\tt R}| \in O(n)$. The breadth first search in $M$ needs 
$O(n)$ time. The time for computing 
the members in ${\tt S}_v$ is $\sum_{r \in {\tt R}} |{\tt S}_v^r| = O(|{\tt R}|)$ 
(by Lemma \ref{lem:BoundingS}), which may be $O(n)$ in the worst case. 
A sorting of the members in ${\tt S}_v$ with respect to their radii is required; this takes 
$O(n \log n)$ time. Once sorted, attaching ${\tt S}_v^r$ with each $r \in {\tt R}$ will 
take $O(n)$ time. The space requirement can be argued similarly.

(ii)  If $q \in {\tt MEC}_v$, the binary search in ${\tt S}_v$ considers at most 
$O(\log |{\tt R}|) = O(\log n) $ distinct radii. For each radius, the number of guiding 
circles inspected to find whether any one contains $q$ is bounded by a constant 
(see Lemma \ref{lem:BoundingS}). Thus $\rho$, the largest radius among the guiding 
circles of node $v$ that contains $q$, can be identified in $O(\log n)$ time.

Let ${\tt S}_v^\rho(q)$ denote the set of guiding circles of node $v$ of radius 
$\rho$ that contains $q$. Each of them is attached with the corresponding mountain-
id. 
For each member in $C \in {\tt S}_v^\rho(q)$, we invoke {\tt QiM} query to find the 
largest {\tt MEC} in the associated mountain $M_i$; this takes $O(\log |M_i|)$ time, 
where $M_i$ may be $O(n)$ in the worst case (see Subsection \ref{subsec:MiM}).  
\qed
\end{proof}

\begin{corollary} \label{cor:mstar}
Suppose {\tt QiC} is restricted to a connected $M^* \subseteq M$, i.e., $v \in M^*$ 
and $M^q \subseteq M^*$. Suppose further that $n^*$ edges of $P$ induce the edges 
in $M$. Then, the preprocessing time and space for {\tt QiC} are $O(n^* \log n^*)$ and 
$O(n^*)$, respectively. The query time will be $O(\log n^*)$.
\end{corollary}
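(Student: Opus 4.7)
The plan is to trace through the preprocessing and query phases of {\tt QiC} (Algorithms \ref{alg:QiCPrep} and \ref{alg:QiCQuery}) and the analysis in Lemma \ref{lem:prep}, verifying that every cost scales with $n^*$ rather than $n$ once everything is restricted to the connected subtree $M^*$. The key enabler is the promise $M^q \subseteq M^*$ coupled with the connectedness of $M^*$: since $v \in M^*$, every path in $M$ from $v$ to a point in $M^q$ lies entirely inside $M^*$, which is precisely what makes the restriction meaningful.

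First I would restrict the sorted radii array ${\tt R}_v$ to radii of {\tt MEC}s centered at nodes of $M^*$, and the set of guiding circles ${\tt S}_v$ to circles whose centers lie in $M^*$. Because the path from $v$ to any such center stays inside $M^*$, condition (ii) of Definition \ref{def1} is unaffected by the restriction. The adapted depth-first (or breadth-first) traversal used in Algorithm \ref{alg:QiCPrep} then visits only the nodes of $M^*$, using only the $n^*$ edges of $P$ that induce the edges of $M^*$, so it runs in $O(n^*)$ time. I would next appeal to Lemma \ref{lem:BoundingS} essentially verbatim, since its proof is a purely geometric area/angle argument at a single point $\alpha \in \chi$ that does not reference the global medial axis at all. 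This gives $|{\tt S}_v| = \sum_{r \in {\tt R}_v}|{\tt S}_v^r| = O(n^*)$; sorting and storing the guiding circles, with their associated mountain identifiers, therefore costs $O(n^* \log n^*)$ time and $O(n^*)$ space. The {\tt QiM} structures on the mountains intersected by $M^*$ sum to $O(n^*)$ time and space exactly as in Subsection \ref{subsec:MiM}, since they cover a disjoint portion of the polygon edges counted by $n^*$.

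The main obstacle I anticipate is re-verifying the correctness Lemma \ref{lem:correctness} under this restriction. I would revisit its argument, which walks along a path in $M$ from $v$ to the unknown center $c_q$ of $C_q$ and exhibits a guiding circle sharing the same mountain as $c_q$. Under the promise $M^q \subseteq M^*$ we have $c_q \in M^*$, and the connectedness of $M^*$ ensures the whole path from $v$ to $c_q$ stays in $M^*$, so every intermediate valley, peak, and guiding circle invoked in the proof already belongs to the restricted structure; the argument therefore carries over unchanged. Finally, the query phase performs a binary search in ${\tt R}_v$ over $O(\log n^*)$ radii, inspects a constant number of guiding circles per radius by Lemma \ref{lem:BoundingS}, and concludes with a constant number of {\tt QiM} queries of cost $O(\log n^*)$ each, yielding the claimed $O(\log n^*)$ query bound.
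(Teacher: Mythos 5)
Your proposal is correct and follows essentially the same route as the paper, whose own proof is just the remark that ${\tt R}$ is restricted to radii of {\tt MEC}s centered on nodes of $M^*$ (so $|{\tt R}| \in O(n^*)$) with the rest following from Lemmas \ref{lem:BoundingS}, \ref{lem:correctness} and \ref{lem:prep}. Your additional check that the connectedness of $M^*$ together with the promise $M^q \subseteq M^*$ keeps every path used in the correctness argument inside the restricted structure is a detail the paper leaves implicit, and it is handled correctly.
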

\begin{proof}
We can restrict our ${\tt R}$ to radii of {\tt MEC}s centered on nodes only in $M^*$. 
Hence $|{\tt R}| \in O(n^*)$. 
Rest of the proof follows from the previous discussion.
\qed
\end{proof}

\begin{proof}[of Theorem\ \ref{thm:qic}]
The proof follows from Lemma\ \ref{lem:correctness} and Lemma\ \ref{lem:prep}.
\qed
\end{proof}

\section{{\tt QMEC} problem for Point Set}
\label{sec:QMEC}
The input consists of a set of points $P=\{p_1, p_2, \ldots, p_n\}$ in 
$\mathbb{R}^2$. The objective is to preprocess $P$ such that given any 
arbitrary query point $q \in \mathbb{R}^2$, the largest circle $C_q$ that
does not contain any point of $P$ but contains $q$, can be reported 
quickly. Observe that, if $q$ does not lie in the interior of the convex 
hull of $P$, then we can easily report a circle of infinite radius passing 
through $q$, that does not overlap with $P$. So, in the rest of this section, 
we shall consider the case where $q$ lies in the  interior of the convex 
hull of $P$. 

Consider the Voronoi diagram of $P$. Observe that the {\tt MEC} centered at 
any {\it Voronoi vertex} touches at least three points of $P$. To simplify 
our presentation, we assume  that {\tt MEC}s 
centered at Voronoi vertices are of distinct sizes. 
In the course of our algorithm, we  treat the Voronoi diagram of $P$, as a 
plane graph  $G$. To keep $G$ within a finite region, we  insert artificial vertices,
one for each unbounded edge in the Voronoi diagram 
of $P$,  so that $G$ is the plane graph of the Voronoi diagram of $P$ with each 
unbounded edge clipped at 
its corresponding artificial vertex. In placing the artificial vertices, we ensure that 
(i) every {\tt MEC} centered at an artificial vertex 
must be larger than all the {\tt MEC}s centered at Voronoi vertices, and 
(ii) 
the {\tt MEC}s centered at artificial vertices do not overlap pairwise within 
the convex hull of $P$. They may overlap outside the convex hull of $P$.
The second condition ensures that there exists no query point $q$, in the convex hull 
of $P
$,
which can be enclosed by more than one {\tt MEC} centered at artificial vertices.
From now onwards, we will use the term {\em vertices of $G$} to collectively refer 
to Voronoi vertices and artificial vertices. We will use both the geometric and 
graph theoretic properties of $G$. In particular, to achieve the subquadratic 
preprocessing time, we use the classical planar separator theorem\ \cite{LT79}. The intuition is as follows.

Consider the following naive approach to solving {\tt QMEC} on points.
Suppose we store the {\tt MEC}s of vertices in $G$ in a {\tt PLiCA} data structure. Suppose, furthermore, that we preprocess each vertex for {\tt QiC} adapted for points set. We note that
here also the {\tt QiC} data structure of a vertex can be implemented using 
guiding circles (cf. Section\ \ref{sec:QiC-1} for details).
Given an arbitrary query point $q$ in 
the convex hull of $P$, we know that $q$ lies in at least one of the {\tt MEC}s centered on vertices\footnote{The {\tt MEC}s on vertices can be thought of as circumcircles of triangles in the Delaunay triangulation of $P$ and therefore the union of these {\tt MEC}s covers the entire convex hull region.}, 
so we can locate one such {\tt MEC}, say ${\tt MEC}_v$ for some vertex $v$. We can
execute the {\it {\tt QiC} query for the point set} to identify $C_q$.  This, unfortunately, will require $O(n^2 \log n)$ time for preprocessing because each {\tt QiC} preprocessing requires $O(n \log n)$ time. Instead, to achieve sub-quadratic bounds on the preprocessing time and space for the {\tt 
QMEC} problem we employ a divide-and-conquer approach
by recursively splitting the vertices of $G$ using the planar separator theorem 
 stated below.

\begin{theorem}\label{Planar-Separator} \cite{LT79}
A planar graph $G$ on $n$ vertices can, in $O(n)$ time, be partitioned into disjoint 
vertex sets $A$, 
$B$, and $W$ such that (i) $|W| \in O(\sqrt{n})$, (ii) $|A|,|B| \le 2n/3$, and (iii) 
there is no edge in $G$ that joins a vertex in $A$ to a vertex in $B$. 
\end{theorem}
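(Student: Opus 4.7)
The plan is to adapt the classical Lipton--Tarjan argument, which reduces the problem to a fundamental-cycle analysis on a low-depth BFS tree. First, I would assume without loss of generality that $G$ is connected (otherwise handle each component separately and combine the separators) and that $G$ is triangulated (adding edges preserves planarity and can only make the separation problem harder, since a separator in the triangulated graph is also a separator in $G$). Pick an arbitrary root $v_0$ and run breadth-first search to obtain layers $L_0, L_1, \ldots, L_r$, where $L_i$ is the set of vertices at graph distance $i$ from $v_0$. Let $\ell$ be the smallest index with $\sum_{i \le \ell} |L_i| > n/2$, so that the ``upper'' set $A_0 = \bigcup_{i < \ell} L_i$ and the ``lower'' set $B_0 = \bigcup_{i > \ell} L_i$ each contain at most $n/2$ vertices.

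If $|L_\ell| \le c\sqrt{n}$ for a suitable constant $c$, take $W = L_\ell$, $A = A_0$, $B = B_0$: the BFS property guarantees that every edge of $G$ lies inside a single level or between consecutive levels, so removing $L_\ell$ disconnects $A$ from $B$, and the balance condition is immediate. Otherwise, a counting argument (the level sizes sum to $n$) lets me find indices $\ell_1 \le \ell \le \ell_2$ with $|L_{\ell_1}|, |L_{\ell_2}| \le c\sqrt{n}$ and $\ell_2 - \ell_1 \le c\sqrt{n}$. Contract $L_0 \cup \cdots \cup L_{\ell_1}$ into a super-root $\tilde{v}_0$ and delete $L_{\ell_2} \cup \cdots \cup L_r$; the resulting graph $G'$ is still planar, and BFS from $\tilde{v}_0$ yields a spanning tree $T'$ of depth $O(\sqrt{n})$.

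The final step invokes the fundamental-cycle lemma: in a triangulated planar graph with a rooted spanning tree $T'$ of depth $d$, any non-tree edge together with the two tree paths from its endpoints to their lowest common ancestor forms a simple cycle of length at most $2d+1$, which by the Jordan curve theorem splits the plane into an interior and an exterior. A charging argument over the triangular faces shows that some fundamental cycle $C$ divides the remaining vertices in a $(1/3, 2/3)$-balanced way; taking $W$ to be the vertices on $C$ together with $L_{\ell_1} \cup L_{\ell_2}$ gives $|W| = O(\sqrt{n})$, and $W$ separates $G$ as required. The hardest step is the existence of a balanced fundamental cycle---a priori one worries that every candidate cycle could be wildly unbalanced. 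I would handle this by walking across dual edges and observing that shifting the cycle over a single triangular face changes the vertex balance by at most one; a discrete intermediate-value argument then forces some cycle to achieve the desired split. All remaining components (BFS, level counting, contraction, and the dual walk, using the planar embedding) run in $O(n)$ time, matching the claimed complexity.
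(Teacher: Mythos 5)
The paper does not prove this statement at all---it is quoted verbatim from Lipton and Tarjan \cite{LT79} and used as a black box---so there is no internal proof to compare against. Your sketch is a faithful reconstruction of the classical Lipton--Tarjan argument (BFS levels, a thin median level or else the contraction to a depth-$O(\sqrt{n})$ spanning tree followed by the balanced fundamental-cycle lemma), and the counting step producing $\ell_1,\ell_2$ with small levels at distance $O(\sqrt{n})$ is correct. The only places where you are asserting rather than proving are routine but real: after contracting $L_0\cup\cdots\cup L_{\ell_1}$ the graph must be re-triangulated before the fundamental-cycle lemma applies; the contracted super-root must be excluded from the final separator and its preimage assigned to one side; and the final $2n/3$ bound requires explicitly recombining the four regions (above $\ell_1$, below $\ell_2$, inside the cycle, outside the cycle), since the cycle only balances the middle band. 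None of these affects the soundness of the approach or the $O(n)$ running time.
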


\newcommand{\gin}{{G_{\text{in}}}}
\newcommand{\win}{{W_{\text{in}}}}
\begin{algorithm}[h!]
\caption{Preprocessing Phase of {\tt QMEC} for set $P$ of points in $\mathbb{R}^2$}
\label{alg:QMECPrep}
\begin{algorithmic}[1]
\REQUIRE {This is a recursive algorithm. In the first call, the input  graph $\gin$ is $G
$. Subsequently, the input graph $\gin$ is a subgraph of $G$.}
\ENSURE{The first call on the entire plane graph $G$ will return a pointer $r$ to 
the root of the separator decomposition tree ${\tt T}$.}
\IF{$\gin$ is empty}
\RETURN {\sc null}.
\ENDIF

\STATE Create a node $v$ of the separator decomposition tree ${\tt T}$.

\STATE Compute the planar separator vertices $\win$ of $\gin$ (cf. Theorem\ 
\ref{Planar-Separator}). Denote the two separated subgraphs as $A$ and $B$.
\STATE Compute a {\tt PLiCA} data structure $\Phi$ on the {\tt MEC}s centered on 
vertices in $\win$. 
\STATE Compute a {\tt PLiCA} data structure $\Theta$ on the {\tt MEC}s centered on 
vertices in $A$.
\STATE Compute a {\tt QiC} data structure corresponding to node $v$. \COMMENT{This {\tt QiC} data structure is built on the two promises that (i) {\it at least} one {\tt MEC} centered on the planar separator vertices $\win$ will enclose the query point $q$, and (ii) $C_q$ is centered on an edge of the plane graph $\gin$.}

\STATE Attach $\Phi$, $\Theta$ and the {\tt QiC} data structures to $v$. 

\STATE $v.\textsc{LeftChild} \leftarrow $ (Call Algorithm\ \ref{alg:QMECPrep} on $A$).
\STATE $v.\textsc{RightChild} \leftarrow $ (Call Algorithm\ \ref{alg:QMECPrep} on $B
$).

\RETURN pointer to $v$.

\end{algorithmic}
\end{algorithm}

\begin{algorithm}[h!]
\caption{Query Phase of {\tt QMEC} for set $P$ of points in $\mathbb{R}^2$}
\label{alg:QMECQuery}
\begin{algorithmic}[1]
\REQUIRE {A query point $q$ inside the convex hull of $P$ and pointer $r$ to root of the separator 
decomposition tree ${\tt T}$.}
\ENSURE{The larges {\tt MEC} $C_q$ that contains $q$  is returned.}

\STATE {\sc ptr} $\leftarrow$ $r$.

\WHILE{{\sc ptr} is not {\sc null}}
\STATE Let $t \in {\tt T}$ be the node pointed by {\sc ptr}.
\STATE Let $G_t$ be the subgraph of $G$ associated with $t$.
\STATE Let $W_t$ be the separator vertices of $G$.
\STATE Let $A_t$ and $B_t$ be the two disconnected subgraphs obtained when vertices in $W_t$ are removed from $G_t$.
\STATE Let $\Phi_t$, $\Theta_t$, and ${\tt QiC}_t$ be the data structures attached to $t$.
\IF{$\exists w \in W_t$  such that $w$ 
encloses $q$ (we check this using $\Phi_t$)}
\STATE $C_q \leftarrow$ (circle returned by querying ${\tt QiC}_t$ with query point $q$).
\RETURN $C_q$.
\ENDIF
\IF{$\exists w'$ in the data structure $\Theta$ associated with $v$ such that $w'$ 
encloses $q$}
\STATE {\sc ptr} = $v.\textsc{LeftChild}$.
\ELSE
\STATE {\sc ptr} = $v.\textsc{RightChild}$.
\ENDIF
\ENDWHILE
\STATE \COMMENT{The execution will not reach this point.}

\end{algorithmic}
\end{algorithm}

We construct a separator decomposition tree ${\tt T}$ as follows (cf. Algorithm\ \ref{alg:QMECPrep} for a detailed pseudocode). The root $r$ of 
${\tt T}$ represents the plane graph $G$. We  attach two {\tt PLiCA} data structures 
$\Phi$ and $\Theta$ at $r$. In $\Phi$, we store {\tt MEC}s centered on the 
$O(\sqrt{n})$ planar separator vertices (denoted by $W$). We also build the {\tt 
QiC} data structure for the node $r$. The details of {\tt QiC} in the current context of a set of points (rather than polygon)  is described in Subsection \ref{sec:QiC-1}. For now, however, we state the {\tt QiC} problem in the current context where $P$ is a set of points. The node $r$ has $O(\sqrt{n})$ {\tt MEC}s corresponding to the $O(\sqrt{n})$ separator vertices and therefore, the {\tt QiC} attached to $r$ comes with two promises. The first promise is  that the query point $q$ will be enclosed by \textit{at least} one of the separator {\tt MEC}s. (Note that this first promise is an adaptation from the context where $P$ is a simple polygon. In that context, because the medial axis of $P$ was a tree, the separator was  a single vertex.) Our second promise is that $C_q$ is centered on some edge of the plane graph $G$ attached to $r$. In the query phase of {\tt QiC}, given a query point $q$, we are to return $C_q$.

The removal of the  vertices in $W$ from $G$ will induce two disjoint subgraphs $A$ and $B$. Without loss of generality, we pick $A$  and build $\Theta$ 
containing the {\tt MEC}s centered at the vertices of $A$. The root $r$ 
has two children $\textsc{LeftChild}$ and $\textsc{RightChild}$ in ${\tt T}$. $\textsc{LeftChild}$ is associated with the 
 subgraph $A$ while  $\textsc{RightChild}$ is associated 
with 
the  subgraph $B$. The two children of $v$ are then
processed recursively. 

In the query phase (cf. Algorithm\ \ref{alg:QMECQuery}), we are given a query point $q$. We find the highest node $t$ in {\tt T} such that (at least) one of the {\tt MEC}s $C$ stored in the associated $\Phi$ encloses $q$. We find $t$ by a traversal from the root node of {\tt T}. Let $v$ be the center of $C$. The point $v$ is a separator vertex in the graph $G_t$ associated with $t$. Recall that each separator vertex has a {\tt QiC} data structure (restricted to $G_t$) associated with it. We prove subsequently that when we query the {\tt QiC} data structure attached to $v$ with the query point $q$, we will indeed obtain the largest {\tt MEC} $C_q$ that encloses $q$. 

We now turn our attention to analyzing Algorithm\ \ref{alg:QMECPrep} and Algorithm\ \ref{alg:QMECQuery}. We begin with some important lemmata.

\begin{lemma}\label{lem:cycle} 
Consider any cycle $H$ in the Voronoi diagram of $P$. Let $C_H$ be an {\tt 
MEC} centered at some point on $H$. Then, there exists another {\tt MEC} 
$C'_H$ centered at some other point on $H$ that does not properly overlap 
with $C_H$. 
\end{lemma}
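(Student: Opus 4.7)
My plan is to exploit the topology of the Voronoi diagram: any cycle $H$ in the Voronoi diagram of $P$ bounds a region $R(H)$ that necessarily contains at least one site of $P$. Indeed, each edge of $H$ is a Voronoi bisector between two sites, and the one whose cell sits on the interior side of that edge must lie inside $\overline{R(H)}$ by convexity of Voronoi cells. So I can pick a specific site $p \in P \cap R(H)$ adjacent to the Voronoi edge (or vertex) containing the center $c$ of $C_H$, with $|cp| = r$, where $r$ is the radius of $C_H$.

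Given such a $p$, I will use the ray from $c$ through $p$, extended beyond $p$, to locate the center of $C'_H$. Since the closed segment $[c,p]$ lies entirely inside the convex cell of $p$, and this cell is contained in $\overline{R(H)}$, the ray does not leave $R(H)$ before reaching $p$. Past $p$, the ray continues inside the bounded region $R(H)$ until it first exits at some point $c' \in H$, with $p$ lying strictly between $c$ and $c'$. I then set $C'_H$ to be the MEC centered at $c'$, with radius $r(c')$. The crucial inequality is
\[
|cc'| \;=\; |cp| + |pc'| \;=\; r + |pc'| \;\ge\; r + r(c'),
\]
where $|pc'| \ge r(c')$ because $p \in P$ cannot lie in the interior of the empty circle $C'_H$. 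This says the centers of $C_H$ and $C'_H$ are at least as far apart as the sum of their radii, so they do not properly overlap. Since $p \neq c$ sits strictly between $c$ and $c'$, we also get $c' \neq c$.

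The main obstacle I anticipate is cleanly justifying the containment claims: that $R(H)$ contains a site of $P$, and that the cell of the chosen site $p$ lies in $\overline{R(H)}$ so that the segment $[c,p]$ stays inside $\overline{R(H)}$. Both follow from convexity of Voronoi cells together with the observation that Voronoi edges cannot pass through the interior of another cell --- so if the cell of $p$ is on the interior side of $H$ at some edge, it cannot escape $R(H)$ elsewhere without crossing $H$ at another edge, which would instead take us out of the cell of $p$. Handling the case where $c$ is a Voronoi vertex (rather than in the interior of an edge) is a minor complication: here $H$ uses two edges incident to $c$, and I simply pick $p$ to be any site whose cell is sandwiched in the interior wedge at $c$; the rest of the argument proceeds unchanged.
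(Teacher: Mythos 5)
Your proposal is correct and follows essentially the same route as the paper: locate a site $p\in P$ enclosed by the cycle $H$, intersect the line (ray) from the center $c$ through $p$ with $H$ a second time at $c'$, and use the emptiness of both circles together with the collinearity of $c$, $p$, $c'$ to get $|cc'|\ge r+r(c')$. Your version merely makes explicit the choice of $p$ (a site whose cell lies on the interior side of $H$) and the distance inequality that the paper leaves implicit.
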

\begin{proof}
Clearly, any cycle in the Voronoi diagram of $P$ must contain at 
least one point from $P$ inside it. Let $p \in P$ be such a  point
that lies inside the cycle $H$ (see Figure~\ref{fig:nspa}). Let $C_H$
be any {\tt MEC} centered at some point on $H$; let $c_H$ be the center of $C_H$.
Consider the line connecting $c_H$ and $p$. It intersects $H$ at
another point $c'_H$. It is easy to see that the {\tt MEC} $C_H'$, centered
at $c'_H$, will not properly overlap $C_H$ as, otherwise,
$p$ will lie inside both $C_H$ and $C'_H$. 
\end{proof}
\begin{figure}[h]
\begin{center} 
\includegraphics[height=3in,clip=true,trim=0 50 0 50]{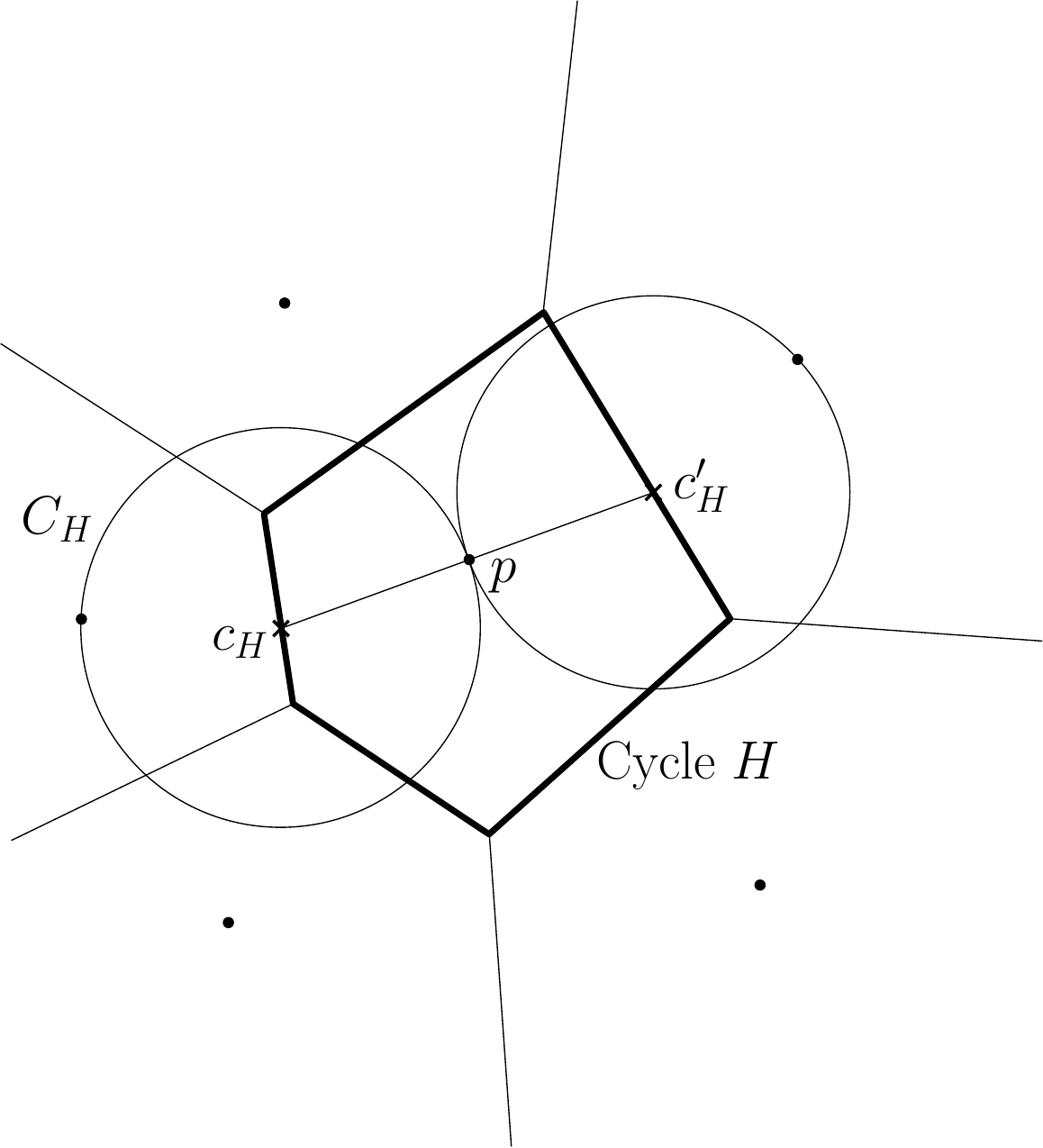}\\

\end{center}

\caption{Illustration for Lemma~\ref{lem:cycle}}
\label{fig:nspa}
\end{figure}

\begin{figure}[h]
\begin{center} 
\includegraphics[width=\textwidth,clip=true,trim=0 100 50 180]{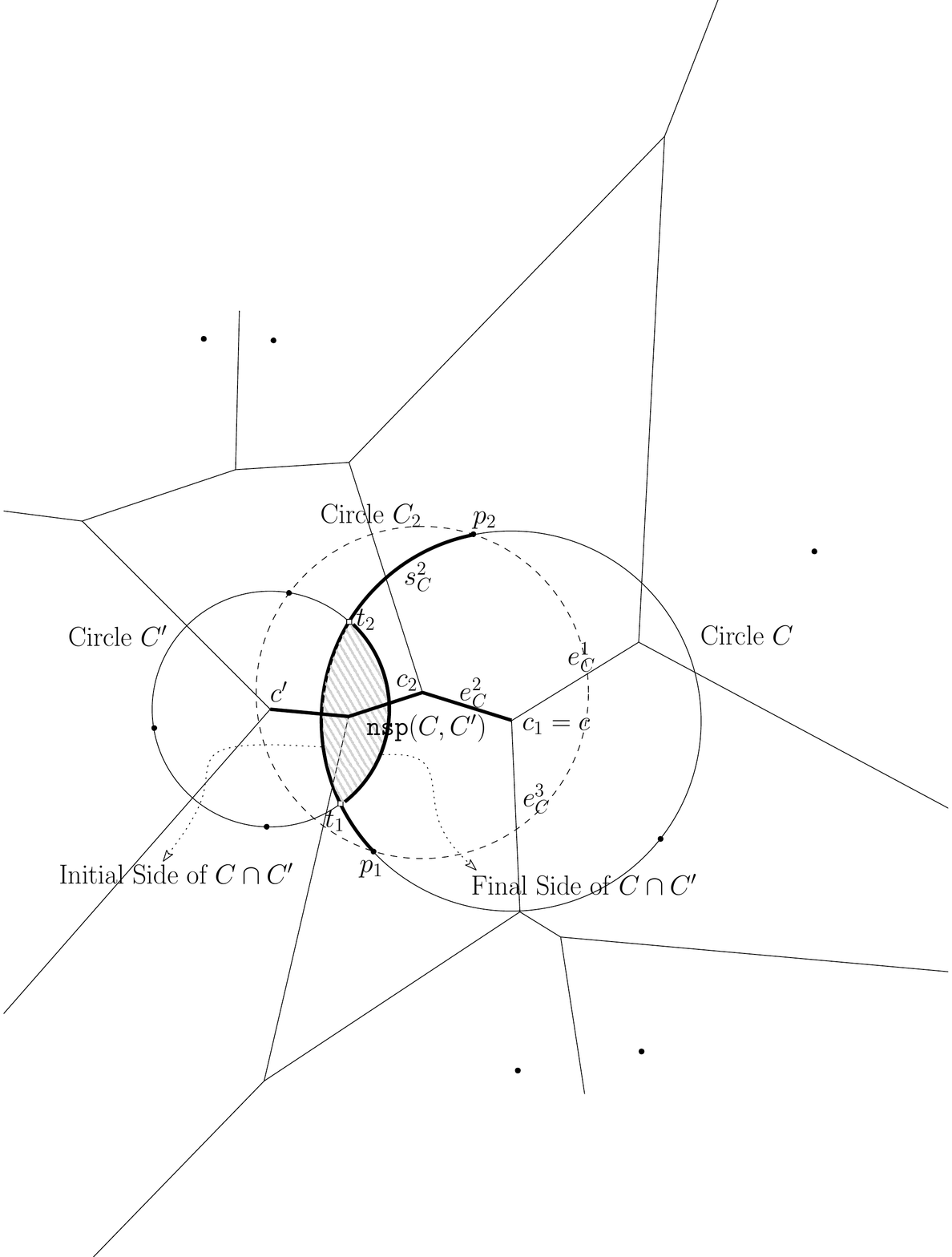}\\
 \end{center}
\caption{Illustration of $\Pi(c,c')$ in unique path lemma} 
\label{fig:nsp}
\end{figure}

\begin{lemma}(Unique Path Lemma.)
If $C$ and $C'$ are two distinct but overlapping {\tt MEC}s with centers
at $c$ and $c'$, respectively, then there is a unique path $\Pi(c,c')$ 
from $c$ to $c'$ in  the Voronoi diagram of $P$ such that every {\tt MEC} centered 
on that path encloses $C\cap C'$.
\end{lemma}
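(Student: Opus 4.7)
Let $R := C \cap C'$. Because $C$ and $C'$ overlap, $R$ has positive area. Consider the set $T$ of all points $x$ of the Voronoi diagram of $P$ such that $R \subseteq {\tt MEC}_x$, viewed as a subgraph of the plane graph $G$ of the Voronoi diagram; note that $c, c' \in T$. My plan is to establish (i) $T$ is connected, and (ii) $T$ contains no cycle of $G$. Once these are in hand, $T$ is a subtree of $G$ containing $c$ and $c'$, so the unique $c$-to-$c'$ path $\Pi(c, c')$ inside $T$ is well defined, and every ${\tt MEC}$ centered on this path encloses $R$ by the very definition of $T$. Uniqueness of the path across all of $G$ also follows: any other path from $c$ to $c'$ with the enclosure property must lie inside $T$, and two distinct such paths would close up into a cycle inside $T$, contradicting~(ii).

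Step~(ii) follows immediately from Lemma~\ref{lem:cycle}. Suppose, for contradiction, that $T$ contains a cycle $H$. Any two ${\tt MEC}$s centered on $H$ both contain $R$, and because $R$ has positive area these two circles properly overlap (their interiors share the interior of $R$). But Lemma~\ref{lem:cycle} asserts that on every cycle in the Voronoi diagram there exist two ${\tt MEC}$s that do not properly overlap, a contradiction. Hence $T$ is acyclic.

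Step~(i) is the heart of the argument, and I would attack it by adapting Lemma~\ref{l1} to the Voronoi setting. Assume for contradiction that $T$ is disconnected, with $c$ and $c'$ in different components, and pick any path from $c$ to $c'$ in the Voronoi diagram. Somewhere along that path we must encounter an intermediate point $t$ with ${\tt MEC}_t \not\supseteq R$. Without loss of generality $t$ lies in the relative interior of a Voronoi edge bisecting two sites $p_i, p_j \in P$, so ${\tt MEC}_t$ is tangent to $P$ precisely at $p_i$ and $p_j$. The chord $\overline{p_i p_j}$ partitions the plane into two open half-planes $H_L$ and $H_R$. Mimicking Equations~(\ref{eqn:clm}) and (\ref{eqn:sym}) from the proof of Lemma~\ref{l1}, I would argue (up to relabelling) that $C \subseteq H_L \cup {\tt MEC}_t$ and $C' \subseteq H_R \cup {\tt MEC}_t$, using the fact that both $C$ and $C'$ are empty of the sites of $P$ and hence cannot wrap from one side of $\overline{p_i p_j}$ to the other without passing through ${\tt MEC}_t$. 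Intersecting the two inclusions yields $R = C \cap C' \subseteq {\tt MEC}_t$, contradicting the choice of $t$.

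The main obstacle lies in step~(i). Lemma~\ref{l1} exploited the fact that the medial axis of a simple polygon is a tree: removing the obstructing point $t$ immediately split the medial axis into two disjoint subtrees, one containing $c$ and the other $c'$. The Voronoi diagram of a point set, however, contains cycles, so the combinatorial cut is no longer available, and the separation between the $C$-side and the $C'$-side has to be effected by the geometric chord $\overline{p_i p_j}$ alone. Verifying rigorously that $C$ and $C'$ genuinely sit on opposite sides of this chord modulo ${\tt MEC}_t$---and are not, for instance, drawn around $p_i$ or $p_j$ by the presence of other sites of $P$---will demand the most careful case analysis, and is the step where I expect the proof to be most delicate.
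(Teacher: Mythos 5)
Your uniqueness argument is sound and is essentially the paper's: two distinct enclosing paths would close into a cycle on which every {\tt MEC} contains $C\cap C'$ and hence every pair properly overlaps, contradicting Lemma~\ref{lem:cycle}. The gap is in your existence step~(i), and it is not merely a delicate point to be checked --- the approach cannot work as formulated. If your adaptation of Lemma~\ref{l1} were valid for an \emph{arbitrary} point $t$ on an \emph{arbitrary} $c$--$c'$ path, it would show that every point of every $c$--$c'$ path in the Voronoi diagram has an {\tt MEC} containing $R$, i.e.\ that the union of all $c$--$c'$ paths lies in $T$. Since the Voronoi diagram of a point set generically contains two distinct paths between $c$ and $c'$ (going around a bounded Voronoi cell), $T$ would then contain a cycle, contradicting the acyclicity you establish in step~(ii). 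So your two steps are mutually inconsistent, which pinpoints where the separation argument breaks: the chord $\overline{p_ip_j}$ of ${\tt MEC}_t$ does not separate the plane, and for a $t$ lying on a ``detour'' path far from $C$ and $C'$ there is no reason that $C$ and $C'$ sit on opposite sides of the line through $p_i p_j$ modulo ${\tt MEC}_t$; both may lie entirely in one open half-plane, killing the inclusion $C'\subseteq H_R\cup{\tt MEC}_t$. In the polygon setting of Lemma~\ref{l1} the chord genuinely cuts the bounded, simply connected domain $P$ in two, and the obstructing point $t$ combinatorially disconnects the medial-axis tree; neither of these facts survives the passage to a cyclic Voronoi diagram, which is exactly why a ``pick any path'' argument fails.

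What is missing is a mechanism that \emph{selects} the right path rather than trying to certify all of them. The paper's proof does this constructively: the sites of $P$ touching $C$ cut its boundary into arcs, each arc corresponding to one Voronoi edge incident on $c$; since $C'$ is empty, both intersection points $t_1,t_2$ of $\partial C$ and $\partial C'$ lie on a single such arc, which designates a unique ``next'' edge; moving along that edge one shows $C\cap C'\subseteq C\cap C_2$ (indeed $C_i\cap C'\subset C_{i+1}\cap C'$ at every step), and Lemma~\ref{lem:cycle} guarantees the greedy walk never cycles and terminates at $c'$. You should replace your step~(i) with this local, monotone construction; your step~(ii) can then be kept verbatim for uniqueness.
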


\begin{proof}
 The structure of the proof is as follows. We provide a procedure that 
constructs a path $\Pi(c,c')$ from $c$ to $c'$ along the Voronoi
edges, and ensure that every {\tt MEC} centered on that path encloses $C
\cap C'$. As a consequence of Lemma \ref{lem:cycle}, the path does not
form an intermediate cycle and terminates at $c'$. Finally, we again
use Lemma~\ref{lem:cycle} to show that no path $\Pi'$, other than
$\Pi(c,c')$, exists between $c$ and $c'$ such that every {\tt MEC} centered
on $\Pi'$ contains $C \cap C'$. Throughout this proof, we closely follow
Figure~\ref{fig:nsp} in order to keep the arguments intuitive. To keep
arguments simple, we assume that $c$ and $c'$ are Voronoi vertices. The arguments 
hold 
even
when $c$ and $c'$ are not Voronoi vertices. 

Let $\alpha$ be the number of points in $P$ that $C$  touches. These
$\alpha$ points partition  $C$ into $\alpha$ arcs. The degree of
the corresponding Voronoi vertex $c$ (center of $C$) is also $\alpha$ because each
adjacent pair of points of $P$ on the boundary of $C$ will
induce a Voronoi edge incident on $c$ and vice verse. These Voronoi
edges and their corresponding arcs are denoted by $e^i_C$ and $s^i_C$,
for $1 \le i \le \alpha$.

Consider the other  {\tt MEC} $C'$ ($\neq C$ and centered at a vertex $c'$)
that overlaps with $C$. $C'$ intersects $C$ at two points $t_1$ and
$t_2$. Since $C'$ is empty, both $t_1$ and $t_2$ must lie on one of the $\alpha$ arcs 
of
$C$. Let us name this arc by $s^j_C$.
Consider the edge $e^j_C = (c,c_2)$ that corresponds to the arc
$s^j_C$. The other end of $e^j_C$, i.e., the vertex $c_2$, is called
the {\em next step from $c$ toward $c'$} and denote it as
$\mathbf{ns}(c,c')$. Consider the pseudocode in Procedure~\ref{alg:pi} that generates 
the 
path
denoted by $\Pi(c,c')$:

\begin{algorithm}[h!]
\caption{$\Pi(c,c')$ Computation}
\label{alg:pi}
\begin{algorithmic}[1]
\STATE $\Pi(c,c') \leftarrow (c)$
\STATE $\mathbf{next} \leftarrow c$
\REPEAT
\STATE $\mathbf{next} \leftarrow \mathbf{ns}(\mathbf{next},c')$
\STATE Append $\mathbf{next}$ to $\Pi(c,c')$
\UNTIL{$\mathbf{next}$ equals $c'$.}
\COMMENT{Note that this is the only terminating condition.}
\end{algorithmic}
\end{algorithm}

We now show that (i) $\Pi(c,c')$ is our desired path, and 
(ii) there exists no other path satisfying the {\it unique path lemma}.

{\bf Proof of correctness:} 
Algorithm\ \ref{alg:pi} constructs a path $\Pi(c,c') = (c_1=c, c_2, \ldots,
c_i, c_{i+1}, \ldots, c')$, where each $c_i$ is a vertex in the Voronoi diagram of $P$. Let $C_2$ denote the
{\tt MEC} centered at $c_2$. If $C_2=C'$, then the procedure
terminates and, as required, every {\tt MEC} centered on the edge $(c,c_2)$ 
encloses $C \cap C_2 = C \cap C'$.  

Therefore, we consider the case where $C_2 \neq C'$. We need to prove 
$C \cap C' \subseteq C \cap C_2$.

Let $p_1, p_2 \in P$ be the points at which $C$ and $C_2$ intersect; $p_1,p_2$ are
the end points of the arc $s^j_C$ that defines the move of the next step 
toward $c'$ (in Figure \ref{fig:nsp}, $j$ is $2$). By
definition, $t_1$ and $t_2$ lie on the arc $s^j_C$. Notice that $C
\cap C'$ (shaded region in Figure~\ref{fig:nsp}) is shaped like a rugby
ball with $t_1$ and $t_2$ at its end-points. One side of $C \cap C'$
(called the {\it initial side}) is in $C$ and the other side 
(called the {\it final side}) is in $C'$. Clearly, $t_1$ and
$t_2$ are inside (or on the boundary of) every {\tt MEC} centered on the
edge $e^j_C$. Otherwise, as we go from $C$ to $C_2$, a
circle would be present that must touch the final side of $C \cap C'$, but that would 
mean
that we have either 
\begin{itemize}
\item reached $C'$, which contradicts our assumption that $C_2 \neq C'$, 
\item or found a {\tt MEC} that contains $C'$, which contradicts the fact
that $C'$ is itself a {\tt MEC}.
\end{itemize}

We now make two observations: (i) $C$ touches the initial side of $C
\cap C'$, but (ii) no other {\tt MEC} centered on $e^j_C$ ($C_2$ in particular)
touches the final side of $C \cap C'$. 

Observation (i) is obvious. We prove Observation (ii) by contradiction. Let  $C^*$ be an {\tt MEC} centered on $e^j_C$ that touches the final side
of $C \cap C'$ at, say, some point $t^*$. It is easy to see that $C^*$
will contain $C'$ because $C^*$ touches $C'$ at the point $t^*$ and also $C^*$ 
contains 
$t_1$
and $t_2$, which are also on the boundary of $C'$. Thus we have a contradiction that 
$C'$ 
is an {\tt MEC}. 
Thus, it is clear that $C \cap C' \subseteq C \cap C_2$.  

Consider two adjacent vertices $c_i$ and $c_{i+1}$ along $\Pi(c,c')$
with {\tt MEC}s $C_i$ and $C_{i+1}$ centered on them, respectively. The
above argument can be easily extended to give us the following: 
$$C_i \cap C' \subset C_{i+1} \cap C'.$$ Therefore, we can conclude
that every {\tt MEC} along $\Pi(c,c')$ encloses $C \cap C'$. 
Lemma~\ref{lem:cycle} suggests that $\Pi(c,c')$ does not
form a cycle. The only stopping condition is when we actually reach
$c'$. So $\Pi(c,c')$ terminates at $c'$ in at most $O(n)$ steps. 

{\bf Proof of uniqueness:} To complete the proof of this lemma, we 
must show that $\Pi(c,c')$ is the only required path. For the sake 
of contradiction, assume that there is another path $\Pi'$ such that 
every {\tt MEC} centered on $\Pi'$ contains $C \cap C'$. Then, there are 
two distinct paths from $c$ to $c'$ such that every {\tt MEC} centered 
on both the paths contain $C \cap C'$. Clearly, there must be a cycle 
when the two paths are combined. From Lemma~\ref{lem:cycle}, we know 
that there are pairs of {\tt MEC}s in the cycle that do not overlap on 
each other. This is a contradiction. Thus $\Pi(c,c')$ is the only 
required path.  
\qed
\end{proof}

Recall that,
given a query point $q$, we locate $C_q$ by traversing the tree ${\tt T}$ from its root 
node $r$. At each node $t$ on the search path, we search in the $\Phi_t$ data 
structure 
to check whether $q$ lies in an {\tt MEC}s corresponding to a separator vertex of 
node $t$.
If there exists an ${\tt MEC}_v \in \Phi_t$ containing $q$, then we perform {\tt QiC} 
query in $t$ to identify $C_q$.   
Otherwise,  we search $q$ in $\Theta_t$, 
associated with the partition $A_t$. Now, if there exists an ${\tt MEC}_v \in 
\Theta_t$ containing $q$, we proceed towards the left child of $t$, otherwise we proceed 
towards the right child.
\begin{lemma}\label{lz}
The search with $q$ must stop at a node $t$ of {\tt T}, and outputs a vertex $v$ in the plane graph $G_t$ associated with $t$ such that (i) $q \in {\tt MEC}_v$ and (ii) $C_q$ is centered on some edge of $G_t$.
\end{lemma}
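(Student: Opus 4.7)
The plan is to prove the lemma by establishing and maintaining a loop invariant throughout the traversal of the separator decomposition tree ${\tt T}$. Specifically, I would show that at every iteration of the while loop in Algorithm~\ref{alg:QMECQuery}, the center $c_q$ of $C_q$ lies on some edge (or vertex) of the plane graph $G_t$ currently associated with the pointer. This invariant holds trivially at the root, where $G_t = G$ is the entire Voronoi diagram and $C_q$ is by definition centered somewhere on $G$. The lemma's conclusion (ii) then follows the moment the loop halts, while conclusion (i) is immediate from the stopping test.

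Given the invariant, termination is easy to argue. If the loop fails to stop at $t$, then $c_q$ cannot coincide with any separator vertex in $W_t$, since such a vertex would be a $w$ whose ${\tt MEC}_w$ contains $q$ and would trigger the stop. Hence $c_q$ must sit in one of the two subgraphs $A_t$ or $B_t$, and the descent strictly shrinks the associated graph. Since the recursion bottoms out at single-vertex leaves (each a trivial separator whose MEC contains $q$ whenever $c_q$ reaches it), the traversal cannot descend indefinitely.

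The main obstacle is the inductive step showing that the loop always descends into the side that actually contains $c_q$. Suppose without loss of generality $c_q \in B_t$ but the algorithm descends into $A_t$; then some ${\tt MEC}_{c'}$ with $c' \in A_t$ must enclose $q$. I would invoke the Unique Path Lemma on the pair $(C_q, {\tt MEC}_{c'})$: because they share $q$, they overlap, so there is a unique Voronoi-diagram path $\Pi(c_q, c')$ along which every MEC encloses $C_q \cap {\tt MEC}_{c'}$, which in particular contains $q$. Because $W_t$ separates $A_t$ from $B_t$ in $G_t$, this path must pass through at least one separator vertex $w \in W_t$, and then ${\tt MEC}_w$ contains $q$ --- contradicting the fact that the loop did not already stop at $t$. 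Hence the descent direction always matches the side containing $c_q$, and the invariant is preserved.

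One subtle point I would flag at the outset is that the Unique Path Lemma pertains to the full Voronoi diagram $G$, whereas the separator $W_t$ only separates the current $G_t$. I would justify viewing the recursive subgraphs as induced subgraphs of $G$ on the appropriate vertex sets, so that the path $\Pi(c_q, c')$ in $G$ between two vertices of $G_t$ remains in $G_t$ and therefore must cross $W_t$. Once this structural point is settled, the inductive step goes through and the three conclusions of Lemma~\ref{lz} follow.
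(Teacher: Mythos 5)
Your overall strategy is the right one, and you have correctly identified the Unique Path Lemma as the engine of the proof; in that sense your argument runs parallel to the paper's. But there is a genuine gap exactly at the point you flag as ``subtle.'' You claim that because each $G_t$ is an induced subgraph of $G$, the unique path $\Pi(c_q,c')$ between two vertices of $G_t$ ``remains in $G_t$ and therefore must cross $W_t$.'' Being an induced subgraph gives you nothing of the sort: a path in $G$ between two vertices of $G_t$ is free to leave $G_t$, travel through vertices that were removed as separators at higher levels of ${\tt T}$, and re-enter, thereby connecting $A_t$ to $B_t$ without ever touching $W_t$. The separator property of $W_t$ is a statement about $G_t$ only, and the Unique Path Lemma constructs its path geometrically in the full Voronoi diagram with no regard for the decomposition. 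As stated, your inductive step does not go through.

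The missing ingredient --- which is precisely how the paper closes this hole --- is to use the \emph{conclusion} of the Unique Path Lemma, not just its existence statement: every {\tt MEC} centered on $\Pi(c_q,c')$ encloses $C_q\cap{\tt MEC}_{c'}$ and hence encloses $q$. Now observe that the only vertices of $G$ adjacent to $G_t$ but outside it are separator vertices of ancestors of $t$ (this is what the recursive removal of separators guarantees). So if $\Pi(c_q,c')$ ever left $G_t$, it would have to pass through some such vertex $v^*$, and then $q\in{\tt MEC}_{v^*}$ would have caused the search to halt at the corresponding ancestor $t^*$ --- contradicting the assumption that the traversal reached $t$. Only after this argument may you conclude that $\Pi(c_q,c')$ is confined to $G_t$ and therefore must cross $W_t$, completing your inductive step. (A further small point worth a sentence in a full write-up: your termination argument for conclusion~(i) implicitly needs the fact that the {\tt MEC}s centered at Voronoi vertices cover the convex hull of $P$ --- they are the Delaunay circumcircles --- which the paper invokes explicitly; without it, nothing guarantees the search ever finds a vertex whose {\tt MEC} contains $q$.)
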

\begin{proof}
Because {\tt MEC}s on Voronoi vertices are circumcircles of triangles in the Delaunay triangulation of $P$,  the union of these {\tt MEC}s covers the entire convex hull region. Since every ${\tt MEC}$ centered on a Voronoi vertex of $G$ is a separator for some node in {\tt T}, the proof of (i) follows. 

Suppose some {\tt MEC} $C'$ centered somewhere outside $G_t$ encloses $q$.
From the planar decomposition of $G$ down to $G_t$, it is clear that the (collective) neighborhood $\Gamma(G_t)$ of $G_t$ consists of vertices that appear in the separator vertices associated with some ancestor of $t$. Therefore, by the Unique Path Lemma, there must exist a vertex $v^* \in \Gamma(G_t)$ such that the {\tt MEC} centered on $v^*$ encloses $q$. Since $v^*$ is associated with some ancestor $t^*$ of $t$, the search path in ${\tt T}$ must have stopped at $t^*$  instead of coming all the way to $t$, which establishes a contradiction.
\qed
\end{proof}

Now that Lemma\ \ref{lz} is established, we can easily see that if the search path in {\tt T} for a query point $q$ stops at node $t \in {\tt T}$, then the two promises required for ${\tt QiC}_t$ data structure are fulfilled. Therefore, assuming {\tt QiC} is correctly designed (established in Section\ \ref{sec:QiC-1}), we get the following lemma. 
\begin{lemma}\label{lem:qmeccorrect}
Algorithm \ref{alg:QMECPrep} preprocesses a set $P$ of points such that given a query point $q$, Algorithm\ \ref{alg:QMECQuery} can be used to correctly finds $C_q$.
\end{lemma}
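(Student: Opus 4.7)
The plan is to decompose the correctness argument into two stages: first, show that the traversal in Algorithm\ \ref{alg:QMECQuery} halts at a node $t \in {\tt T}$ whose attached data structures are applicable to the query $q$; and second, show that invoking ${\tt QiC}_t$ at this $t$ recovers $C_q$ exactly.

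For the first stage, I would trace the while loop. At each visited node $t$, the algorithm first consults $\Phi_t$ to test whether any {\tt MEC} centered on a separator vertex in $W_t$ encloses $q$; if yes, we halt, and otherwise we consult $\Theta_t$ to decide whether to descend into the child associated with $A_t$ (on a hit) or the child associated with $B_t$ (on a miss). Since ${\tt T}$ is finite and every non-terminating iteration strictly descends, the loop must stop. Lemma\ \ref{lz} guarantees that when it stops at a node $t$, the two conditions (I)~at least one ${\tt MEC}$ in $\Phi_t$ encloses $q$, and (II)~$C_q$ is centered on some edge of $G_t$, both hold simultaneously.

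For the second stage, I note that conditions (I) and (II) above are precisely the two promises under which the ${\tt QiC}_t$ data structure is built during preprocessing (cf.\ line~7 of Algorithm\ \ref{alg:QMECPrep}). Appealing to the correctness of the point-set variant of {\tt QiC} (established in Section\ \ref{sec:QiC-1}), the invocation of ${\tt QiC}_t$ with input $q$ returns the largest empty circle containing $q$, which Algorithm\ \ref{alg:QMECQuery} reports as its answer.

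The main obstacle is not the direct invocation of Lemma\ \ref{lz} but the soundness of the descent rule based on $\Theta_t$ alone: when neither $\Phi_t$ nor $\Theta_t$ reports a hit, the algorithm commits to the right child, and we must be sure that $C_q$ is centered on an edge of $B_t$ in that case. This rests on two facts already at our disposal: the {\tt MEC}s centered at Voronoi vertices cover the convex hull of $P$ (so some vertex-centered {\tt MEC} must enclose $q$), and the Unique Path Lemma, which forbids $c_q \in B_t$ from coexisting with an ${\tt MEC}$ on $A_t$ that encloses $q$ without some separator ${\tt MEC}$ also enclosing $q$. With these invariants, the descent rule is sound, and by a straightforward induction on the depth of ${\tt T}$ the correct stopping node $t$ is reached, completing the proof.
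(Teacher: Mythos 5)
Your argument is correct and follows essentially the same route as the paper: the paper also reduces the lemma to Lemma~\ref{lz} (the search halts at a node $t$ where both promises of ${\tt QiC}_t$ hold) combined with the correctness of the point-set {\tt QiC} from Section~\ref{sec:QiC-1}. Your extra paragraph justifying the descent rule via the covering property and the Unique Path Lemma is a sound elaboration, but it re-derives what the paper has already packaged into Lemma~\ref{lz} rather than adding a new ingredient.
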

In the next subsection we describe both the preprocessing and query phases of {\tt QiC} before proving time and space bounds in subsection\ \ref{cc}.

\subsection{{\tt QiC} data structure for points set} \label{sec:QiC-1}

The {\tt QiC} data structure for the points set case (attached to nodes in {\tt T}) largely mimics the simple polygon case. Note that each $t \in {\tt T}$ has a ${\tt QiC}$ data structure attached to it. We reiterate that, while the query point $q$ is promised to lie in a particular {\tt MEC} in the polygon case, in the points set case, $q$ is promised to lie in at least one of the {\tt MEC}s centered on a node in $W_t$. 
The preprocessing and query algorithms are given as self explanatory pseudocode in Algorithm\ \ref{alg:qicprep} and Algorithm\ \ref{alg:qicquery}, respectively. The time and space bounds of the preprocessing phase follow from the following lemma.

\begin{algorithm}[h!]
\caption{Preprocessing for {\tt QiC} attached to node $t \in {\tt T}$.}
\label{alg:qicprep}
\begin{algorithmic}[1]
\REQUIRE{A node $t \in {\tt T}$ and the plane graph $G_t$ attached to $t$ along with separator vertices $W_t$ and parts $A_t$ and $B_t$.}
\STATE Compute ${\tt R} = \{r \mid \exists \text{ an {\tt MEC} of radius $r$ centered on a vertex of $G_t$}\}$.
\FORALL{$v \in W_t$}
\STATE Compute ${\tt S}_v = \{C \mid \text{$C$ is an {\tt MEC} that fulfills the following}\}$:
\begin{enumerate}
\item[] 1. $C$ overlaps with ${\tt MEC}_v$,
\item[] 2. Radius of $C$ is in ${\tt R}$, and
\item[] 3. Every {\tt MEC} in the unique path from ${\tt MEC}_v$ to $C$ has radius no more than that of $C$.
\end{enumerate}
\STATE Sort $S_v$
\STATE Compute BFS tree $\Lambda_v$ rooted at $v$ with centers of {\tt MEC}s  in ${\tt S}_v$ as nodes.
\STATE For each node $\nu$ in $\Lambda_v$, there is at most one edge $\eta$ incident on $\nu$ such that the {\tt MEC}s centered on $\eta$ strictly grow in size (starting from $\nu$). Mark $\eta$ red.
\ENDFOR
\STATE The {\tt QiC} data structure associated with $t$ consists of  ${\tt S}_v$ and $\Lambda_v$ for all $v \in W_t$.
\end{algorithmic}
\end{algorithm}

\begin{algorithm}[h!]
\caption{Query phase for {\tt QiC} attached to node $t \in {\tt T}$.}
\label{alg:qicquery}
\begin{algorithmic}[1]
\REQUIRE{{\tt QiC} data structure and a query point $q$ that meets the two promises.}
\ENSURE{The largest {\tt MEC} containing $q$ is computed and returned.}
\STATE Using the {\tt PLiCA} data structure $\Phi$ attached to $t$, we find a $v \in W_t$  such that ${\tt MEC}_v$ encloses $q$.
\STATE Perform a binary search on ${\tt S}_v$ to find the radius $r_\text{max}$ of the largest {\tt MEC} in ${\tt S}_v$ that encloses $q$. Also compute  \[\mathcal{C} = \{C \mid (C \in {\tt S}_v) \wedge (\text{radius of } C = r_\text{max}) \wedge (C \text{ encloses $q$ }) \}.\]
\STATE $C_{\text{max}} \leftarrow C$, where $C \in \mathcal{C}$ is chosen arbitrarily.
\FORALL{$C \in \mathcal{C}$}
\STATE Let $c$ be the center of $C$. Recall that $c$ is a node in $\Lambda_v$.
\STATE Let $C^*$ be the largest {\tt MEC} centered on the red edge in $\Lambda_v$ incident on $c$. If no red edge is incident on $c$, assign $C^* \leftarrow C$.
\IF{$C^*$ is larger than $C_{\text{max}}$}
\STATE $C_{\text{max}} \leftarrow C^*$.
\ENDIF
\ENDFOR
\RETURN $C_{\text{max}}$.
\end{algorithmic}
\end{algorithm}

\begin{lemma} \label{lem:BoundingS2}
For any  node $t \in {\tt T}$, any vertex $v$ in $G_t$ and any $r \in {\tt R}$, we define  $S_v^r \triangleq \{C \mid C \in S_v \wedge \text{ and radius of $C$ is $r$}\} $. We claim that $|S_v^r|$ is
bounded by a constant. 
\end{lemma}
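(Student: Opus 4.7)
The plan is to mimic the proof of Lemma~\ref{lem:BoundingS} from the simple-polygon case, adapting its geometric arguments to the Voronoi diagram of the point set $P$. The definition of ${\tt S}_v$ in Algorithm~\ref{alg:qicprep} is structurally identical to Definition~\ref{def1}, with the unique path in $G$ (guaranteed by the Unique Path Lemma) playing the role of the path in $M$; this correspondence is what makes the adaptation natural.

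First I would observe that every $C \in {\tt S}_v^r$ must overlap with ${\tt MEC}_v$, and by condition~3 of the definition of $S_v$, $r_v \leq r$. Therefore each such $C$ lies in the disk $\chi$ of radius $r_v + 2r \leq 3r$ centered at $v$, reducing the problem to bounding, for any $\alpha \in \chi$, the cardinality of ${\tt S}_v^r(\alpha) = \{C \in {\tt S}_v^r : \alpha \in C\}$ by a constant.

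Next, for each $C \in {\tt S}_v^r(\alpha)$ with center $c$, the empty-circle property forces $C$ to touch at least two distinct points $p_1, p_2 \in P$ on its boundary. I would establish that $v$ and $c$ lie on opposite sides of the chord $[p_1, p_2]$: otherwise, sliding $c$ along the Voronoi edge containing it---which is a segment of the perpendicular bisector of $p_1 p_2$---in the direction of $v$ along the unique $v$-to-$c$ path would produce a strictly larger empty circle, contradicting condition~3. This separation yields $\angle p_1 \alpha p_2 \geq \pi/2$ by the same inscribed-angle reasoning as in Lemma~\ref{lem:BoundingS}, and the angular sectors subtended at $\alpha$ by distinct $C \in {\tt S}_v^r(\alpha)$ are pairwise disjoint (since each MEC is empty, a touching point of one cannot lie in the interior of another). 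Hence $|{\tt S}_v^r(\alpha)| \leq 4$, and integration gives
\[
|{\tt S}_v^r| \cdot \pi r^2 \;\leq\; \int_{\chi} |{\tt S}_v^r(\alpha)|\, dA \;\leq\; 4\pi (3r)^2,
\]
so $|{\tt S}_v^r| \leq 36$.

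The main obstacle is the chord-separation claim when $c$ is a Voronoi vertex incident to three or more points of $P$: in that case the pair $(p_1, p_2)$ is not canonical, and a careless choice may fail to satisfy the opposite-side property. I would resolve this by choosing $(p_1, p_2)$ to be the pair of points whose perpendicular bisector is the Voronoi edge through which the unique $v$-to-$c$ path enters $c$ (this edge is singled out by the Unique Path Lemma). With this choice the sliding-to-grow contradiction still applies, and the remainder of the argument transfers from the polygon case without change.
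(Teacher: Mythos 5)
Your proposal follows essentially the same route as the paper: confine all of ${\tt S}_v^r$ to the disk $\chi$ of radius $\rho_v+2r$, show that the chord $p_1p_2$ separates $c$ from the unique $v$-to-$c$ path (via the ``slide toward $v$ to grow, contradicting condition~3'' argument), and then reuse the inscribed-angle and area-packing computation of Lemma~\ref{lem:BoundingS} to get the bound of $36$. Your explicit treatment of centers at Voronoi vertices of degree $\geq 3$ is a small refinement (the paper instead notes that, under the distinct-radii assumption, at most one member of ${\tt S}_v^r$ can be vertex-centered), but the argument is otherwise the same.
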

\begin{proof}
The key ideas required to prove this lemma have already been discussed in the 
context of 
Lemma~\ref{lem:BoundingS}. Therefore, we limit ourselves 
to making a few  important observations that establish a correspondence between the 
current context (where $P$ is a set of points) to the context of 
Lemma~\ref{lem:BoundingS} (where $P$ is a simple polygon). 

Firstly, observe that all circles in $S_v^r$ must lie in a circle $\chi$ of radius $\rho_v + 2r
$ 
centered at $v$; here $\rho_v$ is the radius of  ${\tt MEC}_{v}$ 
and $r$ is the radius of circles in $S_v^r$ (see Figure \ref{fig:BoundingS_r}).

To make the second observation, consider a circle $C \in S_v^r$ centered at a point 
$c$ 
strictly in the interior of  an edge $e = (v_1, v_2)$.
Without loss of generality,  assume ${\tt MEC}_{v_1}$ is smaller than ${\tt MEC}
_{v_2}$. 
Therefore, $C$ will be no smaller  than ${\tt MEC}_{v_1}$ and 
no larger  than ${\tt MEC}_{v_2}$. Our second observation is that the {\tt MEC}s  
centered on 
$e$ are growing in size in the vicinity of $c$ as we move in 
the direction from $v_1$ to $v_2$. 

To make the third observation, note first that the unique path (as defined in the 
Unique 
Path Lemma) from $v$ to $c$ passes through $v_1$ and not 
through $v_2$. Let $p_1$ and 
$p_2$ be the two points in $P$ that touch $C$. The third observation is that  the 
chord 
$p_1p_2$ intersects the unique path (as defined in the Unique Path Lemma) between 
$v$ 
and $c$ (see Figure
\ref{fig:BoundingS_r} for a similar situation in the context where $P$ is a simple 
polygon).  

The rest of the proof follows from the proof of Lemma~\ref{lem:BoundingS}. \qed
\end{proof}

\begin{lemma}\label{lem:qic}
Algorithms\ \ref{alg:qicprep} and \ref{alg:qicquery} correctly implement the {\tt QiC} data structure.
The preprocessing time and space of the {\tt QiC} attached to node $t$ in {\tt T} is bounded by $O(n_t^{3/2} \log n_t)$ and $O(n_t^{3/2})$, respectively, where $n_t$ is the number of vertices in the plane graph $G_t$ attached to $t$. Queries can be answered correctly in $O(\log n_t)$ time.
\end{lemma}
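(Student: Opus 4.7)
My plan is to establish correctness of the {\tt QiC} procedure first, and then derive the preprocessing and query bounds by charging work per separator vertex. For correctness, fix any $v \in W_t$ with ${\tt MEC}_v$ enclosing $q$; such a $v$ exists by the first promise and is returned by the {\tt PLiCA} query in line~1 of Algorithm~\ref{alg:qicquery}. Let $c_q$ denote the center of $C_q$, which lies on some edge of $G_t$ by the second promise. Applying the Unique Path Lemma to $v$ and $c_q$ produces a unique path $\Pi$ on which every {\tt MEC} contains ${\tt MEC}_v \cap C_q \ni q$. I would then adapt the argument of Lemma~\ref{lem:correctness}: letting $C^{*}$ be the last guiding circle of $v$ encountered on $\Pi$ as we walk from $v$ toward $c_q$, I want to show (i) its radius equals $r_{\max}$, so $C^{*} \in \mathcal{C}$, and (ii) the portion of $\Pi$ strictly between $C^{*}$ and $c_q$ carries strictly growing {\tt MEC}s. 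Claim~(ii) forces $c_q$ to lie on the unique red edge incident on the center of $C^{*}$, so the scan over $\mathcal{C}$ in lines~4--10 of Algorithm~\ref{alg:qicquery} recovers $C_q$.

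For the preprocessing cost I would charge work per $v \in W_t$. A single BFS on $G_t$ starting at $v$ sweeps the unique paths emanating from ${\tt MEC}_v$ and identifies every guiding circle in $O(n_t)$ time. By Lemma~\ref{lem:BoundingS2}, $|S_v^r| = O(1)$ for each $r \in {\tt R}$; since $|{\tt R}| = O(n_t)$ this gives $|S_v| = O(n_t)$, so sorting $S_v$ by radius costs $O(n_t \log n_t)$ time. Constructing $\Lambda_v$ and marking its red edges takes another $O(n_t)$ by local inspection at each node. Summing over the $|W_t| = O(\sqrt{n_t})$ separator vertices supplied by Theorem~\ref{Planar-Separator} delivers the claimed $O(n_t^{3/2} \log n_t)$ preprocessing time and $O(n_t^{3/2})$ space. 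For the query, the {\tt PLiCA} call costs $O(\log \sqrt{n_t})$, the binary search on the sorted array $S_v$ costs $O(\log n_t)$, and the red-edge scan over $\mathcal{C}$ does $O(1)$ work because $|\mathcal{C}| \le |S_v^{r_{\max}}| = O(1)$ by Lemma~\ref{lem:BoundingS2}; the total is $O(\log n_t)$.

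The main obstacle will be claim~(ii) in the correctness argument. I plan to settle it by combining the monotonicity of {\tt MEC} radius along a single Voronoi edge with the fact that ${\tt R}$ enumerates every vertex-centred radius in $G_t$: any local peak of the radius function on the portion of $\Pi$ past $C^{*}$ must occur at a Voronoi vertex, the peak radius therefore belongs to ${\tt R}$, and the peak satisfies all three conditions used to define a guiding circle relative to $v$, so it would itself be a guiding circle past $C^{*}$, contradicting the choice of $C^{*}$. A secondary subtlety I will need to address is that the unique path from $v$ to $c_q$ need not \emph{a priori} stay inside $G_t$; I would handle this in the spirit of Lemma~\ref{lz}, arguing that if $\Pi$ exited $G_t$ it would pass through a separator vertex attached to a strict ancestor of $t$ in {\tt T}, which would have made the traversal of {\tt T} halt earlier and would contradict that we ever reached $t$.
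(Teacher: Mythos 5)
Your proposal is correct and follows essentially the same route as the paper's own proof: charge $O(n_t\log n_t)$ preprocessing per separator vertex and multiply by $|W_t|=O(\sqrt{n_t})$, and establish correctness by transplanting the argument of Lemma~\ref{lem:correctness} with the Unique Path Lemma playing the role that the tree structure of the medial axis played in the polygon case. In fact the paper's proof is far terser than yours --- it dismisses correctness with ``similar to the polygon case'' --- so the details you work out (that no guiding circle, and hence no Voronoi vertex, can lie strictly between the last guiding circle and $c_q$ on $\Pi$, and that the traversal of {\tt T} guarantees $\Pi$ stays within $G_t$) are legitimate gaps in the paper's exposition that your argument correctly fills.
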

\begin{proof}
Let $G_t$ be the subgraph attached to a node $t \in 
\tt T$, and $W_t$ be the separator vertices of $G_t$. Recall that $|W_t| = O(
\sqrt{|n_t|})$, where $n_t = |G_t|$. In the {\tt QiC} data structure 
for the node $t$, $|S_v|$ can be $O(n_t)$ for each node in $v \in W_t$, 
and it can be computed in $O(n_t\log n_t)$ time. Thus, the time required 
to create the {\tt QiC} data structure for all the nodes in $W_t$ is 
$O(n_t^{3/2}\log n_t)$. The space requirement is $O(n_t^{3/2})$. 

The correctness argument is similar to the polygon case. In the polygon case, a single path between any two points on the medial axis followed immediately from the fact that the medial axis was a tree. In the current context, the Unique Path Lemma provides a similar unique path. 
\qed
\end{proof}

\subsection{Complexity}\label{cc}

Lemma \ref{lem:qmeccorrect} justifies that our proposed algorithm correctly computes 
the 
largest {\tt MEC} containing the query point $q$ among the points in $P$. The 
following 
lemma establishes the  complexity.

\begin{lemma} \label{xxz}
The preprocessing time and space complexities of the {\tt QMEC} problem are  
$O(n^{\frac{3}{2}} \log^2 n)$ and  $O(n^{\frac{3}{2}} \log n)$,
respectively. The query can be answered in $O(\log^2n)$ time.
\end{lemma}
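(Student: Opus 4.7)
\textbf{Proof proposal for Lemma \ref{xxz}.}
The plan is to analyze Algorithm \ref{alg:QMECPrep} and Algorithm \ref{alg:QMECQuery} level-by-level on the separator decomposition tree ${\tt T}$, then aggregate. First, I would observe that because the planar separator guarantees $|A|,|B| \le 2n/3$ in Theorem \ref{Planar-Separator}, the tree ${\tt T}$ has depth $O(\log n)$. For a node $t \in {\tt T}$ attached to a subgraph $G_t$ on $n_t$ vertices, the per-node preprocessing cost consists of: (i) computing the separator in $O(n_t)$ time, (ii) building $\Phi$ on the $O(\sqrt{n_t})$ separator {\tt MEC}s and $\Theta$ on the $O(n_t)$ {\tt MEC}s of $A_t$ via {\tt PLiCA}, costing $O(n_t \log n_t)$ time and $O(n_t)$ space, and (iii) building the ${\tt QiC}$ data structure, which by Lemma \ref{lem:qic} dominates everything else at $O(n_t^{3/2} \log n_t)$ time and $O(n_t^{3/2})$ space.

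Next I would sum the per-node costs level by level in ${\tt T}$. Since the subgraphs at any fixed level of ${\tt T}$ are vertex-disjoint subsets of the original $G$, the sizes $n_{t_1}, n_{t_2}, \ldots$ of the subgraphs on that level satisfy $\sum_i n_{t_i} \le n$. By convexity of $x \mapsto x^{3/2}$ on the simplex $\{x \ge 0 : \sum x_i \le n,\ x_i \le n\}$, we have $\sum_i n_{t_i}^{3/2} \le n^{3/2}$, so the total ${\tt QiC}$ cost at any single level is $O(n^{3/2} \log n)$ in time and $O(n^{3/2})$ in space. Multiplying by the $O(\log n)$ depth yields the claimed preprocessing bounds $O(n^{3/2} \log^2 n)$ in time and $O(n^{3/2} \log n)$ in space.

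For the query time, I would trace Algorithm \ref{alg:QMECQuery}. Starting at the root $r$, we descend ${\tt T}$; at each visited node $t$ we perform two {\tt PLiCA} point-location tests (in $\Phi_t$ and in $\Theta_t$), each costing $O(\log n_t) = O(\log n)$ time. By Lemma \ref{lz}, this descent stops at some node $t^*$, after which we answer a single ${\tt QiC}_{t^*}$ query, taking $O(\log n_{t^*}) = O(\log n)$ by Lemma \ref{lem:qic}. Since ${\tt T}$ has depth $O(\log n)$, the total query time is $O(\log^2 n)$.

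The main obstacle is really the level-sum argument: one must be careful that the subgraphs appearing at a level of ${\tt T}$ form a partition of (a subset of) the vertices of $G$, so that the concave-over-simplex bound $\sum n_{t_i}^{3/2} \le n^{3/2}$ applies, and that every level of ${\tt T}$ is charged only once. Once this is verified, combining with Lemma \ref{lem:qmeccorrect} for correctness completes the proof. \qed
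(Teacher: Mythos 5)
Your proposal is correct and follows essentially the same route as the paper: a level-by-level accounting over the $O(\log n)$-depth separator decomposition tree, using the vertex-disjointness of subgraphs at each level (so that $\sum_i n_{t_i}^{3/2} \le n^{3/2}$) to bound the per-level {\tt QiC} cost, and a query analysis charging $O(\log n)$ per {\tt PLiCA} test along the descent plus one final {\tt QiC} query. Your explicit superadditivity/convexity justification of the level-sum is a welcome sharpening of a step the paper leaves implicit, but the argument is otherwise the same.
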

\begin{proof}
The preprocessing consists of the following steps:
\begin{itemize}
\item[$\bullet$] Constructing the tree $\tt T$. At each node $t$ of 
$\tt T$, we need to compute the separator vertices among the set of 
vertices of the Voronoi subgraph $G_t$ corresponding to the node $t$. 
The time complexity for this computation is $O(|V_t|)$, where $|V_t|$ 
denotes the number of vertices in $G_t$. Since the total number of 
vertices at each level of $\tt T$ is $O(n)$, the total time spent for 
computing the separator vertices at all nodes in each level of $\tt T$ 
is $O(n)$.  Since the height of {\tt T} is at most $O(\log n)$,  
the total time for constructing it is $O(n\log n)$.
\item[$\bullet$]  For each node $t$ in {\tt T}, we need to construct the {\tt QiC} data structure. The subgraphs associated with each node in any particular level of {\tt T} are disjoint. Therefore, as a consequence of Lemma\ \ref{lem:qic}, the preprocessing time and space required for {\tt QiC} data structures associated with nodes in any particular level is $O(n^{3/2} \log n)$ and $O(n^{3/2})$, respectively. Since $\tt T$ can have at most 
$O(\log n)$ levels, the preprocessing time and space complexities follow .
\item[$\bullet$] For each node $t$, we also will need to compute two  {\tt PLiCA} data structures $\Phi$ and $\Theta$, but the time and space complexities of the {\tt QiC} data structures dominate the complexities of computing the {\tt PLiCA} data structures.
\end{itemize}

While querying with a point $q$,  searching in the {\tt PLiCA} data 
structures $\Phi$ for each node  in the search path of ${\tt T}$  will take $O(\log n)$ time. We  have 
to traverse a path of length at most $O(\log n)$ to get to a node $t$ in {\tt T} such that  there is a vertex $v$ of 
$G_t$ such that $q\in {\tt MEC}_v$. Thus, traversing $\tt T$ needs 
$O(\log^2n)$ time. Finally searching in $S_v$ and $\Lambda_v$ to get $C_q$ needs another 
$O(\log n)$ time (see Theorem \ref{th:th}). 
\qed
\end{proof}

\subsection{Improving the query time}
We now show that a minor tailoring of the data structure reduces 
the query time to\\
 $O(\log n \log\log n)$, while maintaining the same preprocessing 
time and space.

\subsubsection{Data structure.}
After computing the planar separator tree $\tt T$, each {\tt MEC} $C$ centered on a vertex in $G$ is 
attached with 
\begin{itemize}
\item an $\tt id$, which is the {\it level} of $\tt T$ in which $C$ 
belongs as a separator {\tt MEC}, and
\item a pointer to the node $t$ in {\tt T} such that $C$ belongs to the separator 
vertices of 
$G_t$.
\end{itemize}
 Next, we create an array $\Gamma$ of $O(\log n)$ data structures as follows. Each 
$
\Gamma_i$  is a {\tt PLiCA} data structure constructed 
with the set of {\tt MEC}s with $\tt id$ ranging from 
$1$ to $i$, i.e., root to the level $i$.   

\subsubsection{Query.}
While querying with a point $q$, we conduct a binary search on the array $\Gamma$ 
of 
data structures to find $\Gamma_i$ such that there is an {\tt 
MEC} $C^*$ in $\Gamma_i$ that contains $q$, but no {\tt MEC} in $\Gamma_{i-1}$ 
that 
contains $q$. Let $t^*$ the node in {\tt T} where  the center of $C^*$ is a separator vertex. We now perform a {\tt QiC} query on $t^*$ and 
report the result of that {\tt QiC} query as the required {\tt MEC} $C_q$.
\begin{theorem}
The improvement described in this subsection is correct and its preprocessing time 
and 
space complexities for the {\tt QMEC} problem are  
$O(n^{3/2} \log^2 n)$ and  $O(n^{3/2} \log n)$ 
respectively. Each query can be answered in $O(\log n\log\log n)$ time.
\end{theorem}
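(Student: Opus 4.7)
The plan is to verify three things: correctness of the binary-search query, that the additional array $\Gamma$ does not inflate the preprocessing bounds of Algorithm~\ref{alg:QMECPrep} established in Lemma~\ref{xxz}, and that the query cost is $O(\log n \log\log n)$. Only correctness requires a genuinely new argument; the complexities are direct accounting.

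For correctness, first observe that $\Gamma_1 \subseteq \Gamma_2 \subseteq \cdots$, so the predicate ``some {\tt MEC} in $\Gamma_i$ contains $q$'' is monotone in $i$, making the binary search well defined. Let $i$ be the smallest index returned and let $t^*$ be the node in ${\tt T}$ associated with the identified separator {\tt MEC} $C^*$, with center $v^*$. To show that the ${\tt QiC}$ query at $t^*$ returns $C_q$, I would verify the two promises required by ${\tt QiC}_{t^*}$. The first promise, that $q$ lies in a separator {\tt MEC} of $t^*$, holds by construction. The second promise, that $C_q$ is centered on an edge of $G_{t^*}$, I would establish by following the argument inside Lemma~\ref{lz}: if $C_q$ were centered outside $G_{t^*}$, the Unique Path Lemma applied to $C_q$ and ${\tt MEC}_{v^*}$ would yield a boundary vertex $v'$ of $G_{t^*}$ with ${\tt MEC}_{v'}$ containing $q$; but boundary vertices of $G_{t^*}$ are precisely the separator vertices of strict ancestors of $t^*$ in ${\tt T}$, so ${\tt MEC}_{v'} \in \Gamma_{i-1}$, contradicting the minimality of $i$.

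For the preprocessing bounds, Algorithm~\ref{alg:QMECPrep} contributes $O(n^{3/2} \log^2 n)$ time and $O(n^{3/2} \log n)$ space by Lemma~\ref{xxz}. The only new work is constructing $\Gamma$. Since each vertex of $G$ (Voronoi or artificial) is a separator vertex at exactly one level of the decomposition tree ${\tt T}$, the total number of separator {\tt MEC}s across all levels is $O(n)$, and hence every $\Gamma_i$ is a {\tt PLiCA} on at most $O(n)$ {\tt MEC}s. Building the $O(\log n)$ such structures therefore costs $O(n \log^2 n)$ time and $O(n \log n)$ space, both of which are absorbed by the existing bounds.

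For the query time, the binary search runs on an array of $O(\log n)$ entries and so uses $O(\log\log n)$ iterations, each performing one {\tt PLiCA} query in $O(\log n)$ time (cf.\ Subsection~\ref{sec:plica}), for a subtotal of $O(\log n \log\log n)$. A single ${\tt QiC}$ query at $t^*$ then costs an additional $O(\log n)$ by Lemma~\ref{lem:qic}. Summing gives the claimed bound. The main obstacle I anticipate is the correctness step: wiring together the minimality of $i$ with the Unique Path Lemma to guarantee the second promise of ${\tt QiC}_{t^*}$; once that link is in place, the rest of the complexity accounting is routine.
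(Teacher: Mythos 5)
Your proof is correct and follows essentially the same route as the paper's: the same $O(n)$ total count of separator {\tt MEC}s giving $O(n\log^2 n)$ time and $O(n\log n)$ space for $\Gamma$ (subsumed by Lemma~\ref{xxz}), and the same $O(\log\log n)\times O(\log n)$ accounting for the binary search plus one ${\tt QiC}$ query. Your correctness argument is in fact more explicit than the paper's one-sentence justification --- you spell out the monotonicity that makes the binary search valid and re-derive the second ${\tt QiC}$ promise from the Unique Path Lemma and the minimality of $i$, which is exactly the reasoning the paper leaves implicit by deferring to Lemma~\ref{lz}.
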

\begin{proof}
The correctness follows from the fact that there is no {\tt MEC} with id smaller than 
$C^*$ 
that contains $q$, but $C^*$ in fact contains $q$. Therefore, 
a {\tt QiC} on $C^*$ indeed gives us the required {\tt MEC} $C_q$.

Each $\Gamma_i$ requires $O(n \log n)$ time and $O(n)$ space. Therefore, to 
construct 
$\Gamma$, we require $O(n \log^2 n)$ time and $O(n \log n)
$ space, which are subsumed in the bounds established in Lemma~\ref{xxz} to 
construct $
\tt T$.

In the query phase, each {\tt PLiCA} query on any element of $\Gamma$ requires 
$O(\log 
n)$ time and the binary search over all elements of $
\Gamma$ requires $O(\log \log n)$ such {\tt PLiCA} queries, thereby requiring $O(\log 
n 
\log \log n)$ time overall. The {\tt QiC} query requires an 
additional $O(\log n)$ time, which is subsumed. 
\end{proof}

\subsection{Achieving $O(\log n)$ Query Time}\label{sec:fastquery}
Here, we shall use Frederickson's $r$-partitioning of planar graphs, stated below, 
to improve the query time complexity to $O(\log n)$. Furthermore, this algorithm is 
simpler in that it does not require us to construct a divide and conquer tree.

\begin{lemma} \label{fedrick}\cite{fed} Given a planar graph $G$ with $n$ vertices 
with a planar embedding and a parameter $r$ ($1 \leq r \leq n$),
\begin{itemize}
\item[(a)]  $G$ can be partitioned into $\Theta(\frac{n}{r})$  parts with at most $O(r)$ 
vertices in each part, and a total of $O(\frac{n}{\sqrt{r}})$ boundary vertices over 
all the 
partitions. 
\item[(b)] This partitioning can be computed in $O(n\log n)$ time. 
\end{itemize}
\end{lemma}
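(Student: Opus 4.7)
The plan is to prove the lemma by recursively applying the Lipton--Tarjan planar separator theorem (Theorem \ref{Planar-Separator}) and stopping the recursion once subgraphs become small enough. First, I would apply the planar separator theorem to $G$ to get a vertex separator $W_0$ of size $O(\sqrt{n})$ that splits $G$ into two subgraphs, each with at most $2n/3$ vertices. I would then recurse on the two subgraphs, treating them as planar subgraphs and marking the boundary vertices (i.e.\ the separator vertices so far) as the ``boundary'' of their containing region. The recursion would terminate when the current subgraph has at most $r$ vertices, at which point it becomes a piece of the $r$-partition.

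For the bound on total boundary vertices, the key step is an inductive potential argument on the recurrence
\[
T(n) \le T(n_1) + T(n_2) + c\sqrt{n}, \qquad T(n) = 0 \text{ for } n \le r,
\]
where $n_1 + n_2 \le n$ and $n_1, n_2 \le \tfrac{2}{3} n$, with additionally $n_1, n_2 \ge \tfrac{1}{3} n$ (both parts nontrivial when the separator is nontrivial). I would prove by induction that $T(n) \le c_1 n / \sqrt{r} - c_2 \sqrt{n}$ for suitable constants $c_1, c_2$; the slack term $-c_2 \sqrt{n}$ is what pays for the additive $c \sqrt{n}$ at each level, exploiting $\sqrt{n_1} + \sqrt{n_2} \ge 2\sqrt{n/3} > \sqrt{n}$ by concavity, together with the balanced-split guarantee. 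The base case $n \le r$ is immediate since $T(n) = 0 \le c_1 n/\sqrt{r} - c_2 \sqrt{n}$ once $c_1$ is large enough relative to $c_2$. This gives the $O(n/\sqrt{r})$ bound on the total number of boundary vertices across all pieces.

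To ensure $\Theta(n/r)$ parts of size $O(r)$, I would modify the stopping rule slightly: rather than halting as soon as the component is below $r$, I would halt when the component size falls in a constant-factor band around $r$, and perform a bottom-up merging pass that glues together pieces that are much smaller than $r$ (say, of size below $r/6$) with adjacent pieces, being careful not to violate the $O(r)$ upper bound. This standard post-processing affects the boundary count only by constant factors.

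Finally, for the time bound, each level of the recursion applies the linear-time planar separator algorithm to a collection of subgraphs whose total size is at most $n$, giving $O(n)$ work per level. Since the sizes shrink geometrically by a factor of $2/3$, the number of levels is $O(\log(n/r)) = O(\log n)$, and the total running time is $O(n \log n)$. The main obstacle in the whole argument is the potential-function induction: the challenge is selecting the potential $c_1 n/\sqrt{r} - c_2 \sqrt{n}$ so that the geometric slack from the balanced separator exactly absorbs the additive $\sqrt{n}$ cost; everything else (the recursion, the merging, and the time analysis) is then bookkeeping on top of that inequality.
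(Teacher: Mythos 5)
The paper does not prove this lemma at all: it is imported verbatim from Frederickson \cite{fed} as a black box, so there is no in-paper argument to compare against. Your reconstruction is, in essence, Frederickson's own first-phase construction --- recursive application of the Lipton--Tarjan separator down to pieces of size at most $r$, with the boundary count controlled by the potential-function induction $T(n)\le c_1 n/\sqrt{r}-c_2\sqrt{n}$ --- and the core inequality you identify (that $\sqrt{n_1}+\sqrt{n_2}\ge (2/\sqrt{3})\sqrt{n}$ for a balanced split absorbs the additive $c\sqrt{n}$) is exactly the right engine. Two details deserve more care than your sketch gives them. First, your base case $0\le c_1 n/\sqrt{r}-c_2\sqrt{n}$ for $n\le r$ only holds for fixed constants because the leaves of the recursion have size $\Omega(r)$; that in turn relies on both sides of the separator having size at least $n/3-O(\sqrt{n})$, which follows from $|A|+|B|\ge n-O(\sqrt{n})$ and $|A|,|B|\le 2n/3$ but is not literally part of the separator theorem's statement, so it should be derived explicitly (and it also quietly justifies the $\Theta(n/r)$ piece count without your merging pass). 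Second, the lemma's phrase ``total of $O(n/\sqrt{r})$ boundary vertices over all the partitions'' is a sum over pieces, i.e.\ a count with multiplicity: a separator vertex created high in the recursion can end up on the boundary of many pieces, whereas your recurrence only charges each separator vertex once, at the moment it is created. Handling this requires either passing the inherited boundary vertices into the subproblem sizes (so the recurrence runs on $n_i$ plus the inherited boundary) or Frederickson's second refinement phase; as written, your recurrence bounds only the number of distinct boundary vertices. Neither issue changes the approach, but the multiplicity point is a genuine gap in the argument as stated rather than mere bookkeeping.
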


We compute the $r$-partitioning of the graph $G$ with $r$ set to $n^{2/3}$. Now, we construct two data structures, $\Upsilon$ and $\Psi$, as stated 
below.
\begin{itemize}
\item[$\Upsilon$:] We 
construct a {\tt PLiCA} data structure and a {\tt QiC} data structure  over all {\tt MEC}s centered on boundary vertices in the $r$-partitioning. 
\item[$\Psi$:] It consists of a {\tt PLiCA} data structures with the set of {\tt MEC}s 
that correspond to the internal vertices of all the partitions.
Furthermore, for each partition $j$, we construct a {\tt QiC} data structure limited to partition $j$ on the {\tt MEC}s centered on the internal vertices of partition $j$.
\end{itemize}
For a given query point $q$, we first search in $\Upsilon$ to check whether there exists 
a boundary {\tt MEC} that contains $q$. Here two cases may arise:
\begin{description}
\item[Case 1:]  If an {\tt MEC} $C \in \Upsilon$ encloses $q$, then we search in the {\tt QiC} 
data structure attached with $\Upsilon$ to identify the largest {\tt MEC} containing $q$.
\item[Case 2:] Otherwise, we search in $\Psi$ to identify an {\tt MEC} $C'$ that 
contains $q$. We also find the partition $j$ on which $C'$ is centered. We  then search in the {\tt QiC} data structure attached with $j'$ to 
identify the largest {\tt MEC} containing $q$.
\end{description}
\begin{lemma} 
The above algorithm correctly identified the largest {\tt MEC} containing $q$. The 
preprocessing time, space and query time complexities of this algorithm are 
$O(n^{5/3} 
\log n)$, $O(n^{5/3})$ and $O(\log n)$, respectively. 
\end{lemma}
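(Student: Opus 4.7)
The plan is to prove correctness and the three complexity bounds separately, leaning on the Unique Path Lemma and on Lemma~\ref{lem:qic}. Frederickson's partitioning with $r=n^{2/3}$ (Lemma~\ref{fedrick}) yields $\Theta(n^{1/3})$ parts of size $O(n^{2/3})$ each, with $O(n^{2/3})$ boundary vertices in total; the data structures $\Upsilon$ and $\Psi$ respectively handle queries whose optimal {\tt MEC} is ``reached'' via a boundary vertex versus those confined to a single part.

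For correctness, I would case-split along the algorithm. In Case~1 a boundary {\tt MEC} encloses $q$, so the two promises of {\tt QiC} for point sets are met for $\Upsilon$ (a boundary-vertex {\tt MEC} contains $q$, and $C_q$ lies on some edge of $G$); Lemma~\ref{lem:qic} then returns $C_q$. In Case~2 no boundary {\tt MEC} contains $q$, and I would argue via the Unique Path Lemma that the set $\{u \in G : q \in {\tt MEC}_u\}$ lies entirely in a single part $j$: given any two such vertices $u, u'$, every {\tt MEC} on the unique Voronoi path between them also contains $q$, and any path crossing between two parts must pass through a boundary vertex, whose {\tt MEC} would then contain $q$ --- a contradiction. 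In particular $C_q$ is centered in part $j$, so the partition-restricted {\tt QiC} (whose host graph is $j$ and whose ``separators'' are all internal vertices of $j$) satisfies the promises and returns $C_q$.

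For the complexities, I would plug the $r$-partition sizes into the per-separator accounting from the proof of Lemma~\ref{lem:qic}: each separator $v$ costs $O(|\mathrm{host}|\log|\mathrm{host}|)$ time and $O(|\mathrm{host}|)$ space for its guiding circles. Hence the {\tt QiC} in $\Upsilon$ costs $O(n^{2/3}\cdot n\log n) = O(n^{5/3}\log n)$ time and $O(n^{5/3})$ space, while each partition's {\tt QiC} in $\Psi$ costs $O(n^{2/3}\cdot n^{2/3}\log n) = O(n^{4/3}\log n)$ time, summing to $O(n^{5/3}\log n)$ time and $O(n^{5/3})$ space over the $\Theta(n^{1/3})$ parts. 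The {\tt PLiCA} costs are dominated. For a query, the {\tt PLiCA} on $\Upsilon$ takes $O(\log n)$ and then either the {\tt QiC} in $\Upsilon$ ($O(\log n)$) or the {\tt PLiCA} on $\Psi$ followed by the partition's {\tt QiC} ($O(\log n^{2/3}) = O(\log n)$) completes the query in $O(\log n)$ total.

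The main obstacle is the Case~2 correctness argument --- it blends the (global, Voronoi-geometric) Unique Path Lemma with the (graph-theoretic) separation property of $r$-partitions, and one must be careful that ``unique path'' is interpreted in the full Voronoi diagram $G$ and not in any partition subgraph, otherwise the collapse into one part does not follow. Everything else reduces to routine bookkeeping against Lemma~\ref{lem:qic} and Lemma~\ref{fedrick}.
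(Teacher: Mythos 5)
Your proposal is correct and follows essentially the same route as the paper: the same case split on whether a boundary-vertex {\tt MEC} contains $q$, and the same accounting ($O(n^2/\sqrt{r})$ space for $\Upsilon$, $O(nr)$ for $\Psi$, balanced at $r=n^{2/3}$). Your Case~2 argument --- using the Unique Path Lemma in the full Voronoi diagram to show that all vertices whose {\tt MEC}s contain $q$ lie in a single part, since any inter-part path crosses a boundary vertex --- is in fact spelled out more carefully than in the paper, which merely asserts that the optimal {\tt MEC} must be centered in the $j$-th partition.
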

\begin{proof}
The correctness of the algorithms follows from the following argument. During the 
query, 
if Case (i) arises the algorithm produces the correct result since the {\tt QiC} data 
structure attached to $\Upsilon$ is built on the {\tt MEC}s corresponding to all 
the vertices in $G$.  If Case (ii) arises, and $q$ lies in an {\tt MEC} $C'$ of the $j$-th 
partition, then it implies that $q$ lies in some {\tt MEC} in the proper interior of the 
$j$-th partition. Thus, the largest {\tt MEC} containing $q$ is surely an {\tt MEC} $C^*$ 
of the $j$-th partition. Thus the {\tt QiC} of $C'$ constructed with the {\tt 
MEC}s in the $j$-th partition only is sufficient to obtain $C_q$.      

Now, we justify the complexity results of the algorithm. The total size of the {\tt QiC} 
data structures in $\Upsilon$ is $O(\frac{n^2}{\sqrt{r}})$, and these are constructed in 
$O(\frac{n^2}{\sqrt{r}}\log n)$ time. The total size of the {\tt QiC} data structures for 
all the {\tt MEC}s in the $j$-th partition of $\Psi$ is $O(r^2)$, and these are 
constructed 
in $O(r^2\log r)$ time. Since, we have at most $O(\frac{n}{r})$ partitions, the total 
space 
and time required to construct $\Psi$ is $O(nr)$ and $(nr\log r)$, respectively. Thus, 
the 
total preprocessing space and time complexities are $O(\frac{n^2}{\sqrt{r}}+nr)$ and 
$O((\frac{n^2}{\sqrt{r}}+nr)\log n)$, respectively. Choosing $r=O(n^{2/3})$, the 
preprocessing 
time and space complexity results follow.

The query time complexity follows from the fact that the search in the {\tt PLiCA} of 
both 
$\Upsilon$ and $\Psi$ take $O(\log n)$ time, and the search in the {\tt QiC} data structure 
of 
exactly one {\tt MEC} needs another $O(\log n)$ time in the worst case. 
\qed
\end{proof}
 
\noindent {\bf Acknowledgments.} We are grateful to Samir Datta and
Vijay Natarajan for their helpful suggestions and ideas.
We are also thankful to Subir Ghosh for providing the environment to
carry out this work.  
Finally, we are grateful to the anonymous referees for providing insightful comments and suggestions.

\end{document}